\theoremstyle{definition}
\newtheorem{theorem}{Theorem}[section]
\newtheorem{lemma}{Lemma}[section]
\newtheorem{proposition}{Proposition}[section]
\newtheorem{remark}{Remark}[section]
\newtheorem*{mth}{Main Theorem}
\DeclareMathAlphabet{\mathsfsl}{OT1}{cmss}{m}{sl}
\numberwithin{equation}{section}
\newcommand{\D}{\mathrm{d}}
\newcommand{\tr}{\mathrm{tr}}
\def\alphab{\underline{\alpha}}
\def\betab{\underline{\beta}}
\def\chib{\underline{\chi}}
\def\chibh{\widehat{\underline{\chi}}}
\def\chih{\widehat{\chi}}
\def\etab{\underline{\eta}}
\def\Lb{\underline{L}}
\def\tr{\mathrm{tr}}
\def\omegab{\underline{\omega}}
\def\tensor{\widehat{\otimes}}
\def\ub{\underline{u}}
\def\Cb{\underline{C}}
\def\Lh{\widehat{L}}
\def\Lbh{\widehat{\underline{L}}}
\newcommand{\Db}{\underline{D}}
\newcommand{\Dh}{\widehat{D}}
\newcommand{\Dbh}{\widehat{\underline{D}}}
\def\nablas{\mbox{$\nabla \mkern -13mu /$ }}
\def\Deltas{\mbox{$\Delta \mkern -13mu /$ }}
\def\divs{\mbox{$\mathrm{div} \mkern -13mu /$ }}
\def\curls{\mbox{$\mathrm{curl} \mkern -13mu /$ }}
\def\ds{\mbox{$\nabla \mkern -13mu /$ }}
\def\gs{\mbox{$g \mkern -9mu /$}}
\def\epsilons{\mbox{$\epsilon \mkern -9mu /$}}
\begin{document}

\title{A construction of collapsing spacetimes in vacuum}

\author[Junbin Li]{Junbin Li}
\address{Department of Mathematics, Sun Yat-sen University, Guangzhou, China}
\email{lijunbin@mail.sysu.edu.cn}

\author[He Mei]{He Mei}
\address{Department of Mathematics, Sun Yat-sen University, Guangzhou, China}

\email{meihe@mail2.sysu.edu.cn}

\begin{abstract}
In this paper, we construct a class of collapsing spacetimes in vacuum without any symmetries. The spacetime contains a black hole region which is bounded from the past by the future event horizon. It possesses a Cauchy hypersurface with trivial topology which is located outside the black hole region. Based on existing techniques in the literature, the spacetime can in principle be constructed to be past geodesically complete and asymptotic to Minkowski space. The construction is based on a semi-global existence result of the vacuum Einstein equations built on a modified version of the a priori estimates that were originally established by Christodoulou in his work on the formation of trapped surface, and a gluing construction carried out inside the black hole. In particular, the full detail of the a priori estimates needed for the existence is provided, which can be regarded as a simplification of Christodoulou's original argument.
\end{abstract}

\maketitle

\tableofcontents

\setcounter{tocdepth}{1}

%\parskip=\baselineskip
%\tableofcontents

\section{Introduction}

\subsection{Previous works}
The problem of gravitational collapse is formulated in terms of the initial value problem of the Einstein equations
$$\mathbf{Ric}_{\alpha\beta}-\frac{1}{2}\mathbf{R}g_{\alpha\beta}=8\pi \mathbf{T}_{\alpha\beta}$$
which are the fundamental system of equations in general relativity. A spacetime of gravitational collapse has a complete regular past and black holes in the future. In general its Penrose diagram can be depicted as in Figure \ref{fig:collapse}, where $\mathcal{I}^{\pm}$ are the future and past null infinity, $i^{\pm}$ are the future and past timelike infinity, $i^0$ is the spatial infinity and $\mathcal{H}^+$ is the future event horizon. The grey region is called the domain of outer communication, the region outside the black hole. We say a spacetime has a black hole if it has a \emph{complete future null infinity} and a future event horizon, that is, the whole grey region can be identified in the spacetime.\footnote{The story inside the black holes is however another interesting topic and readers can refer to a recent work \cite{D-L} and the references therein.} To understand the spacetimes of gravitational collapse, the first step would be to provide certain examples of such spacetimes, which is still rather difficult in general relativity. The first example is due to Oppenheimer-Snyder \cite{O-S}, who considered the spherically symmetric homogeneous dust model and the infinite redshift effect was discovered for the first time, even though the concepts of black hole and event horizon were not introduced explicitly.

\begin{figure}[htbp]
\centering
\includegraphics [width=2 in]{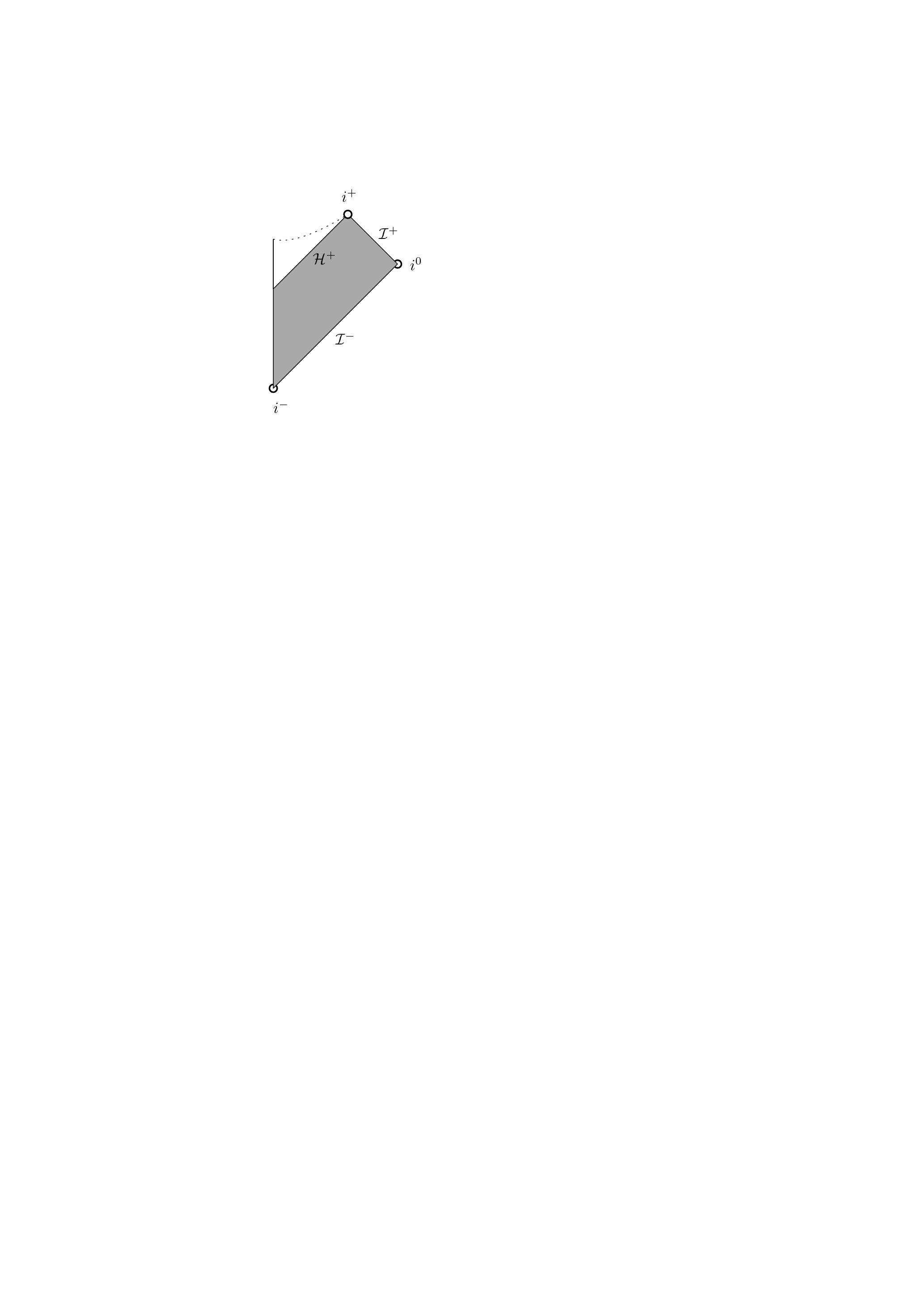}  
\caption{}
 \label{fig:collapse} 
\end{figure}

A well-understood model is the spherically symmetric massless scalar field model, where 
$$\mathbf{T}_{\alpha\beta}=\partial_\alpha\phi\partial_\beta\phi-\frac{1}{2}g_{\alpha\beta}\partial_\mu\phi\partial^\mu\phi$$
for a real scalar function $\phi$. It had been extensively studied by Christodoulou in a series of papers. He considered the characteristic initial value problem where the initial data is prescribed on a null cone emanating from a point,  and showed in \cite{Chr91} that under a sharp lower bound condition on the initial data, the spacetime must collapse into a black hole with strictly spacelike singular boundary. Dafermos had constructed in \cite{D09} a family of solutions with complete regular past such that the lower bound condition holds on some outgoing null cone and consequently the spacetimes collapse into a black hole. Therefore we have examples of collapsing spacetimes in this context. In fact, far more than providing examples, Christodoulou had showed in \cite{Chr99} that generic spacetimes with non-complete future (in this context) do have a \emph{complete future null infinity} and hence a future event horizon, verifying the weak cosmic censorship conjecture for spherically symmetric solutions of the Einstein equations coupled with massless scalar field.

When $\mathbf{T}_{\alpha\beta}$ is set to be zero, the system of equations is called the vacuum Einstein equations. It would be natural to ask whether a black hole can form in pure general relativity. Because spherically symmetric vacuum solutions are static Schwarzschild spacetimes and hence admit no dynamical freedoms, we should consider solutions outside spherical symmetry and this makes the problem rather difficult. Another difficulty is that it should be formulated in terms of the large data problem, otherwise a black hole cannot form. It was a longstanding problem and was answered by Christodoulou in his breakthrough \cite{Chr08}. Before stating his theorem, let us first introduce the concept of trapped surface. A trapped surface is a two dimensional spacelike surface with its mean curvatures relative to both future null normals being negative. By Penrose singularity theorem, the existence of a closed trapped surface in a globally hyperbolic spacetime with non-compact Cauchy hypersurface implies future null incompleteness of the spacetime. Moreover, if the future null infinity $\mathcal{I}^+$ is well-defined, then signals from the closed trapped surface cannot be sent to $\mathcal{I}^+$. Thus if there is a closed trapped surface, then it is contained in a black hole region, which in particular nonempty, and hence an event horizon exists if the future null infinity is complete.

The existence of black hole and event horizon is difficult to detect because it is about the global behavior of the spacetime. It is more  convenient to study the existence of a closed trapped surface. So in local terms, we may consider a spacetime of gravitational collapse to be a spacetime with the property that it has a Cauchy hypersurface $\Sigma$ without any closed trapped surfaces and it contains a closed trapped surface to the future of $\Sigma$. Then the theorem of Christodoulou can be stated in a qualitative way as follows.
\begin{theorem}[Christodoulou, \cite{Chr08}]
There exist characteristic initial data sets on a null cone emanating from a point, without any closed trapped surfaces, such that their maximal future developments of the vacuum Einstein equations have a closed trapped surface.
\end{theorem}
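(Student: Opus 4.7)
The plan is to construct the initial data using Christodoulou's short-pulse ansatz on a null cone from a vertex and then combine a $\delta$-dependent semi-global existence theorem with a focusing argument for the outgoing null expansion. First I would fix a double null foliation near $C_0$, with outgoing parameter $u$ (equal to zero on $C_0$) and incoming parameter $\underline{u}$ along the generators of $C_0$. The free characteristic data on $C_0$ is the conformal class of the induced metric on the cross-sections, equivalent to prescribing the shear $\widehat{\chi}_0$. I would take
\[
\widehat{\chi}_0(\underline{u},\theta)\;=\;\delta^{-1/2}\,f\!\bigl((\underline{u}-\underline{u}_0)/\delta,\theta\bigr),\qquad \underline{u}\in[\underline{u}_0,\underline{u}_0+\delta],
\]
for some smooth profile $f$ supported in $[0,1]$ in its first argument and a fixed $\underline{u}_0>0$, extended smoothly by zero near the vertex so that $C_0$ is Minkowskian there. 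Then $\int|\widehat{\chi}_0|^2\,d\underline{u}$ is of unit size while all lower-order norms of $\widehat{\chi}_0$ are controlled by positive powers of $\delta$; in particular $C_0$ itself contains no closed trapped surface since $\tr\chi_0>0$ on every cross-section, as can be checked by integrating the Raychaudhuri equation along $C_0$.

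Next I would prove a semi-global existence theorem in the slab $\{0\le u\le u_*,\ \underline{u}_0\le\underline{u}\le\underline{u}_0+\delta\}$ for some $u_*$ of order unity, uniform as $\delta\to 0$. This rests on a hierarchy of a priori estimates in which every Ricci coefficient, every curvature component and every angular or null derivative thereof is weighted by a carefully chosen power of $\delta$. The hierarchy is dictated by the short-pulse scaling inherited from the data: components transverse to the pulse (such as $\alpha$) carry weaker $\delta$-weights than the tangential ones, and the assignment must be compatible with (i) the null transport equations for the Ricci coefficients, (ii) the codimension-two Hodge/elliptic systems on the spheres $S_{u,\underline{u}}$, and (iii) the Bianchi identities recast as energy identities via Bel-Robinson--type currents. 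A standard bootstrap then closes the estimates on the whole slab; at this point I would invoke the simplification advertised in the paper, which should consist of propagating a smaller but still sufficient list of flux and pointwise quantities.

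With existence in hand, the trapped surface criterion follows from integrating the Raychaudhuri equation for $\tr\chi$ and invoking the a priori estimates. Up to an error of size $O(\delta^{1/2})$, the expansion $\tr\chi$ on $S_{u_*,\underline{u}_0+\delta}$ equals its Minkowski-like reference value minus a positive geometric multiple of $\int_{\underline{u}_0}^{\underline{u}_0+\delta}|\widehat{\chi}_0|^2(\underline{u},\theta)\,d\underline{u}$; this is the classical short-pulse focusing identity. By choosing the profile $f$ so that this focusing integral pointwise exceeds the threshold dictated by the geometry at every $\theta$, the expansion $\tr\chi$ becomes strictly negative on $S_{u_*,\underline{u}_0+\delta}$, while $\tr\underline\chi$ remains negative by the estimates. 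Hence $S_{u_*,\underline{u}_0+\delta}$ is a closed trapped surface lying strictly to the future of the trivially non-trapped data on $C_0$, and sits inside the maximal development.

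The main obstacle is designing the $\delta$-hierarchy so that the bootstrap genuinely closes. Naive scalings taken from the initial data are not automatically preserved by the nonlinear equations: every product of Ricci coefficients, curvature components and derivatives thereof must be shown to carry a net $\delta$-weight at least as favourable as the quantity being estimated, which requires careful bookkeeping of which components appear together and of how angular derivatives redistribute across each product. Controlling the top-order curvature energies via Bel-Robinson currents, and converting them to $L^\infty$ bounds on the Ricci coefficients through the sphere-elliptic Hodge theory without losing derivatives, is the technical heart of the argument and precisely the part that any simplification should target.
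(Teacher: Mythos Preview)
The paper does not give its own proof of this statement: it is quoted in the introduction as Christodoulou's theorem with a citation to \cite{Chr08}, and no argument is supplied there. What the paper \emph{does} contain, in Section~\ref{aprioriestimates}, is a self-contained derivation of a priori estimates for a short-pulse characteristic problem (Theorem~\ref{existence} and Propositions~\ref{curvatureconnection}--\ref{curvature}), advertised as a simplification of the analytic core of \cite{Chr08}; this is the natural object of comparison with your proposal.

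Your outline is an accurate high-level summary of Christodoulou's original scheme, and the focusing argument via the Raychaudhuri equation is correct. The interesting divergence is in the analytic machinery. You describe the closure of the bootstrap as resting on (ii) codimension-two Hodge/elliptic systems on the spheres and (iii) Bel--Robinson--type energy currents, and you single out the elliptic step as ``the technical heart of the argument and precisely the part that any simplification should target.'' The paper's simplification goes in a different direction: it \emph{eliminates} the elliptic estimates entirely (see footnote~7 and Section~\ref{moreonregionIII}). Instead of Bel--Robinson currents, the curvature estimates are obtained by pairing the null Bianchi equations into first-order symmetric-hyperbolic-type systems $(R,\underline{R})$ and applying the divergence-type Lemma~\ref{divergencetheorem} directly at the level of $H^k$ norms on spheres, commuting only with $\nablas$. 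Top-order control of the connection coefficients then follows from transport equations alone (Proposition~\ref{curvatureconnection}), without ever invoking the Codazzi or $\divs$--$\curls$ systems to recover a derivative. In addition, the hierarchy used here is not Christodoulou's \eqref{hierarchyChr} but the modified one \eqref{hierarchy}, which changes the borderline structure: the dangerous terms are the boxed $\chibh\cdot\alpha$ and $\chih\cdot\alphab$ contributions, handled by feeding in the already-closed $(\alpha,\beta)$ and $(\betab,\alphab)$ estimates rather than by smallness in $\delta$. So your guess that the simplification ``propagat[es] a smaller but still sufficient list of flux and pointwise quantities'' is off the mark; the gain is structural, not in the inventory of norms.
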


\begin{figure}[htbp]
\centering
\begin{minipage}[t]{0.48\textwidth}
\centering
\includegraphics[width=2 in]{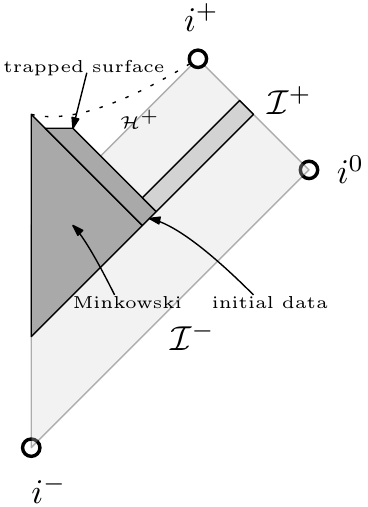}  
\caption{}
 \label{fig:chr} 
\end{minipage}
\begin{minipage}[t]{0.48\textwidth}
\centering
\includegraphics[width=2.3 in]{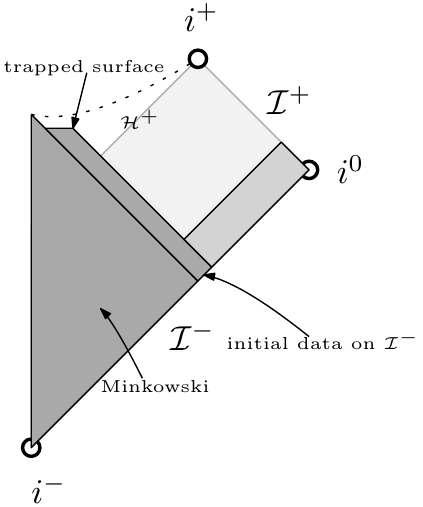}  
\caption{}
 \label{fig:chr1} 
\end{minipage}
\end{figure}

The solution Christodoulou solved is the dark grey region in Figure \ref{fig:chr}. The initial data can be extended to the whole outgoing null cone, which is then asymptotically flat. From a result in \cite{L-Z} of Zhu and the first author, the spacetime contains a piece of the future null infinity $\mathcal{I}^+$, which is the medium grey region in Figure \ref{fig:chr}\footnote{The light grey regions in Figures \ref{fig:chr}--\ref{fig:limei2} represent the whole domain of outer communication.}. On the other hand, the initial data can be prescribed on a piece of the past null infinity $\mathcal{I}^-$ and the solution then includes the dark grey region in Figure \ref{fig:chr1}. Now it is natural to ask if we can include the spatial infinity $i^0$ in our collapsing spacetimes, that is, to consider initial value problem with asymptotically flat Cauchy data. Christodoulou had proposed that if we prescribe the initial data on the whole past null infinity $\mathcal{I}^-$, then the solution would contain the medium grey region in Figure \ref{fig:chr1}, and in particular contain the spatial infinity $i^0$ and a piece of future null infinity which is complete to the past. In this case,  more than possessing a Cauchy hypersurface without any closed trapped surfaces, the spacetime is indeed geodesically complete to the past. For the same purpose, but by a different approach, Yu and the first author were able to show:
\begin{theorem}[Li-Yu, \cite{L-Y}] There exist complete and asymptotically flat Cauchy data sets without any closed trapped surfaces such that their maximal future development contain a closed trapped surface.
\end{theorem}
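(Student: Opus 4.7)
\medskip
\noindent\textbf{Proof plan.}
The strategy is to combine Christodoulou's characteristic trapped surface formation theorem with a backward evolution argument. Concretely, one fixes a short-pulse profile on a truncated outgoing null cone satisfying the hypotheses of \cite{Chr08}, and then constructs an asymptotically flat Cauchy hypersurface to its past by solving the vacuum Einstein equations backward and selecting a spacelike slice inside the resulting spacetime. The advantage of this route is that the future evolution automatically inherits Christodoulou's trapped surface by domain of dependence, so all of the genuinely new content is in the past construction and in ruling out trapping on the Cauchy slice itself.

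The construction proceeds in three steps. First, one prescribes on a truncated outgoing null cone $C_0\cap\{0\le u\le\delta\}$ short-pulse data of Christodoulou's form, with amplitude tuned so that \cite{Chr08} applies and a closed trapped surface forms in the future development. On the complementary part of $C_0$, and on a transversal past incoming cone $\Cb_0$, one prescribes trivial Minkowskian data, so that the data is compactly supported along $C_0$. Second, one solves the characteristic problem into the past, using Christodoulou's short-pulse a priori estimates along the pulse region and standard small-data theory elsewhere; the resulting spacetime is exactly Minkowski away from the thin null slab carrying the pulse. Into the Minkowskian part of this past development one inserts a spacelike slice modelled on Euclidean $\R^3$ and continues it smoothly across the pulse slab, yielding a complete asymptotically flat Cauchy hypersurface $\Sigma$. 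Third, by uniqueness of the vacuum development and domain of dependence, the maximal future development of $\Sigma$ contains the future of $C_0$, and in particular contains Christodoulou's closed trapped surface.

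The main obstacle is to verify that $\Sigma$ itself carries no closed trapped surface. Away from the pulse slab this is automatic because the geometry is flat there, but across the slab the induced metric and second fundamental form of $\Sigma$ differ from the Euclidean ones by quantities controlled only in the scale-invariant norms of the short-pulse hierarchy, which are small but not arbitrarily small. One therefore needs a quantitative comparison showing that both null expansions $\tr\chi$ and $\tr\chib$ of every embedded two-sphere $S\subset\Sigma$ retain their Euclidean signs; this in turn forces the backward characteristic problem to be run with sharp enough control to preclude any premature focusing before $\Sigma$ is reached, and it is precisely for this purpose that the scale-invariant version of Christodoulou's estimates is indispensable. A secondary, more technical issue is to ensure that $\Sigma$ is genuinely complete and ADM asymptotically flat, which can be arranged by a Corvino--Schoen type gluing to a Schwarzschild or Minkowski end along an annular region far from the pulse, at the cost of a harmless modification of the Cauchy data there.
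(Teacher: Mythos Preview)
Your proposal departs substantially from the actual Li--Yu argument, and as written it has a genuine gap in the ``backward evolution'' step. The Li--Yu construction does \emph{not} solve anything to the past. Instead it works entirely inside Christodoulou's forward development: one imposes on the short-pulse seed an \emph{isotropic condition} (the angular average of $|\partial_s\psi_0|^2$ is a fixed constant $16m_0$), which forces the region beyond the pulse, $\ub>\delta$, to be $\delta^{1/2}$-close to a Schwarzschild metric of mass $m_0$. One then selects a spacelike hypersurface in this nearly-Schwarzschild region and applies the Corvino--Schoen deformation to glue it to an exact \emph{Kerr} slice exterior. The four-parameter Kerr family is essential here, not optional: the cokernel of the linearized constraint map at Schwarzschild is four-dimensional (spanned by $\partial_t$ and the rotations), and one needs precisely the freedom in $(m,\mathbf{a})$ to kill the projection onto that cokernel. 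Gluing to Minkowski or Schwarzschild alone would fail. You relegate this to a ``secondary, more technical issue,'' but it is the heart of the proof.

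Your backward characteristic problem, by contrast, is not well-posed as you describe it. If $\Cb_0$ meets $C_0$ \emph{below} the pulse, the past of $C_0\cup\Cb_0$ is indeed Minkowski by domain of dependence---but then any Cauchy slice lying there is trapped inside the incoming cone and cannot be extended to a complete asymptotically flat hypersurface without specifying additional data \emph{outside} $\Cb_0$, which you never provide. If $\Cb_0$ meets $C_0$ \emph{above} the pulse, the induced data on $\Cb_0$ is not Minkowskian (it carries the backscatter of the pulse), so you cannot simply declare it trivial. Either way, ``solving backward with short-pulse estimates'' does not produce a region in which a complete $\Sigma$ can sit; Christodoulou's hierarchy governs the forward problem from $C_{u_0}\cup\Cb_0$, and reversing it requires a separate analysis (indeed, the present paper carries this out, but only \emph{after} a gluing step that fixes the data on $\Cb_\delta$). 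Finally, the absence of trapped surfaces on $\Sigma$ in Li--Yu is not obtained by a delicate expansion comparison across the slab as you suggest; it follows because $\Sigma$ is chosen to lie in a region where $\tr\chi>0$ everywhere---Minkowskian inside, nearly-Schwarzschild with $r>2m_0$ in the gluing annulus, and exactly Kerr-exterior outside.
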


The method of proving this theorem is to use the local deformation techniques developed by Corvino-Schoen \cite{C00,C-S} (also see Chrusciel-Delay \cite{C-D}) to attach a piece of constant time slice of Kerr spacetime exterior to Christodoulou's collapsing spacetime. Christodoulou's collapsing spacetime is the dark grey region in Figure \ref{fig:liyu} and the thick curve represents the Cauchy data we produce. The Cauchy data then have trivial topology and they are isometric to the constant time slices of Kerr spacetime outside a compact region. This in particular implies that the  maximal  future (and past) developments are also isometric to Kerr spacetimes in a neighborhood of the spatial infinity $i^0$. Moreover, based on existing techniques but lengthy arguments, it should be able to be proved that the maximal past development is past geodesically complete and asymptotic to Minkowski space.

It is then natural to ask if we can reach the future timelike infinity $i^+$, that is, if we can fill in the remaining part of the light grey region in Figures \ref{fig:chr}, \ref{fig:chr1}, \ref{fig:liyu}. If this is the case, then we would have a spacetime of gravitational collapse in a global sense. We would also be able to identify the future event horizon in this spacetime. A direct way is to solve the Einstein equations globally in the domain of outer communication, that is, to solve the weak cosmic censorship conjecture, which is of course far out of reach nowadays. Based on the work \cite{L-Y} of Yu and the first author, we can still reach the future timelike infinity $i^+$ if the stability of Schwarzschild (in the Kerr family) is true, which may be solved in the coming years but still requires lots of works. We remark that in spherical symmetry, the situation is much more clear: the existence of a closed trapped surface would imply the completeness of the future null infinity for suitable matter fields (see \cite{D05}).

\begin{figure}[htbp]
\centering
\begin{minipage}[t]{0.48\textwidth}
\centering
\includegraphics [width=2.1 in]{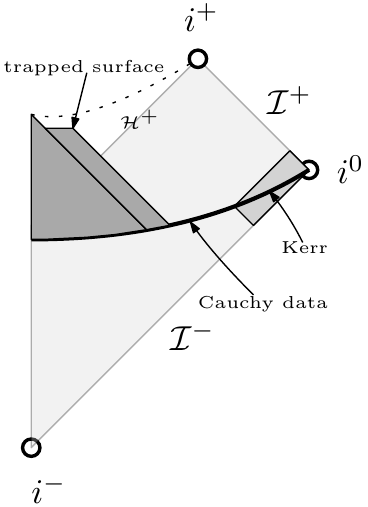}  
\caption{}
 \label{fig:liyu} 
\end{minipage}
\begin{minipage}[t]{0.48\textwidth}
\centering
\includegraphics [width=2 in]{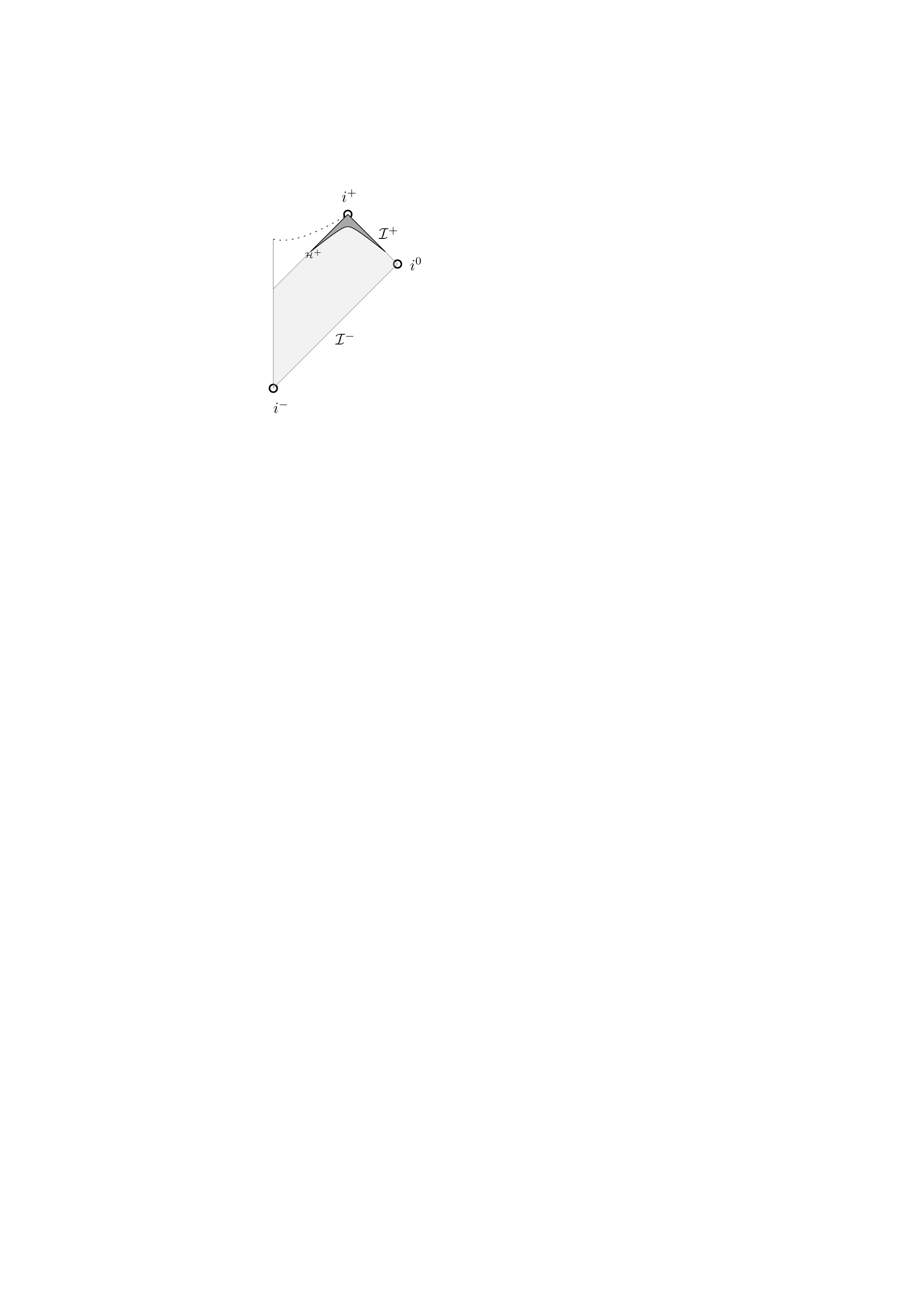}  
\caption{}
 \label{fig:dhr} 
\end{minipage}
\end{figure}

Nevertheless, we may still have some indirect way to exhibit examples of collapsing spacetimes. This would still help to improve our understanding of gravitational collapse. In a recent work \cite{D-H-R}, Dafermos-Hozegel-Rodnianski were able to construct vacuum spacetimes of evolving black holes, which means that the vacuum spacetimes without any stationary Killing fields are constructed up to the future timelike infinity $i^+$, and hence a future event horizon can be identified. Their construction is depicted in the dark grey region of Figure \ref{fig:dhr}. The method is by solving the Einstein equations backward with initial data given on the future event horizon $\mathcal{H}^+$ and the future null infinity $\mathcal{I}^+$. The solution then exists in a neighborhood of $i^+$. The behavior to the past is not concerned in their work.

\subsection{The Main Theorem}

In this paper, we are able to construct an example of a collapsing spacetime in global sense in an indirect way. The main theorem of this paper is the following.
\begin{mth}
There exists a solution $(M,g)$ of the vacuum Einstein equations satisfying the following conditions:
\begin{itemize}
\item $M$ contains a closed trapped surface and has a complete future null infinity. Then $M$ is future null geodesically incomplete and has a black hole region $\mathcal{B}$ containing the closed trapped surface. 
\item $M$ is the future development of an asymptotically flat Cauchy data $\Sigma$ with trivial topology. Moreover, $\Sigma$ contains no points contained in the black hole region, i.e., $\Sigma\cap\mathcal{B}=\varnothing$. In particular, $\Sigma$ contains no closed trapped surfaces.
\end{itemize}
\end{mth}
\begin{figure}[htbp]
\centering
\includegraphics [width=2 in]{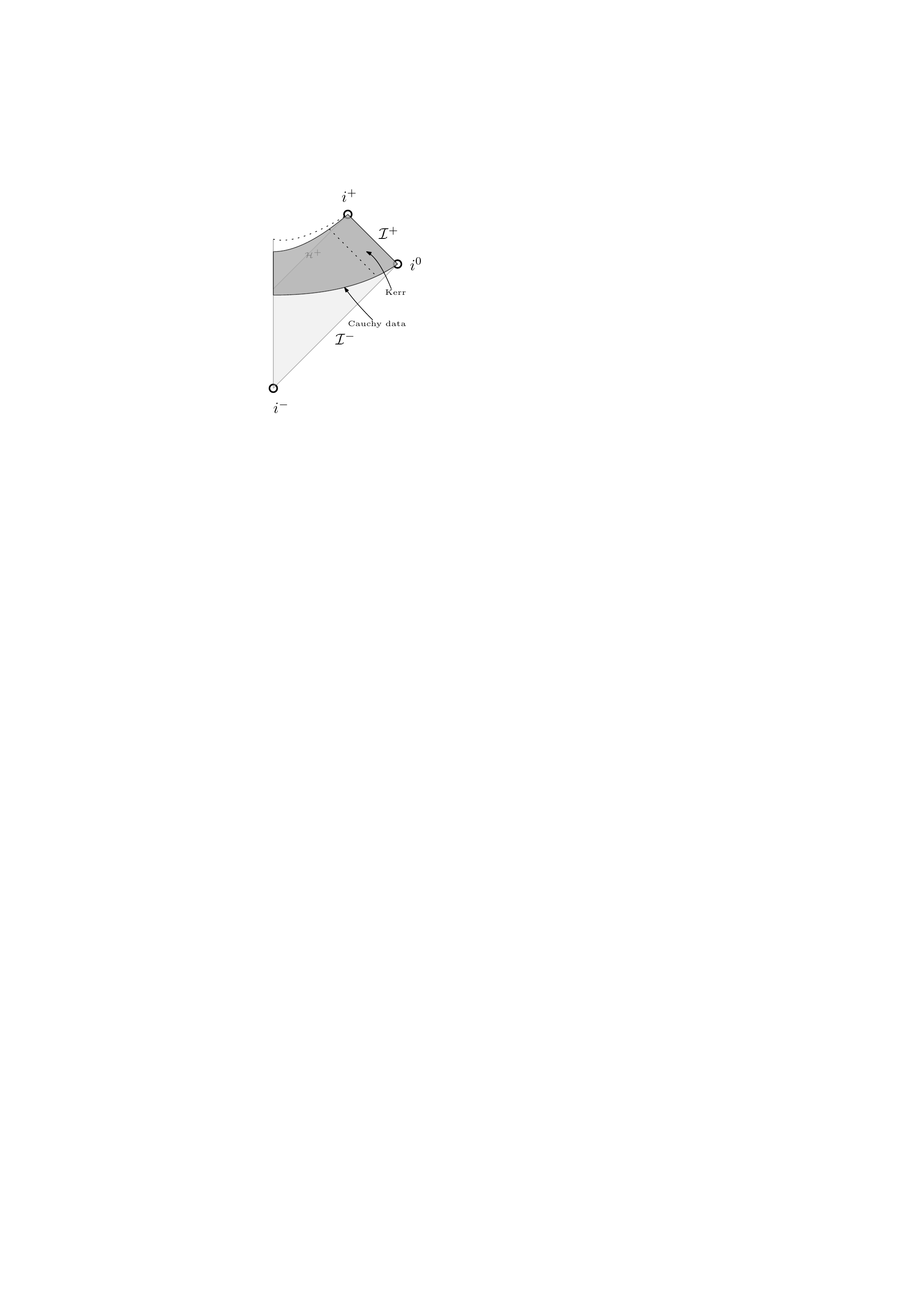}  
\caption{}
 \label{fig:limei} 
\end{figure}

The spacetime $(M,g)$ we construct is depicted in the darker grey region of Figure \ref{fig:limei}. We make several remarks on the Main Theorem and will discuss and begin the proof in the next section. First of all, the spacetimes we construct have a special property: For any $m_0>0$ and $\varepsilon>0$, $(M,g)$ is constructed in the way that, there is a region $\Omega\subset\Sigma$ diffeomorphic to a 3-ball such that outside the future domain of dependence of $\Omega$, $(M,g)$ is isometric to the Kerr spacetime with mass parameter $m$ and angular momentum vector $\mathbf{a}$, where 
$$|m-m_0|+|\mathbf{a}|<\varepsilon.$$
This property is similar to the construction in \cite{L-Y}, by which the construction in this paper is mainly inspired. This is also in some sense very similar to spherically symmetric collapsing spacetimes, which are isometric to the Schwarzschild spacetimes in vacuum region. %Combined with the construction in \cite{D-H-R}, $(M,g)$ can be made not exactly isometric to but exponentially asymptotic to Kerr spacetimes. 
The second thing is that, since the black hole region in the solution can be identified, we are able to state and prove that the Cauchy hypersurface $\Sigma$ satisfies the property that $\Sigma\cap\mathcal{B}=\varnothing$, which is stronger than the property that $\Sigma$ contains no closed trapped surfaces. Similar to the construction in \cite{L-Y}, based on the existing techniques, the maximal past development of  $\Sigma$ should be past geodesically complete and asymptotic to Minkowski space. Therefore, the spacetimes we construct can be viewed as the spacetimes of the formation of Kerr black holes from complete regular and dispersive past.

\subsection{Comments on the proof}

\subsubsection{Main steps of the construction}

\begin{figure}[htbp]
\centering
\begin{minipage}[t]{0.48\textwidth}
\centering
\includegraphics[width=2 in]{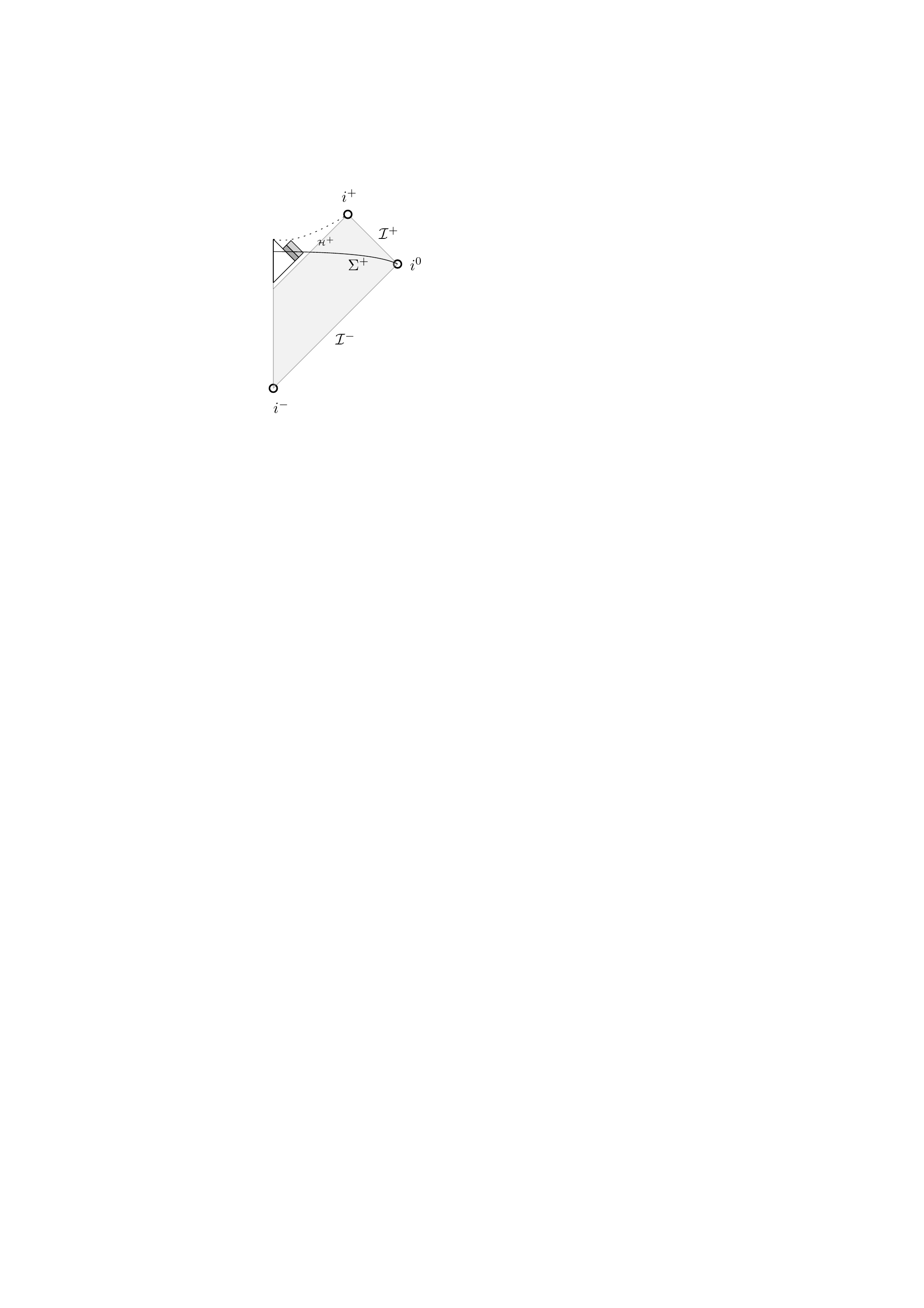}  
\caption{}
 \label{fig:limei1} 
\end{minipage}
\begin{minipage}[t]{0.48\textwidth}
\centering
\includegraphics[width=2 in]{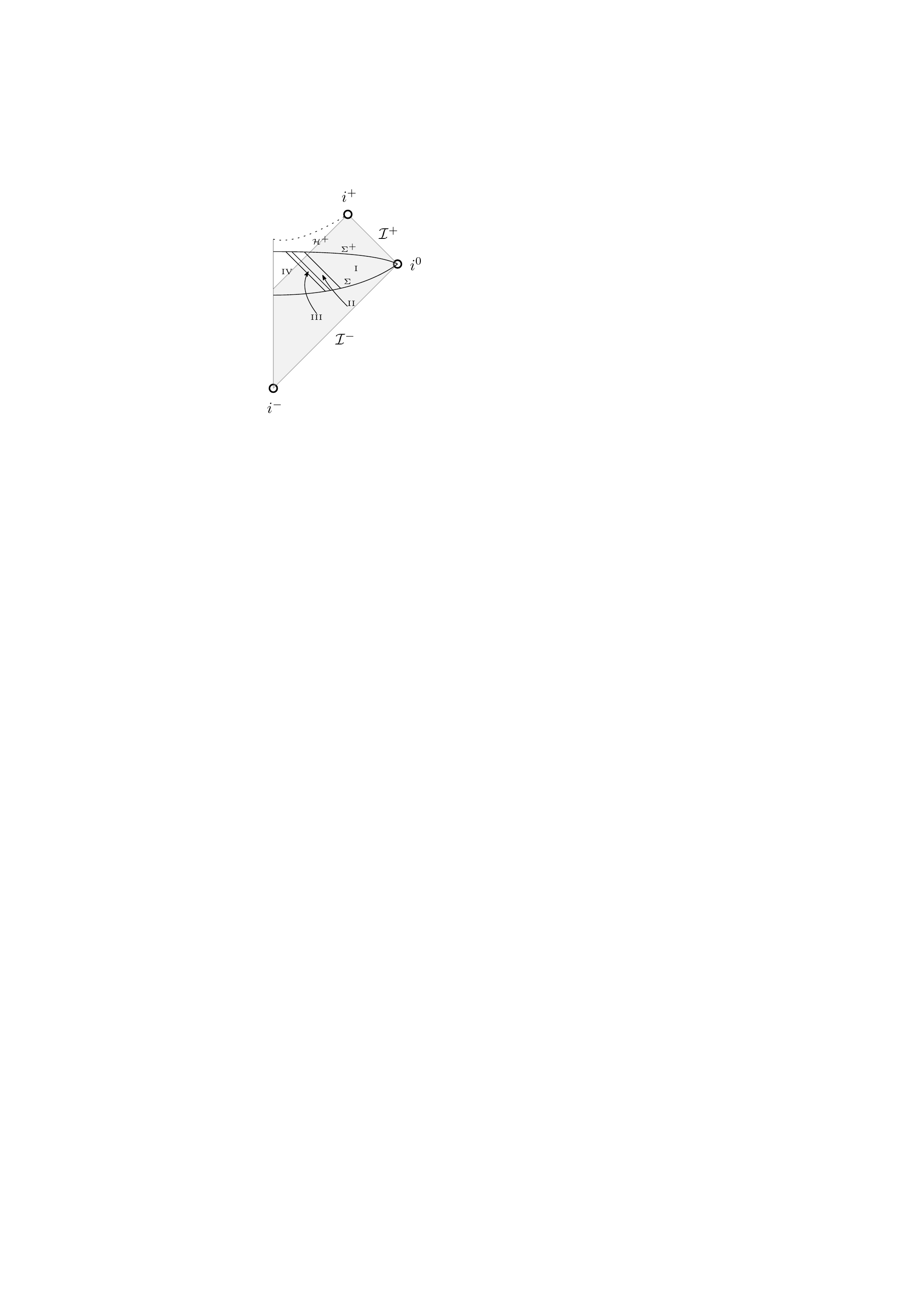}  
\caption{}
 \label{fig:limei2} 
\end{minipage}
\end{figure}

Let us first explain briefly the main steps of the construction.

\begin{itemize}
\item {\bf Step 1.}  Construction of a spacetime region inside the black hole.
\end{itemize}

The first step is to construct the dark and medium grey regions, together with the triangular white Minkowskian region on their left in Figure \ref{fig:limei1}. The dark grey region is the \emph{short pulse} region with strong incoming gravitational wave. By imposing an \emph{isotropic condition}, the medium grey region can be proved to be close to a Schwarzschild spacetime. It is a region after the incoming wave. These regions are constructed as a solution of the vacuum Einstein equations as a characteristic initial value problem as constructed in \cite{Chr08} and \cite{L-Y}. Different to \cite{L-Y}, only the part in the black hole is needed in this step and hence the characteristic initial data can be prescribed in the way that it is already inside the black hole.

\begin{itemize}
\item {\bf Step 2.} Gluing construction inside the black hole.
\end{itemize}

This is the key step, the main novelty of this paper from technical point of view. This step is to construct a Cauchy surface as depicted also in Figure \ref{fig:limei1}. The Cauchy surface $\Sigma^+$ is isometric to a spacelike slice in a Kerr spacetime outside the medium grey region constructed in the previous step. This can be done by perturbing the data induced on this surface in the medium grey region using the gluing techniques developed by Corvino-Schoen \cite{C00,C-S} (also see Chrusciel-Delay \cite{C-D}). This Kerr spacetime in particular has small angular momentum per unit mass. This is similar to the construction in \cite{L-Y}. The difference is that the gluing is done outside the black hole in \cite{L-Y}, while the construction in this article is done \emph{inside} the black hole. 

\begin{itemize}
\item {\bf Step 3.} Construction of the spacetime outside the black hole.
\end{itemize}

From the previous step, it is clear that the maximal future development of $\Sigma^+$ includes the future timelike infinity $i^+$ whose past neighborhood is isometric to a Kerr spacetime. However, $\Sigma^+$ contains at least one closed trapped surface and hence it intersects the black hole region. In order to obtain another Cauchy surface $\Sigma$ that lies outside the black hole region, we need to solve the vacuum Einstein equations from $\Sigma^+$ \emph{to the past}. The solution is depicted in the regions $I$, $II$, $III$ and $IV$ in Figure \ref{fig:limei2}. These four regions are solved sequentially. Let us denote by $\Sigma_{I}^+$, $\Sigma_{II}^+$, $\Sigma_{III}^+$ and $\Sigma_{IV}^+$ the part of $\Sigma^+$ lying in the regions $I$, $II$, $III$ and $IV$ respectively. Let us also denote by $\Cb_{I,II}$, $\Cb_{II,III}$ and $\Cb_{III,IV}$ the null cones separating regions $I$ and $II$, $II$ and $III$, $III$ and $IV$ respectively.

Then first of all, it is clear that region $I$ is isometric to a Kerr spacetime that has small angular momentum per unit mass. Then it can be viewed as a spacetime region that is close to a Schwarzschild spacetime and $\Cb_{I,II}$ is close to a null cone in Schwarzschild spacetime. Also, the estimate of Corvino-Schoen technique tells that the data induced on $\Sigma_{II}^+$ is still close to a spacelike slice in Schwarzschild spacetime (Note that the data has been changed after deformation.). Then region $II$ can be solved by Cauchy stability and is closed to a Schwarzschild spacetime.

To solve region $III$, we start from the initial data induced on $\Cb_{II,III}$ which is close to being Schwarzschildean, and the initial data induced on $\Sigma^+_{III}$, which has not been changed by deformation. Then the data of this latter part satisfies exactly the same estimates in the \emph{short pulse} region as in \cite{Chr08}.  However, the data induced on $\Cb_{II,III}$ does not necessarily satisfy the same estimates. This is simply because the data induced on $\Cb_{II,III}$ is only close to being Schwarzschildean but not necessarily exactly Schwarzschildean. Therefore, to construct region $III$, we need to establish the a priori estimates slightly different from those in \cite{Chr08} and this is the main part of Step 3. We will make more comments about this in the next subsection.

At last, we will show that $\Cb_{III,IV}$, is close to a null cone in Minkowski space using an argument similar to that in \cite{L-Y}, then region $IV$ can also be solved by Cauchy stability because the data induced on $\Sigma_{IV}^+$ is exactly Minkowskian.

\subsubsection{More on solving region $III$}\label{moreonregionIII} Let us make more comments on the a priori estimates needed to solve region $III$ mentioned above. We will not explain the notations which can be found in Section \ref{aprioriestimates}. Recall that the a priori estimates introduced by Christodoulou in \cite{Chr08}, tied to a small positive number $\delta$,  tell us that various geometric quantities in the solution are bounded by $\delta$ to some specific different powers. The bounds are designed such that the a priori estimates can be closed under a bootstrap argument under these bounds and at the same time a closed trapped surface has a chance to form. For example, at the level of $L^\infty$ norm of the curvature, the following bounds are satisfied:
\begin{align}\label{hierarchyChr}
|\alpha|\le C\delta^{-\frac{3}{2}}, \ |\beta|\le C\delta^{-\frac{1}{2}},\  |\rho|,|\sigma|\le C,\  |\betab|\le C\delta, \ |\alphab|\le C\delta^{\frac{3}{2}}.
\end{align}
Christodoulou called this display the \emph{short pulse hierarchy}. However, the data induced on $\Cb_{II,III}$ will be proved to be close to that in the Schwarzschild spacetime, or more precisely, at the level of $L^\infty$ norm of the curvature, the following bounds are satisfied on $\Cb_{II,III}$:
\begin{align}\label{hierarchyclosetoS}
|\alpha|,|\beta|,|\rho-\rho_{\text{Sch}}|,|\sigma|,|\betab|,|\alphab|\le C\delta^{\frac{1}{2}},
\end{align}
where $\rho_{\text{Sch}}$ is the value of $\rho$ in the Schwarzschild spacetime. By comparing \eqref{hierarchyChr} and \eqref{hierarchyclosetoS}, it can be seen that \eqref{hierarchyChr} can never be proved to hold in region $III$. The best estimates that can be expected would be the ``union'' of \eqref{hierarchyChr} and \eqref{hierarchyclosetoS}:
\begin{align}\label{hierarchy}
|\alpha|\le C\delta^{-\frac{3}{2}}, \ |\beta|\le C\delta^{-\frac{1}{2}},\  |\rho|,|\sigma|\le C,\  |\betab|, |\alphab|\le C\delta^{\frac{1}{2}},
\end{align}
where $\betab$ and $\alphab$ have worse bounds than in \eqref{hierarchyChr}. We may call this display a \emph{modified short pulse hierarchy}. The existence result in \cite{Chr08} cannot be directly applied here. A generalized existence result in \cite{L-R17} can be applied in the current setting, but it still requires additional works to derive the estimates \eqref{hierarchy}, which are needed in solving region $IV$ but do not directly follows from the a priori estimates derived in \cite{L-R17}. 

For this reason, and also for the sake of self-containedness, we choose to directly derive new a priori estimates built on the hierarchy \eqref{hierarchy} in detail, which are sufficient for the existence of the solution in region $III$. The method to derive a priori estimates in this paper is similar to that in \cite{Chr08}, but we will encounter different \emph{borderline terms}. Besides this, we are able to  write down the proof in the way that the \emph{elliptic estimates are avoided}\footnote{An existence result without elliptic estimates in a more general setting was given recently by An in \cite{An}. The argument presented in this article is originated and modified from that in the first author's thesis \cite{Lithe}.}. So the derivation of a priori estimates in this paper can be regarded as a \emph{simplification} of the argument in \cite{Chr08}\footnote{More precisely, Chapter 3 to Chapter 16.1 in \cite{Chr08}.}, which is one another motivation to write down the estimates in detail.

\subsubsection{Extending the solution to the past null infinity}
We have mentioned above that the maximal past development can in principle be proved to be past geodesically complete and asymptotic to Minkowski space based on existing techniques. We will not pursue this goal in this paper but it can be made clear. In fact, regions $I$ and $II$ can be extended to the past null infinity using the techniques developed in the works of stability of Minkowski space, like \cite{Ch-K,K-N}. This is essentially the problem of the stability of external region. A more related reference is \cite{C-N}, in which the stability of the external regions of Kerr spacetimes was proved in detail.  To extend region $III$ to the past null infinity, in addition to the estimates of modified short pulse hierarchy mentioned before, one needs to estimate at the same time the decay of various quantities to the past null infinity.  This has been done in Christodoulou's work \cite{Chr08} with his original short pulse hierarchy. There should not be essential differences here. However, in the current paper, we solve the solution from finite region to the past null infinity, while Christodoulou solved the solution beginning from the past null infinity. This requires some additional works in choosing suitable foliations of the solution, see also \cite{Ch-K,K-N}. In \cite{L-Z}, we have treated this carefully in a similar semi-global setting. Extending region $IV$ to the past (null and timelike) infinity is essentially the stability of the interior region of Minkowski space, with initial data given on an asymptotically flat null cone, emanating from a point or the spherical boundary of a $3$-dimensional disk. There is no doubt that this can be done in view of all existing works on the stability of Minkowski space, despite the lack of explicit details in the literature.

\subsection{Acknowledgement}
Both authors are supported by NSFC 11822112, 11501582 and 11521101. They would like to thank the anonymous referees for valuable comments on the manuscript.

\section{The proof of the Main Theorem}\label{sec:proof}

We start the proof of the Main Theorem. 

\subsection{Step 1: Construction of a spacetime region inside the black hole}\label{section:step1}

In this subsection we carry out Step 1 of the construction. Let us consider a null cone denoted by $C_{u_0}$ emanating from a point. The initial data set on $C_{u_0}$ is specified in the following way:
\begin{itemize}[leftmargin=0.5cm]
 \item Beginning from the vertex of $C_{u_0}$, the data is exactly Minkowskian on $C_{u_0}$ before a spherical section that is isometric to a standard round sphere of radius $-u_0$. Let $\ub$ be the affine parameter of the null geodesic generators $C_{u_0}$ and $S_{\ub,u_0}$ be the spherical sections which are the level sets of $\ub$ on $C_{u_0}$. Moreover, $\ub$ is chosen in the way that the radius of the section $S_{\ub,u_0}$  is $\ub-u_0$ for $u_0\le \ub\le0$.
\item The data for $\ub\ge0$ consists of the conformal metric $\widehat{\gs}=\widehat{\gs}(\ub,\vartheta)$ on each spherical section. This means that the full metric on each spherical section is given by  $\gs=\phi^2\widehat{\gs}$ where $\phi=\phi(\ub,\vartheta)>0$ is determined by $\widehat{\gs}$. The conformal metric $\widehat{\gs}$ is specified in the following way:

Let $\{(U_1,(\vartheta^A_1)),(U_2,(\vartheta^A_2))\}_{A=1,2}$ be the two stereographic charts on $S_{0,u_0}$. Thus, the round metric $\gs|_{S_{0,u_0}}$ is expressed as $(\gs|_{S_{0,u_0}})_{AB}(\vartheta)=\frac{|u_0|^2}{(1+\frac{1}{4}|\vartheta|^2)^2}\delta_{AB}$ with $\vartheta=\vartheta_1$ or $\vartheta_2$ and $|\vartheta|^2=|\vartheta^1|^2+|\vartheta^2|^2$. Extend the both coordinate systems to the whole $C_{u_0}$, at least for $\ub\ge0$, in the way that $\vartheta^A$ is constant along null geodesic generators of $C_{u_0}$. Let us impose an additional condition on the conformal metric $\widehat{\gs}$ that, written in coordinate systems,
\begin{equation*}
\det(\widehat{\gs}(\ub,\vartheta)_{AB})=\det(\gs(0,\vartheta)_{AB})=\frac{|u_0|^4}{(1+\frac{1}{4}|\vartheta|^2)^4}.
\end{equation*}
Then let us write
\begin{equation*}
\widehat{\gs}(\ub,\vartheta)_{AB}=\frac{|u_0|^2}{(1+\frac{1}{4}|\vartheta|^2)^2}m_{AB}(\ub,\vartheta)=\frac{|u_0|^2}{(1+\frac{1}{4}|\vartheta|^2)^2}\exp\psi_{AB}(\ub,\vartheta),
\end{equation*}
where $m_{AB}$ takes values in the set of positive definite symmetric matrices with determinant $1$ and $\psi_{AB}$ takes value in the set of symmetric trace-free matrices. After this reduction, to prescribe initial data, we only need to specify a function\footnote{In fact, we specify a pair of functions satisfying a compatibility condition on a sphere.} $\psi=\psi(\ub,\vartheta)$ taking values in $\widehat{S}_2$ 
where $\widehat{S}_2$ denotes the set of $2\times 2$ symmetric trace-free matrices.

Now let us choose a smooth compactly supported $\widehat{S}_2$-valued function $\psi_0 \in C^\infty_c((0,1) \times S_{0,u_0})$. Given a small parameter $\delta>0$, then for $0\le\ub\le\delta$, the function $\psi$ is given by setting
\begin{equation}\label{shortpulse}
\psi(\ub,\vartheta)=\frac{\delta^{\frac{1}{2}}}{|u_0|}\psi_0(\frac{\ub}{\delta},\vartheta),
\end{equation}
which was called \emph{short pulse ansatz} by Christodoulou and $\psi_0$ was called the \emph{seed data}. For $\ub\ge\delta$, we set $\psi(\ub,\vartheta)\equiv\psi(\delta,\vartheta)$. Similar to the previous work \cite{L-Y}, we especially choose the seed data $\psi_0$ such that
\begin{align}\label{integrate=m0}
\int_0^1\left|\frac{\partial \psi_0}{\partial s}\right|^2\D s=16m_0
\end{align}
for some $m_0>0$.

\end{itemize}

 Let us denote $\ub,u$ be two optical functions in the spacetime such that, $\Cb_{\ub}$, the level sets of $\ub$, be the incoming null cone emanating from the spherical section $S_{\ub,u_0}$, and $C_u$, the level sets of $u$, be the outgoing null cone emanating from the spherical sections of $\Cb_0$ with radius $|u|$. Denote $S_{\ub,u}=C_u\cap \Cb_{\ub}$. In particular, $\ub$ and $u$ increase to the future.

%To state the result in this step, we also need to define a Schwarzschild metric $g_{m_0}$ with mass $m_0$ in the way that we consider it as a solution with initial data given on $\Cb_{\delta}$ and the part $\ub\ge\delta$ of $C_{u_0}$ . The initial data set consists of the following quantities: a metric $\gs_{m_0}(\delta,u_0)$ on $S_{\delta,u_0}$ which is the round metric of the sphere with radius $|u_0|$, the torsion $\zeta_{m_0}(\delta,u_0) \equiv 0$ on $S_{\delta,u_0}$, two null expansions $\tr\chi_{m_0}(\delta,u_0)\equiv\frac{2}{|u_0|}-\frac{4m_0}{|u_0|^2}$ and $\tr\chib_{m_0}(\delta,u_0) \equiv -\frac{2}{|u_0|}$ on $S_{\delta,u_0}$, the lapse function $\Omega_{m_0} \equiv 1$ on $C_{u_0}\cup\Cb_{\delta}$ and two shears $\chih_{m_0} \equiv 0$ on $C_{u_0}$ and $\chibh_{m_0} \equiv 0$ on $\Cb_\delta$. By Birkhoff theorem and the uniqueness of the characteristic initial data problem, we conclude that the development $g_{m_0}$ is locally isometric to Schwarzschild with mass $m_0$. Combining all the above things together, we have the following conclusion: (see figure \ref{fig1} and \cite{Chr}\cite{Li}\cite{L-Y})

Let us choose the background Schwarzschild metric $g_{m_0}$ with mass $m_0$ such that in the region $\delta\le\ub\le\delta+\varepsilon$, $u_0\le u\le u_1$, $g_{m_0}$ is written in the form
\begin{equation}\label{gm0}g_{m_0}=-4\Omega^2\D\ub\D u+r^2\D\sigma_{\mathbb{S}^2}.\end{equation}
The function $\Omega$ and $r$ are determined by their values on $\Cb_{\delta}$ and $C_{u_0}$ (the part $\delta\le\ub\le\delta+\varepsilon_0$): $\Omega$ is set to be constant $1$ on $\Cb_{\delta}\cup C_{u_0}$, and $r=|u|$ on $\Cb_\delta$, $r=|u_0|+\ub\left(1-\frac{2m_0}{|u_0|}\right)$ on $C_{u_0}$. In particular, the optical functions $\ub, u$ are still optical functions relative to $g_{m_0}$, and are affine parameters of the null generators of $\Cb_{\delta}$ and $C_{u_0}$ relative to $g_{m_0}$. The spheres $S_{\ub,u}$ are orbit spheres of the isometric group of $g_{m_0}$.
 
 Then the spacetime is constructed according to the following theorem.

\begin{theorem}\label{step1}
Let $\delta>0$ be a small number, $m_0>0$, $u_0<u_1<0$ and  $k$ be a large integer. Then there exists an $\varepsilon_0>0$ (only depending on $m_0$), such that if $\delta$ is sufficiently small, the solution $g$ of the vacuum Einstein equations exists for $0\le\ub\le\delta+\varepsilon_0$, $u_0\le u\le u_1$. In addition, the solution $g$ in $\delta\le\ub\le\delta+\varepsilon_0$, $u_0\le u\le u_1$ is $\delta^{\frac{1}{2}}$-close to the Schwarzschild metric $g_{m_0}$ with mass $m_0$ also defined in the same region in the $C^k$ topology.
\end{theorem}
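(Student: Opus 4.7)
The plan is to split the slab $\{u_0 \le u \le u_1,\ 0 \le \ub \le \delta + \varepsilon_0\}$ into the short pulse region $\{0 \le \ub \le \delta\}$, handled by Christodoulou-type a priori estimates, and the post-pulse region $\{\delta \le \ub \le \delta + \varepsilon_0\}$, handled by a small perturbation argument around the Schwarzschild background $g_{m_0}$ of \eqref{gm0}.

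First I would invoke Christodoulou's semi-global existence theorem from \cite{Chr08} (or the refinement in Li-Yu \cite{L-Y}) to solve the vacuum Einstein equations with the prescribed short pulse characteristic data in the slab $\{0 \le \ub \le \delta,\ u_0 \le u \le u_1\}$, with $\delta$ chosen sufficiently small in terms of $\|\psi_0\|_{C^k}$ and $u_0, u_1$. The short pulse hierarchy provides sharp $\delta$-weighted control of all connection coefficients and curvature components. The key intermediate step is then to verify that the data induced on $\Cb_\delta$, together with the extension of $C_{u_0}$ to $\ub \ge \delta$ (on which $\psi \equiv 0$ by the compact support of $\psi_0$ in $(0,1)$), is $\delta^{1/2}$-close in $C^k$ to the Schwarzschild characteristic data of mass $m_0$ induced by $g_{m_0}$. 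Along $C_{u_0}$, the Hawking mass of $S_{\ub, u_0}$ evolves through the pulse with a gain proportional to $\int_0^{\ub/\delta} |\partial_s \psi_0|^2 \, \D s$, which by the isotropy normalization \eqref{integrate=m0} equals precisely $m_0$ at $\ub = \delta$; propagating the short pulse bounds for $\chi$, $\chib$, $\eta$ and the curvature transversally along $\Cb_\delta$ then upgrades this scalar information to a full $C^k$ comparison of the induced double null geometry with that of $g_{m_0}$ in the gauge \eqref{gm0}.

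With Schwarzschild-close characteristic data on $\Cb_\delta \cup C_{u_0}|_{\{\ub \ge \delta\}}$, the final step is Cauchy stability for the characteristic initial value problem. Since $g_{m_0}$ in the gauge \eqref{gm0} is smooth throughout $\{\delta \le \ub \le \delta + \varepsilon_0,\ u_0 \le u \le u_1\}$ for $\varepsilon_0 = \varepsilon_0(m_0) > 0$ depending only on the Schwarzschild geometry of this finite slab (which stays strictly away from the Schwarzschild singularity for small $\varepsilon_0$), a standard small data perturbation argument, closing high-order energy estimates around the $g_{m_0}$ solution in double null gauge, yields the perturbed solution with the claimed $\delta^{1/2}$-closeness in $C^k$.

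I expect the main obstacle to be the induced data analysis on $\Cb_\delta$: achieving $\delta^{1/2}$-closeness of the full induced geometry in $C^k$, rather than only of the leading mass aspect coming from \eqref{integrate=m0}, requires careful propagation of the short pulse hierarchy to all higher angular and transverse derivatives and matching to the reference gauge in which $g_{m_0}$ is written. This analysis is essentially the one carried out in \cite{L-Y}, and Theorem \ref{step1} can be regarded as its adaptation in which $u_0, u_1$ are chosen sufficiently negative that the entire slab lies inside the black hole region of the limiting Schwarzschild solution.
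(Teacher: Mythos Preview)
Your proposal is correct and follows essentially the same three-step structure as the paper: solve the short pulse slab by a priori estimates, verify via the isotropy condition \eqref{integrate=m0} that the induced data on $\Cb_\delta\cup C_{u_0}|_{\ub\ge\delta}$ is $\delta^{1/2}$-close to Schwarzschild, and then invoke Cauchy stability in the post-pulse slab. The only substantive difference is that while you cite \cite{Chr08}/\cite{L-Y} for the short pulse existence, the paper reproves this from scratch via its own Theorem~\ref{existence} and Proposition~\ref{prop:smallness} (based on the modified hierarchy \eqref{hierarchy}), explicitly so that the same machinery can later handle region~$III$; the paper itself notes that the result ``can in fact be found in \cite{L-Y}'', so your shortcut is legitimate here.
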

%\begin{remark}
%The closeness of $g$ to $g_{m_0}$ is also uniformly in $u_0$ so that they are close all the way up to the past null infinity, see \cite{Li}. However, we do not need this statement in the present article.
%\end{remark}

This theorem can in fact be found in \cite{L-Y}, in which the proof was largely based on \cite{Chr08}. According to Section \ref{moreonregionIII}, we still need to derive new a priori estimates, generalizing and simplifying those in \cite{Chr08}, to prove the existence of the solution in region $III$. So we choose not to apply results in \cite{Chr08} and \cite{L-Y} but to prove this theorem directly. The derivation of the a priori estimates and the proof of Theorem \ref{step1} are included in Section \ref{aprioriestimates}. We may set in particular $|u_0|<2m_0$, then the solution in $0\le\ub\le\delta, u_0\le u\le u_1$ corresponds to the dark grey region and the solution in $\delta\le\ub\le\delta+\varepsilon_0, u_0\le u\le u_1$ corresponds to the medium grey region in Figure \ref{fig:limei1}\footnote{In fact, we only need to pick $u_1$ such that $|u_1|<2m_0$. Then not the whole but part of the solution will correspond to the dark and medium grey regions.}. And the solution in the medium grey region is close to a Schwarzschild metric inside the Schwarzschild black hole.
%\begin{remark}The statement above in fact implies the existence of a closed trapped surface. This is because the sphere $\ub=\delta, u=-1$ is close to the corresponding sphere $r=1$ in a Schwarzschild spacetime with mass $m_0>1/2$, which is a closed trapped surface. 
%\end{remark}

\subsection{Step 2: Gluing construction inside the black hole} In this subsection we carry out Step 2 of the construction. We need to choose a Cauchy hypersurface $\Sigma^+$ in the spacetime region constructed in the previous step and deform it such that it can be extended to spatial infinity. 

Recall that we have two metrics, the solution $g$ and the Schwarzschild metric $g_{m_0}$ defined in the region $\delta\le\ub\le\delta+\varepsilon_0, u_0\le u\le u_1$, which are $\delta^{\frac{1}{2}}$-close to each other. We first choose a function $\Sigma^+_{II}(\ub)$ such that the hypersurface $u=\Sigma_{II}^+(\ub),\delta\le\ub\le\delta+\varepsilon_0$, is exactly the hypersurface $r_{g_{m_0}}=r_0<2m_0$ relative to the Schwarzschild metric $g_{m_0}$. We denote this hypersurface (excluding its boundary) by $H$ and the initial data induced on it by $g$ by $(\bar{g},\bar{k})$. A key property of $H$ is that the $H$ will not shrink as $\delta\to0$ because $\varepsilon_0$ does not depend on $\delta$. The closeness to Schwarzschild metric implies that, given large integer $k$ ({\bf which may be different in the whole construction}),
$$\|\bar{g}-\bar{g}_{m_0}\|_{C^k(\bar{g}_{m_0})}+\|\bar{k}-\bar{k}_{m_0}\|_{C^k(\bar{g}_{m_0})}\le C\delta^{\frac{1}{2}}$$
if $\delta$ is sufficiently small, where $(\bar{g}_{m_0},\bar{k}_{m_0})$ is the initial data of the Schwarzschild metric $g_{m_0}$ induced on $\Sigma_{II}^+$, and  $C$ is a constant independent of $\delta$.

In section \ref{gluing}, we will prove the following theorem:
\begin{theorem}\label{thm:gluing}
Given a large integer $k$. If $\delta$ is sufficiently small, there exists an initial data set $(\widetilde{g},\widetilde{k})$ on $H$ such that $(\widetilde{g},\widetilde{k})$ coincides with $(\bar{g},\bar{k})$ near the inner boundary of $H$, i.e., $H\cap \Cb_\delta$, and is Kerrian near the outer boundary of $H$, i.e., $H\cap \Cb_{\delta+\varepsilon_0}$. Moreover, 
$$\|\widetilde{g}-\bar{g}_{m_0}\|_{C^k(\bar{g}_{m_0})}+\|\widetilde{k}-\bar{k}_{m_0}\|_{C^k(\bar{g}_{m_0})}\le C\delta^{\frac{1}{2}}$$
for some constant $C$ independent of $\delta$.
\end{theorem}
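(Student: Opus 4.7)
The proof is a direct application of the Corvino--Schoen/Chrusciel--Delay gluing machinery (\cite{C00,C-S,C-D}) to the annular hypersurface $H$, with the four-parameter Kerr family through Schwarzschild at $(m_0,\mathbf 0)$ serving as the ``target'' near the outer boundary of $H$. The plan has three stages: a naive interpolation between $(\bar g,\bar k)$ and Kerr data, a linearized solve for the correction modulo a finite-dimensional obstruction, and the absorption of that obstruction via the free Kerr parameters, followed by a Picard iteration to close the nonlinear problem.

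\textbf{Setup and naive interpolation.} Parametrize $H$ by $\ub\in[\delta,\delta+\varepsilon_0]$ and an $\mathbb S^2$ factor, fix a collar $A:=\mathbb S^2\times[\delta+\tfrac{\varepsilon_0}{3},\delta+\tfrac{2\varepsilon_0}{3}]$, and choose $\chi\in C^\infty(H;[0,1])$ equal to $1$ for $\ub\le\delta+\tfrac{\varepsilon_0}{3}$ and to $0$ for $\ub\ge\delta+\tfrac{2\varepsilon_0}{3}$. For candidate Kerr parameters $(m,\mathbf a)$ near $(m_0,\mathbf 0)$, let $(g^K_{m,\mathbf a},k^K_{m,\mathbf a})$ be the Cauchy data induced on $H$ by a smooth embedding of $H$ into Kerr, normalized so that $(g^K_{m_0,\mathbf 0},k^K_{m_0,\mathbf 0})=(\bar g_{m_0},\bar k_{m_0})$ and depending smoothly on $(m,\mathbf a)$ in $C^k$. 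Consider the candidate data
\[
(g_0,k_0):=\chi(\bar g,\bar k)+(1-\chi)(g^K_{m,\mathbf a},k^K_{m,\mathbf a}).
\]
So long as $|m-m_0|+|\mathbf a|\le C\delta^{1/2}$, both summands are $O(\delta^{1/2})$-close to $(\bar g_{m_0},\bar k_{m_0})$ in $C^k(\bar g_{m_0})$, hence so is $(g_0,k_0)$; moreover $(g_0,k_0)$ already solves the vacuum constraints outside $A$, while on $A$ the constraint map $\Phi(g,k)=\bigl(R(g)-|k|^2_g+(\mathrm{tr}_g k)^2,\ \mathrm{div}_g(k-(\mathrm{tr}_g k)g)\bigr)$ is of size $O(\delta^{1/2})$.

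\textbf{Linearized solve and absorption by Kerr.} I seek the exact correction, supported in $A$, in the Chrusciel--Delay form $(h,w)=D\Phi^*_{(\bar g_{m_0},\bar k_{m_0})}(N,X)$, where $(N,X)$ is a lapse--shift pair vanishing to high weighted order on $\partial A$. The self-adjoint operator $\mathcal L:=D\Phi\circ D\Phi^*$ at the Schwarzschild background, realized on these weighted spaces, is coercive modulo its kernel $\ker D\Phi^*$, which is exactly the Killing Initial Data (KID) space of a Schwarzschild slice and is four-dimensional---spanned by the stationary KID and the three rotational KIDs. The equation $\mathcal L(N,X)=-\Phi(g_0,k_0)$ is therefore solvable precisely when the right-hand side is $L^2$-orthogonal to this four-dimensional KID space, and the solution obeys $\|(h,w)\|_{C^k(\bar g_{m_0})}\lesssim\|\Phi(g_0,k_0)\|_{C^{k-2}(\bar g_{m_0})}=O(\delta^{1/2})$. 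The four Kerr parameters $(m,\mathbf a)$ supply exactly the freedom needed to annihilate the four KID obstructions: the linearization of $(m,\mathbf a)\mapsto\mathrm{Proj}_{\mathrm{KID}}\,\Phi(g_0,k_0)$ at $(m_0,\mathbf 0)$ is, up to explicit nonvanishing constants, the ADM mass/angular-momentum pairing with the KIDs, which is an isomorphism $\R\times\R^3\to(\mathrm{KID})^*$---the same mechanism as in \cite{L-Y}. The implicit function theorem then selects, for every sufficiently small $\delta$, a unique $(m,\mathbf a)$ with $|m-m_0|+|\mathbf a|=O(\delta^{1/2})$ for which the obstruction vanishes, and a Picard iteration on the quadratic nonlinear remainder of $\Phi$ upgrades the linearized correction to the exact vacuum solution $(\tilde g,\tilde k):=(g_0+h,k_0+w)$ satisfying the asserted $C^k$ bound.

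\textbf{Main difficulty.} The principal technical point to verify is that every constant entering the scheme---the norm of the Green's operator for $\mathcal L$ on the transverse complement of $\ker D\Phi^*$, the Lipschitz constant of the nonlinear remainder, and the invertibility of the Kerr-to-obstruction map---is independent of $\delta$. Since the reference background $(\bar g_{m_0},\bar k_{m_0})$, the collar $A$, the cutoff $\chi$, and the operator $\mathcal L$ are all fixed once $m_0$ and $\varepsilon_0$ have been chosen, and since only the inhomogeneity is $O(\delta^{1/2})$-small, these uniform bounds follow from the general Chrusciel--Delay theory applied to the (fixed) Schwarzschild interior background. The actual work in Section~\ref{gluing} lies in checking that that framework is compatible with our specific annular geometry \emph{inside} the black hole and in tracking the $C^k$-norms carefully through the nonlinear iteration.
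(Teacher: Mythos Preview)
Your outline is essentially the same strategy the paper follows: interpolate via a cutoff, apply Corvino--Schoen/Chru\'sciel--Delay to project the constraint error onto a four-dimensional cokernel, and absorb that cokernel by adjusting the Kerr parameters $(m,\mathbf a)$. Two points of comparison are worth noting.

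First, the paper uses a degree argument rather than the implicit function theorem to locate the correct $(m,\mathbf a)$. This is a minor difference---both work---but the degree argument spares you from verifying $C^1$ dependence of the corrector $(h,\omega)$ on $(m,\mathbf a)$.

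Second, and more substantively, your invocation of ``the ADM mass/angular-momentum pairing with the KIDs'' and ``the same mechanism as in \cite{L-Y}'' glosses over the one genuinely new computation. On the interior slice $r=r_0<2m_0$, the Killing field $\partial_t$ is \emph{spacelike and tangent} to $H$, so all four KIDs have the form $(0,X)$ with zero lapse component---unlike the asymptotic setting of \cite{L-Y}, where the stationary KID is $(1,0)$ and the obstruction integrals reduce to standard ADM boundary terms. The paper therefore computes the obstruction functionals $\mathcal I_\alpha(m,\mathbf a)$ directly: one linearizes $\mathrm{div}_{\tilde g}\tilde\pi$ at the Schwarzschild background, integrates by parts against each $(0,\Omega_\alpha)$ using that $\Omega_\alpha$ is Killing and that $\nabla\bar\pi_{m_0}=0$, and reduces everything to boundary integrals at $t=t_1,t_2$. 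The outcome is
\[
\mathcal I_0=8\pi(m-m_0)+\epsilon_0+O(\varepsilon^2),\qquad \mathcal I_i=-8\pi m_0\,a_i+\epsilon_i+O(\varepsilon^2),
\]
with $(\epsilon_\alpha)$ independent of $(m,\mathbf a)$. The nonvanishing of the coefficients $8\pi$ and $-8\pi m_0$ is what makes the Kerr-to-obstruction map invertible; it is verified by a direct calculation in the interior Schwarzschild geometry and is not simply inherited from \cite{L-Y}. Your proposal is correct, but this explicit computation is the actual content of the section and should not be waved away as a citation.
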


%It is easy then to extend $\Sigma_G$ to obtain an asymptotically flat complete Cauchy data. To extend inside, we choose two smooth functions $u=\Sigma_M(\ub)$ for $\ub\le0$ and $u=\Sigma_C(\ub)$ for $0\le\ub\le\delta$. We require that $\Sigma_C$ and $\Sigma_G$ can be patched smoothly at $\ub=\delta$ and $\Sigma_C$ and $\Sigma_M$ can be patched smoothly at $\ub=0$. We require that $\Sigma_M$ and $\Sigma_C$ are strictly decreasing so that the hypersurfaces $\ub=\Sigma_M(u)$ and $\ub=\Sigma_C(u)$ are spacelike. We also choose $\Sigma_M$ with constant derivative $-1$ so that the hypersurface $u=\Sigma_M(\ub)$ is smooth at the central line $\ub=u$.  Finally, we use again $\Sigma_M$ and $\Sigma_C$ to denote the hypersurfaces $\ub=\Sigma_M(u)$, $\ub=\Sigma_C(u)$ respectively. 

After deforming the initial data, we use again $\Sigma_{II}^+$ to denote this hypersurface $H$, that is, $\Sigma_{II}^+$ and $H$ are the same differential manifold, but equipped with different initial data. It is then easy to extend $\Sigma_{II}^+$ inside to obtain $\Sigma_{III}^+$ and $\Sigma_{IV}^+$, which are chosen to be in the form $u=f(\ub)$ for some decreasing function $f$. % We choose two smooth functions $u=\Sigma_{IV}^+(\ub)$ for $\ub\le0$ and $u=\Sigma_{III}^+(\ub)$ for $0\le\ub\le\delta$. We require that $\Sigma_{III}^+$ and $\Sigma_{II}^+$ can be patched smoothly at $\ub=\delta$ and $\Sigma_{III}^+$ and $\Sigma_{IV}^+$ can be patched smoothly at $\ub=0$. We require that $\Sigma_{III}^+$ and $\Sigma_{IV}^+$ are strictly decreasing so that the hypersurfaces $u=\Sigma_{III}^+(\ub)$ and $u=\Sigma_{IV}^+(\ub)$ are spacelike. We also choose $\Sigma_{IV}^+$ in the way that the hypersurface $u=\Sigma_{IV}^+(\ub)$ is smooth at the central line $\ub=u$.  Finally, we use again $\Sigma_{III}^+$ and $\Sigma_{IV}^+$ to denote the hypersurfaces $u=\Sigma_{III}^+(\ub)$, $u=\Sigma_{IV}^+(\ub)$ respectively.   
It is also easy to simply extend the initial data all the way up to the spatial infinity of the Kerr spacetime, to obtain $\Sigma_{I}^+$. Moreover, $\Sigma_{III}^+$, $\Sigma_{IV}^+$ and $\Sigma_{I}^+$ are chosen such that $\Sigma^+=\Sigma_{I}^+\cup\Sigma_{II}^+\cup\Sigma_{III}^+\cup\Sigma_{IV}^+$ is a smooth, complete asymptotically flat Cauchy data.%One interesting point is, notice that near the outer boundary of $\Sigma_G$, there should be a piece of $\Sigma_K$ which is also inside the Kerr black hole and $\Sigma_K$ should cross the event horizon. We denote the whole complete initial data by $\Sigma=\Sigma_M\bigcup\Sigma_C\bigcup\Sigma_G\bigcup\Sigma_K$. 

Let us consider the future development of $\Sigma^+$. It is clear that the maximal future development of $\Sigma_I^+$ is isometric to a Kerr spacetime, simply because $\Sigma_I^+$ is a Kerrian initial data. Since part of $\Sigma_I^+$ is included inside the black hole, the maximal future development of $\Sigma_I^+$ contains a full neighborhood of the whole future null infinity of the Kerr spacetime, up to the future timelike infinity. In particular, the maximal future development $\Sigma^+$ possesses a complete future null infinity. %In addition, the future development of $\Sigma_K$ contains a part which is inside the black hole. Notice that our construction does not rule out the possibility that $\Sigma_K$ is exactly Schwarzschildean, in this case, the future boundary of the maximal future development of $\Sigma_K$ contains a singular part which is exactly Schwarzschildean, and if $\Sigma_K$ is not exactly Schwarzschildean, then the future boundary of the maximal future development contains a part of smooth Cauchy horizon.

\subsection{Step 3: Construction of the spacetime outside the black hole} In this subsection, we carry out Step 3 in the construction.  In this step, we need to solve the vacuum Einstein equations starting from $\Sigma^+$ \emph{to the past}, up to some another Cauchy surface $\Sigma$ that does not intersect the black hole region. Let us denote this past solution by $g^*$. The regions $I$, $II$, $III$ and $IV$ in Figure \ref{fig:limei2} (also see Figure \ref{fig:limei3}) are solved sequentially as follows.

% In the above subsection, we have constructed a complete asymptotically flat initial data set $\Sigma$, the maximal future development of which contains a part we know what exactly it is. However, by our construction, there should be a closed trapped surface on $\Sigma$ so that $\Sigma$ is not suitable for playing the role of an actural initial state. We should ask can $\Sigma$ be evolved from another asymptotically Cauchy data without any closed trapped surfaces? At this stage, we only know $\Sigma_M\bigcup\Sigma_C\bigcup H$ (before deforming) is evolved from a characteristic initial data without closed trapped surfaces (even from the past null infinity). Therefore, we need to solve the vacuum Einstein equations from $\Sigma$ \textit{to the past}. We hope the backward solution will exist to some fixed past time, such that we can find some Cauchy hypersurface without closed trapped surfaces. 

\begin{figure}[htbp]
\centering
\includegraphics [width=4 in]{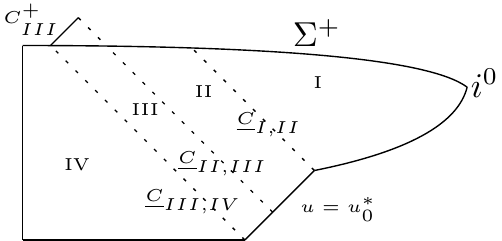}  
\caption{}
 \label{fig:limei3} 
\end{figure}

\subsubsection{Region $I$}\label{regionI} Because $\Sigma_I^+$ is exactly a Cauchy hypersurface in a Kerr spacetime, then the solution $g^*$ in its past domain of dependence must be isometric to the Kerr spacetime. Denote $\Cb_{I,II}$ be the part of the boundary of the past of $\Sigma_I^+$ Moreover, by Theorem \ref{thm:gluing}, the angular momentum per unit mass $a$ satisfies 
$$|a|\le C\delta^{\frac{1}{2}}$$
for some constant $C$ independent of $\delta$. This implies that, given any large number, say $A>0$, if $\delta$ is sufficiently small, those past null geodesic generators of the boundary of the past of $\Sigma_I^+$, have affine lengths larger than $A$, measuring from $\Sigma^+_I\cap\Sigma^+_{II}$,  before any conjugate or cut points. Let us denote $\Cb_{I,II}$ be the truncated null cone consisting of these past null geodesic generators with sufficiently large affine lengths (which will be  determined in the next subsection) before any conjugate or cut points. Moreover,  the data induced on $\Cb_{I,II}$, is $\delta^{\frac{1}{2}}$-close to a null cone in a Schwarzschild spacetime with mass $m_0$ in a suitable topology.

\subsubsection{Region $II$} 
Recall that we have fixed the background Schwarzschild metric $g_{m_0}$ already defined in the region $\delta\le\ub\le\delta+\varepsilon_0$, $u_0\le u\le u_1$ in the form \eqref{gm0}. This Schwarzschild metric $g_{m_0}$ can be extended smoothly (in fact analytically) to the region $\delta\le\ub\le\delta+\varepsilon_0$, $-\infty< u\le u_1$ where the functions $\ub,u$, being optical relative to $g_{m_0}$, are extended smoothly to the past null infinity. The truncated null cone $\Cb_{I,II}$ is then identified to the null cone $\Cb_{\delta+\varepsilon_0}$. By Theorem \ref{thm:gluing}, the data induced on $\Sigma_{II}^+$ is $\delta^{\frac{1}{2}}$-close to the data induced by the background Schwarzschild metric $g_{m_0}$. For the data induced on the truncated null cone $\Cb_{I,II}$, given any $u_0^*<0$ with large absolute value, we can choose the number $A$ in the above subsection \ref{regionI}, and $\delta$ sufficiently small, such that the past null generators of $\Cb_{I,II}$ have no conjugate or cut points before $u=u_0^*$, and the data is $\delta^{\frac{1}{2}}$-close to the data induced by $g_{m_0}$ on the corresponding part of $\Cb_{\delta+\varepsilon_0}$. By Cauchy stability, the solution $g^*$ in the past domain of dependence of $\Sigma_{II}^+\cup\Cb_{I,II}$ exists in the region $\delta\le\ub\le\delta+\varepsilon_0$, $u_0^*\le u\le\Sigma_{II}^+(\ub)$, the region $II$. The functions $\ub,u$ are still optical relative to $g^*$. Moreover, $g^*$ and $g_{m_0}$ are $\delta^{\frac{1}{2}}$-close to each other.

\subsubsection{Region $III$} From the construction of region $II$, we know that the data induced on the null cone $\Cb_{II,III}$ is $\delta^{\frac{1}{2}}$-close to the data induced by the background Schwarzschild metric $g_{m_0}$. And the data induced on $\Sigma_{III}^+$ satisfies the estimates established in Step 1. We will solve the vacuum Einstein equations with these two parts of initial data sets. It would be better to take the data induced on the future boundary of the future of $\Sigma_{III}^+$ as part of the initial data instead of the data induced on $\Sigma_{III}^+$ itself because the problem can then be formulated in terms of a characteristic initial value problem. The future boundary of the future of $\Sigma_{III}^+$ consists of a null cone, denoted by $C_{III}^+$, emanating from $\Sigma_{III}^+\cap\Sigma_{IV}^+$ and the part of $\Cb_{\delta}$ in the future of $\Sigma_{III}^+$. Assume that $u=u_1^*$ on $C_{III}^+$. Then region $III$ can be solved to the past starting from the characteristic initial data on $\Cb_{\delta}$ and $C_{u_1^*}$. The data induced on $\Cb_{\delta}$ is $\delta^{\frac{1}{2}}$-close to the data induced by $g_{m_0}$, and the data induced on $C_{u_1^*}$ still satisfies the estimates established in Step 1 because the data induced on $\Sigma_{III}^+$ does not change after the local deformation in Step 2.

The existence of the solution is established in Section \ref{aprioriestimates}, in which Theorem \ref{step1} will also be proved. The existence result can be summarized as
\begin{theorem}\label{step3}
Given  $u_0^*<0$ with large absolute value. With the initial data on $\Cb_{\delta}$ and $C_{u_1^*}$, if $\delta$ is sufficiently small, the past solution of the vacuum Einstein equation exists for $0\le\ub\le\delta, u_0^*\le u\le u_1^*$.
\end{theorem}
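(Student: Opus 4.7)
The plan is to set up a standard double null bootstrap in the rectangle $0\le\ub\le\delta$, $u_0^*\le u\le u_1^*$, with $\ub,u$ optical relative to the past solution $g^*$ emanating from the characteristic data on $\Cb_\delta$ and $C_{u_1^*}$. First I would introduce the usual Ricci coefficients $\chi,\chib,\eta,\etab,\omega,\omegab,\Xi,\Xib$ and curvature components $\alpha,\beta,\rho,\sigma,\betab,\alphab$ adapted to this foliation, and record the $\delta$-weights suggested by the modified short pulse hierarchy \eqref{hierarchy}, which essentially means that each quantity is assigned the weight obtained by taking the worse of Christodoulou's weight in \eqref{hierarchyChr} and the Schwarzschildean weight in \eqref{hierarchyclosetoS}. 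The bootstrap assumptions would be an $L^\infty$ bound on the curvature components at the level of \eqref{hierarchy}, together with the corresponding $L^\infty$ bounds on Ricci coefficients (with $\chih$ of size $\delta^{-1/2}$ reflecting the short pulse and $\trchi-\trchi_{\text{Sch}}$, $\chibh$, $\etab$, etc. of size $\delta^{1/2}$ on the Schwarzschildean side).

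Next I would close the bootstrap in the usual two tiers. In the first tier I would integrate the null structure equations along the $\ub$- or $u$-direction to convert curvature bounds into bounds on Ricci coefficients. Unlike in \cite{Chr08}, the integration along $\Cb_\delta$ begins with Schwarzschildean-close (rather than Minkowskian-close) data, so each transport estimate must be written as the Schwarzschild value plus a correction estimated through the modified hierarchy; in particular the propagation equations for $\trchi, \chih, \trchib, \chibh, \eta, \etab$ must be handled so that the Schwarzschild $m_0$-mass contribution is tracked explicitly and the error terms recovered with their expected $\delta$-weights. I would do the same at the level of one and two tangential derivatives, commuting with $\ds$ and angular rotation vector fields (modelled on a background Kerr/Schwarzschild rotation) to get the $L^4(S)$ and $L^2(S)$ bounds needed to feed the energy estimates. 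A key methodological point inherited from the first author's thesis is to avoid all elliptic estimates on the spheres $S_{\ub,u}$: the $L^2(S)$ norms of the top derivatives of $\chih$ and $\chibh$ are to be recovered directly from transport equations after commuting with $\Lie$ operators, not from Codazzi/Hodge systems.

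The second tier is the energy estimate for the curvature components. Using the Bianchi pairs $(\alpha,\beta)$, $(\beta,(\rho,\sigma))$, $((\rho,\sigma),\betab)$, $(\betab,\alphab)$, I would multiply each pair by the appropriate $\delta$-weights and integrate over the bootstrap region, pairing the outgoing fluxes on $C_u$ with the incoming fluxes on $\Cb_{\ub}$. The left-hand sides produce the $L^2(\Cb_{\ub})$ norm of $(\alpha,\beta,\rho,\sigma,\betab)$ and the $L^2(C_u)$ norm of $(\beta,\rho,\sigma,\betab,\alphab)$, weighted according to \eqref{hierarchy}; the right-hand side consists of triple integrals of Ricci $\times$ curvature $\times$ curvature. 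I would commute with up to two derivatives (tangential rotations plus $\Lh,\Lbh$) to reach the level of regularity needed to close the $L^\infty$ bootstrap via Sobolev embedding on each $S_{\ub,u}$. The initial values of the fluxes on $\Cb_\delta$ are $O(\delta^{1/2})$ by the Schwarzschildean closeness proved in Step 1 plus the transport from $\Cb_{II,III}$, while those on $C_{u_1^*}$ satisfy the Christodoulou-type bounds, so both fit the hierarchy on input.

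The main obstacle, and the place that departs from \cite{Chr08}, will be the identification and control of the borderline terms. With $\alphab$ and $\betab$ enlarged to $O(\delta^{1/2})$, integrals of the form $\int\trchi\,\alphab\cdot\alphab$, $\int\chih\cdot\alphab\,\betab$, and the analogous ones involving $\eta,\etab$, no longer have a free factor of $\delta$ to spare, and the usual Christodoulou counting must be redone. The strategy is to reserve the good direction of integration for each such term: $\alphab$-borderlines are absorbed into the $\Cb_{\ub}$-flux via Gronwall in $u$, while $\alpha$- and $\beta$-borderlines remain governed exactly as in \cite{Chr08} and are absorbed along $C_u$. When the $\delta$-weights from \eqref{hierarchy} are tabulated on both sides, every nonlinear product either picks up a positive power of $\delta$ (and is treated as a perturbation) or is a genuine borderline to be closed by Gronwall in one of the two null directions; the proof is designed so that no term receives a negative power of $\delta$. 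Once the bootstrap is closed, a standard continuity argument in the size of the bootstrap rectangle, combined with the local existence theorem for the characteristic Cauchy problem, extends the solution to the full range $0\le\ub\le\delta$, $u_0^*\le u\le u_1^*$, which is the content of Theorem \ref{step3} (and, specialising to the Schwarzschildean initial data, of Theorem \ref{step1}).
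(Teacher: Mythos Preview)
Your outline captures the overall strategy that the paper follows---a bootstrap built on the modified hierarchy \eqref{hierarchy}, transport estimates for connection coefficients, Bianchi-pair energy estimates for curvature, and avoidance of elliptic estimates---but several of the specifics you sketch differ materially from what the paper actually does.

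First, the paper's proof of Theorem \ref{step3} proper is a single line: one simply time-reverses Theorem \ref{existence} (the general existence theorem for data on $C_{u_0}\cup\Cb_0$) by replacing $(\mathcal{O}_k^{(0)},\mathcal{R}_k^{(0)})$ with the corresponding norms $(\mathcal{O}_k^{(1)},\mathcal{R}_k^{(1)})$ on $C_{u_1^*}\cup\Cb_\delta$. All the substance lives in Theorem \ref{existence} and its proof (Propositions \ref{curvatureconnection} and \ref{curvature}).

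Second, the paper does \emph{not} renormalise against Schwarzschild in the existence argument. The only subtraction is $\widetilde{\tr\chib}=\tr\chib+2/|u|$, i.e.\ the Minkowskian value; Schwarzschild closeness is proved separately (Proposition \ref{prop:smallness}) and is not needed for existence. Tracking the $m_0$-dependent values throughout, as you propose, would work but is an unnecessary complication. Likewise, the paper commutes only with the tangential operators $\nablas^i$ (working directly in $H^k(S_{\ub,u})$ norms), not with rotation vector fields or with $\Lh,\Lbh$; the bootstrap is at the level of $\mathcal{O}_k,\mathcal{R}_k,\underline{\mathcal{R}}_k$ (weighted $H^k$ and $L^2_{\ub}H^k$, $L^2_u H^k$ norms), not $L^\infty$ bounds on curvature.

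Third---and this is the most substantive point---your identification and handling of the borderline terms is off. The genuine borderlines in the modified hierarchy are the boxed terms $\chibh\cdot\alpha$ in $D\rho$, $D\sigma$ and $\chih\cdot\alphab$ in $\Db\rho$, $\Db\sigma$ (not $\tr\chi\,\alphab^2$ or $\chih\,\alphab\,\betab$, which either carry an extra $\delta^{1/2}$ or do not arise in this form). They are \emph{not} absorbed by Gronwall. Instead the paper exploits a reductive order: the pairs $(\alpha,\beta)$ and $(\betab,\alphab)$ are estimated first and contain no borderlines; their bounds \eqref{estimatealphabeta}, \eqref{estimatebetabalphab} are then substituted into the $(\beta,(\rho,\sigma))$ estimate, turning the cubic borderline $(\mathcal{O}_k^{(0)}+\underline{\mathcal{R}}_k[\alphab]+1)\mathcal{R}_k[\alpha]\,\underline{\mathcal{R}}_k[\rho,\sigma]$ into a \emph{sublinear} inequality in $\underline{\mathcal{R}}_k[\rho,\sigma]$, which then closes; and similarly for $((\rho,\sigma),\betab)$. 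Your Gronwall-based description would not, as stated, close these terms. Finally, you have the flux directions reversed: $\alpha,\beta,\rho,\sigma,\betab$ are controlled on $C_u$ (the $L^2_{\ub}H^k$ norms) and $\beta,\rho,\sigma,\betab,\alphab$ on $\Cb_{\ub}$ (the $L^2_u H^k$ norms), not the other way round.
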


\subsubsection{Region $IV$} The last step is to solve region $IV$, considered to be the past development of the initial data induced on $\Sigma_{IV}^+$ and $\Cb_{III,IV}$, the part of $\Cb_0$ where $u_0^*\le u\le u_1^*$. The data induced on $\Sigma_{IV}^+$ is in fact Minkowskian and the data induced on $\Cb_{III,IV}$ satisfies the estimates established in solving region $III$. The data induced on $\Cb_{III,IV}$ is a priori large so the existence of the solution in the whole region $IV$ is not quite clear. To overcome this difficulty, a key observation is that the data on $\Cb_{III,IV}$ can in addition be proved to be close to being Minkowskian. To state this precisely, let us extend the optical functions $\ub,u$ to region where $u_0^*\le u\le \ub\le 0$, and define the background Minkowski metric $g_0$ in the way that it is written in this region as 
$$g_0=-4\D\ub\D u+r^2\D\sigma_{\mathbb{S}^2}$$
with $r=\ub-u$. Then we have the following theorem.
\begin{theorem}\label{thm:closetoM}
If $\delta$ is sufficiently small, then the data induced on $\Cb_{III,IV}$ is $\delta^{\frac{1}{2}}$-close to the data induced from the Minkowski metric $g_0$ in a sufficient regular sense.
\end{theorem}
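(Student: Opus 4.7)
The plan is to compare the solution $g^*$ in region $III$ with the Step~1 solution $g$ of Theorem~\ref{step1}, which, by the short pulse construction in Section~\ref{section:step1}, carries exactly Minkowskian data on $\Cb_0$. Indeed, the seed data $\psi_0$ is compactly supported in $(0,1)\times S_{0,u_0}$, so that $\psi(0,\vartheta)=0$; the pre-pulse portion of $C_{u_0}$ (where $\ub\le 0$) is exactly Minkowski, and its Cauchy development fills a Minkowskian region whose future null boundary contains $\Cb_0$. Hence in $g$ the induced data (intrinsic metric, null second fundamental forms, torsion, etc.) on $\Cb_0\cap\{u_0\le u\le u_1\}$ is exactly the Minkowskian data of $g_0$. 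Choosing parameters so that $u_0\le u_0^*<u_1^*\le u_1$, both $g$ and $g^*$ are defined on the common rectangle $\{0\le\ub\le\delta,\ u_0^*\le u\le u_1^*\}$, and it suffices to prove that $g^*$ is $\delta^{\frac{1}{2}}$-close to $g$ on $\Cb_0$.

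The two characteristic data sets for $g$ and $g^*$ agree exactly on $C_{u_1^*}$, since the short pulse data was not modified by the Corvino--Schoen gluing in Step~2. On $\Cb_\delta$ they differ by $\delta^{\frac{1}{2}}$ in the relevant $C^k$-norm, since each is itself $\delta^{\frac{1}{2}}$-close to the Schwarzschild data of $g_{m_0}$ (the data for $g$ by Theorem~\ref{step1}, the data for $g^*$ by Cauchy stability in region $II$ combined with Theorem~\ref{thm:gluing}). I would propagate this $\delta^{\frac{1}{2}}$-closeness into the rectangle by rerunning the bootstrap argument of Section~\ref{aprioriestimates} on the differences of all connection coefficients and curvature components. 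Since both $g$ and $g^*$ satisfy the modified short pulse hierarchy~\eqref{hierarchy}, every nonlinear coefficient appearing in the difference equations is controlled a priori, and the resulting weighted energy/transport estimates with $\delta^{\frac{1}{2}}$-data on $\Cb_\delta$ and vanishing data on $C_{u_1^*}$ give $\delta^{\frac{1}{2}}$-control of the differences throughout the rectangle, and in particular on $\Cb_0$.

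The main obstacle is the uniform $\delta$-control in this linearized bootstrap: the weighted norms of Section~\ref{aprioriestimates} carry widely separated powers of $\delta$, from $\delta^{-3/2}$ for $\alpha$ down to $\delta^{1/2}$ for $\alphab,\betab$, so absorbing the nonlinear coefficients forces a careful re-examination of every borderline term identified in Section~\ref{moreonregionIII}. A cleaner alternative that sidesteps the full linearization is to integrate the null structure equations for $\chib$, $\tr\chib$, $\etab$, $\omegab$, and $\widehat{\gs}$ directly along the incoming generators of $\Cb_0$ in the $u$-direction, starting from the corner sphere $S_{0,u_1^*}$, whose induced geometry is already Minkowskian by the argument of the first paragraph. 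The sources in these transport equations are controlled by the a priori bounds of Section~\ref{aprioriestimates}, and on $\Cb_0$ the short-pulse-large components $\alpha,\beta$ do not enter the transport in the $u$-direction; integration along generators of bounded affine length then yields the $\delta^{\frac{1}{2}}$-closeness to Minkowski in any desired $C^k$-norm, with $k$ absorbed into the number of commutations already budgeted in Section~\ref{aprioriestimates}.
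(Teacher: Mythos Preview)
Your ``cleaner alternative'' has the right geometric setup---start from the Minkowskian corner sphere $S_{0,u_1^*}$ and integrate in $u$ along $\Cb_0$---but it integrates the wrong equations. The quantities $\chibh,\widetilde{\tr\chib},\etab,\omegab$ already carry $\delta^{1/2}$ bounds directly from the a priori estimates of Theorem~\ref{existence}; nothing is gained by integrating their $\Db$-equations again. What actually needs work on $\Cb_0$ are the \emph{large} components $\chih,\tr\chi,\omega,\alpha,\beta,\rho,\sigma$, which a priori are only bounded by $\delta^{-1/2}$ or worse. Your remark that ``$\alpha,\beta$ do not enter the transport in the $u$-direction'' is true for the $\Db$-equations of the underlined quantities, but it does not make $\alpha,\beta,\chih,\rho,\sigma$ themselves small on $\Cb_0$, and it is exactly their smallness that is required to say the induced data is $\delta^{1/2}$-close to Minkowski.

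The paper's proof is precisely this missing step: it invokes Proposition~\ref{prop:smallness} (the ``hidden smallness'' proposition) with $m=0$, in its time-reversed form with the initial sphere at $u_1^*$. That proposition integrates the $\Db$-equations for the \emph{large} components $\chih,\tr\chi,\omega,\alpha,\rho$ along $\Cb_0$; the sources in those equations are built from the already-small quantities $\eta,\etab,\chibh,\omegab,\betab,\alphab$, so the $\delta^{1/2}$-smallness propagates from $S_{0,u_1^*}$ to all of $\Cb_{III,IV}$, with $\beta,\sigma$ then recovered from the Codazzi equations. Your first approach (difference estimates between $g^*$ and the Step~1 solution $g$) would also succeed, but is a substantial detour compared with the one-line appeal to Proposition~\ref{prop:smallness}.
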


This theorem will also be proved in Section \ref{aprioriestimates}. Then by Cauchy stability, if $\delta$ is sufficiently small, the solution can be solved in  the region $IV$, where $\ub\le0$, $\ub+u\ge u_0^*$. The solution is still $\delta^{\frac{1}{2}}$-close to the Minkowski metric $g_0$. Note that $\ub$ and $u$ are not necessarily optical functions relative to the solution $g^*$.

\subsubsection{The Cauchy hypersurface $\Sigma$} At last, we only need to pick a Cauchy hypersurface $\Sigma$ in the solution $g^*$ such that no points in $\Sigma$ is contained in the black hole region. Let us also denote by $g^*$ the maximal (future and past) development of $\Sigma^+$.  It is clear that the maximal future development of $\Sigma^+$ contains a complete future null infinity and a closed trapped surface. Therefore $g^*$ contains a black hole region $\mathcal{B}$. From the above subsections, the maximal past development of $\Sigma^+$ contains  regions $I$, $II$, $III$ and $IV$. The size of these regions is characterized by the number $u^*_0$: For any $u^*_0<0$ with large absolute value, the regions $I$, $II$, $III$ and $IV$ can be solved if $\delta$ is chosen sufficiently small. Then we can pick another Cauchy hypersurface $\Sigma$ in the following way (see Figure \ref{fig:limei4}):
\begin{itemize}[leftmargin=0.5cm]
%\item The part of $\Sigma$ in region $I$, denoted by $\Sigma_{I}$, is chosen to be some constant $t$ slice in the Kerr spacetime in the Boyer–Lindquist coordinate, such that the intersection of $\Sigma_I$ and $\Cb_{I,II}$ is the spherical section $S_{\delta+\varepsilon_0,u_0^*}$ where $\ub=\delta+\varepsilon_0$, $u=u_0^*$.
\item The part of $\Sigma$ in region $IV$ is chosen to be a spacelike hypersurface expressed in the equation $\ub+u=u_0^*+1$, where $\frac{1}{2}(u_0^*+1)\le\ub\le0$. It intersects the timelike curve $\ub=u$ at $\ub=u=\frac{1}{2}(u_0^*+1)$ and $\Cb_{III,IV}$, i.e., $\ub=0$ at the sphere $\ub=0, u=u_0^*+1$, which is to the future of the null cone $u=u_0^*$. Because $g^*$ is $\delta^{\frac{1}{2}}$-close to $g_0$ and $\Sigma_{IV}$ is spacelike relative to $g_0$, then if $\delta$ is sufficiently small, $\Sigma_{IV}$ is spacelike relative to $g^*$.
\item The part of $\Sigma$ in regions $II$ and $III$, is chosen to be expressed in an equation of the form $u=h(\ub)$, $0\le\ub\le\delta+\varepsilon_0$. $h$ should satisfy the following properties: (i) $h'<0$, which is equivalent to  that $\Sigma$ is spacelike; (ii) $h^{(n)}(0)=-1$ for all positive integer $n$, which is equivalent to that $\Sigma$ is smooth across the boundary of the previous part.
\item The part of $\Sigma$ in region $I$, the region isometric to the Kerr spacetime, is chosen in the way that it is spacelike all the way up to the spatial infinity, and smoothly connected to the previous part of $\Sigma$.
\end{itemize}
Obviously the hypersurface $\Sigma$ constructed above is a Cauchy hypersurface of $g^*$, which contains a black hole region $\mathcal{B}$.  The last thing is to prove 
\begin{proposition}$\Sigma$ contains no points in the black hole region, i.e., $\Sigma\cap \mathcal{B}=\varnothing$, if  $u_0^*$ has sufficiently large absolute value.
\end{proposition}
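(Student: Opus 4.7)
The plan is to show $\Sigma \subseteq J^-(\mathcal{I}^+)$, equivalent to $\Sigma \cap \mathcal{B} = \varnothing$, by exhibiting for each $p \in \Sigma$ a future-directed causal curve from $p$ that reaches the asymptotic Kerr exterior of region $I$. Since the maximal future development of $\Sigma_I^+$ is exactly a Kerr spacetime whose future null infinity is part of $\mathcal{I}^+$ of $M$, the entire Kerr exterior $\{r > r_+(m,\mathbf{a})\}$ is contained in $J^-(\mathcal{I}^+)$. The role of the large parameter $|u_0^*|$ is to push the relevant \emph{escape spheres} on $\Cb_{I,II}$ deep into this Kerr exterior.

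Define the family of escape spheres $S_{\delta+\varepsilon_0, u} \subset \Cb_{I,II}$ for $u \le u_0^* + 2$. The data on $\Cb_{I,II}$ is $\delta^{1/2}$-close to that of the Schwarzschild background $g_{m_0}$ on $\Cb_{\delta+\varepsilon_0}$, and region $I$ is exactly Kerr with mass near $m_0$ and $|\mathbf{a}| \le C\delta^{1/2}$. Hence the area radius of $S_{\delta+\varepsilon_0, u}$ is comparable to $|u|$, whereas the Kerr event horizon in region $I$ sits at $r_+(m,\mathbf{a}) = 2m_0 + O(\delta^{1/2})$. Thus for $|u_0^*|$ sufficiently large and $\delta$ sufficiently small, every escape sphere lies in the Kerr exterior, hence in $J^-(\mathcal{I}^+)$.

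Routing $\Sigma$ to escape spheres. The region-$I$ part of $\Sigma$ is chosen as a spacelike extension up to $i^0$ of the boundary point $(\delta+\varepsilon_0, h(\delta+\varepsilon_0))$ on $\Cb_{I,II}$; since $h(\delta+\varepsilon_0) < h(0) = u_0^* + 1$, this boundary point lies at an escape sphere and we may keep the entire extension inside the Kerr exterior. For $p$ in the regions-$II,III$ part of $\Sigma$, the function $u$ is an optical function for $g^*$, so the outgoing null cone $C_{u(p)}$ through $p$ is an honest $g^*$-null hypersurface whose future-directed generators reach $\Cb_{I,II}$ at $S_{\delta+\varepsilon_0, u(p)}$; since $u(p) = h(\ub(p)) \le h(0) = u_0^* + 1$, this is an escape sphere, placing $p$ in $J^-(\mathcal{I}^+)$.

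The main obstacle is the region-$IV$ part of $\Sigma$, where $(\ub, u)$ are no longer optical for $g^*$, so we cannot simply travel along $\{u=\mathrm{const}\}$ as a null cone. To handle it, I exploit the $\delta^{1/2}$-closeness of $g^*$ to the Minkowski background $g_0 = -4 \, \mathrm{d}\ub \, \mathrm{d}u + r^2 \, \mathrm{d}\sigma_{\mathbb{S}^2}$ provided by Theorem \ref{thm:closetoM}. For $p \in \Sigma \cap IV$, with $\ub(p) \in [\frac{1}{2}(u_0^*+1), 0]$ and $u(p) + \ub(p) = u_0^* + 1$, I consider the curve $\gamma(\lambda) = (\lambda, u(p) + \varepsilon(\lambda - \ub(p)))$ for $\lambda \in [\ub(p), 0]$ with angular coordinates fixed and small $\varepsilon > 0$. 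Its tangent $\partial_{\ub} + \varepsilon \partial_u$ has $g_0$-squared-norm $-4\varepsilon$, so $\gamma$ is $g^*$-timelike once $\varepsilon \gg \delta^{1/2}$. Choosing $\varepsilon$ of order $1/|u_0^*|$ (compatible with $\varepsilon \gg \delta^{1/2}$ for $\delta$ small enough depending on $u_0^*$), the endpoint on $\Cb_{III,IV}$ has $u$-coordinate $u_{\mathrm{end}} = u(p) - \varepsilon\,\ub(p) \le u_0^* + 2$, so by applying the previous paragraph to this endpoint (lying on the $g^*$-null cone $C_{u_{\mathrm{end}}}$ in region $III$), it lies in $J^-(\mathcal{I}^+)$, and hence so does $p$. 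Altogether $\Sigma \subseteq J^-(\mathcal{I}^+)$, as required.
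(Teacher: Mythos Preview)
Your overall strategy is correct and close in spirit to the paper's: exhibit, for every $p\in\Sigma$, a future causal curve reaching a sphere on $\Cb_{I,II}$ whose area radius is large enough to lie in the Kerr exterior. There is, however, one arithmetic slip. From $\ub(p)+u(p)=u_0^*+1$ with $\ub(p)\in[\tfrac12(u_0^*+1),0]$ you get $u(p)=u_0^*+1-\ub(p)\in[u_0^*+1,\tfrac12(u_0^*+1)]$, so the endpoint $u_{\mathrm{end}}=u(p)-\varepsilon\,\ub(p)$ of your region-$IV$ curve can be as large as roughly $\tfrac12(u_0^*+1)$, not $\le u_0^*+2$ as you claim. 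This does not break the argument: $\tfrac12(u_0^*+1)$ is still far below $u_1^*$ (so the continuation through regions $III$ and $II$ along $C_{u_{\mathrm{end}}}$ exists) and $|u_{\mathrm{end}}|\ge \tfrac12|u_0^*+1|-O(1)$ is still large compared to $2m_0$, so the corresponding sphere on $\Cb_{I,II}$ is in the Kerr exterior. You should simply relax your definition of ``escape spheres'' to all $S_{\delta+\varepsilon_0,u}$ with $|u|$ bounded below by a fixed multiple of $|u_0^*|$; the smallness of $\varepsilon$ then plays no role and can be taken to be any fixed constant (say $\varepsilon=\tfrac12$), provided $\delta$ is taken small afterwards.

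By comparison, the paper avoids the piecewise construction by using a single one-parameter family of hypersurfaces $u=\tfrac12\ub+\tfrac14(u_0^*+1)+c$ sweeping out all of $\Sigma$ in regions $II$, $III$, $IV$. These are timelike in regions $II$, $III$ because $u,\ub$ are optical there (the normal $\nabla u-\tfrac12\nabla\ub$ has positive $g^*$-norm), and timelike in region $IV$ by $\delta^{1/2}$-closeness to $g_0$. Each hypersurface meets $\Cb_{I,II}$ at a sphere with $u\le\tfrac12(\delta+\varepsilon_0)+\tfrac14(u_0^*+1)$, which again sits in the Kerr exterior for $|u_0^*|$ large. The paper's route is more uniform; yours exploits the optical nature of $u$ in regions $II$, $III$ to use honest null cones there, paying the price of a separate argument in region $IV$. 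Both yield the same conclusion.
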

\begin{figure}[htbp]
\centering
\includegraphics [width=4 in]{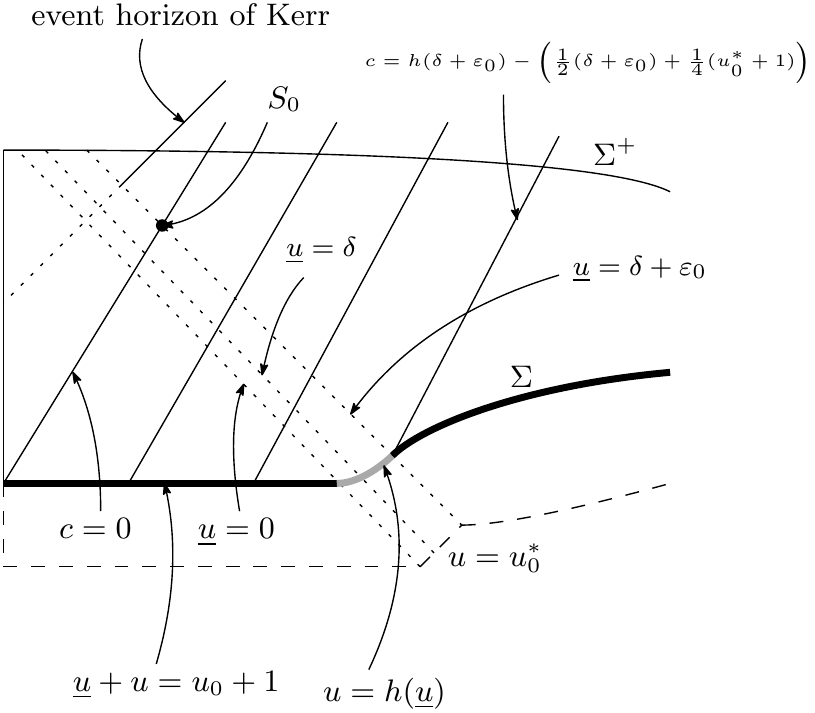}  
\caption{}
 \label{fig:limei4} 
\end{figure}
\begin{proof}
 Consider the $c$-parametrized family of hypersurfaces $$u=\frac{1}{2}\ub+\frac{1}{4}(u_0^*+1)+c$$ for $\frac{1}{2}(u_0^*+1)\le \ub\le\delta+\varepsilon_0$. For any 
 $$c\in\left[h(\delta+\varepsilon_0)-\left(\frac{1}{2}(\delta+\varepsilon_0)+\frac{1}{4}(u_0^*+1)\right),0\right],$$
  the hypersurface intersects $\Sigma$ in regions $II$, $III$ or $IV$.  The hypersurface $c=h(\delta+\varepsilon_0)-\left(\frac{1}{2}(\delta+\varepsilon_0)+\frac{1}{4}(u_0^*+1)\right)$ intersects $\Sigma$ at the sphere $\ub=\delta+\varepsilon_0, u=h(\delta+\varepsilon_0)$, and the hypersurface $c=0$  is a cone with vertex $\ub=u=\frac{1}{2}(u_0^*+1)$ on $\Sigma$.  These hypersurfaces are timelike in region $IV$ relative to $g_0$, so if $\delta$ is sufficiently small, they are also timelike relative to $g^*$. They are also timelike in regions $II$ and $III$ where $\ub,u$ are optical functions. 
  
  The intersection of the timelike hypersurface with parameter $c$ and $\Cb_{I,II}$, i.e., $\ub=\delta+\varepsilon_0$, is the sphere 
  $\ub=\delta+\varepsilon_0, u=\frac{1}{2}(\delta+\varepsilon_0)+\frac{1}{4}(u_0^*+1)+c.$ In particular, the intersection of the hypersurface $c=0$ and $\Cb_{I,II}$ is the sphere $S_0:\ub=\delta+\varepsilon_0, u=\frac{1}{2}(\delta+\varepsilon_0)+\frac{1}{4}(u_0^*+1)$.  It is clear that  if $u_0^*$ has sufficiently large absolute value (note that $m_0$ is however fixed), the radial function $r_{m_0}$ of the background Schwarzschild metric $g_{m_0}$ is larger than $2m_0+1$ at $S_0$ and hence $S_0$ belongs to the domain of outer communication of $g_{m_0}$. Then by the $\delta^{\frac{1}{2}}$-closeness of the data induced on $\Cb_{I,II}$ of the constructed Kerr metric to the background Schwarzschild metric $g_{m_0}$, $S_0$ still belongs to the domain of outer communication of the exterior Kerr spacetime\footnote{In fact, the Kretschmann scalar $\mathbf{R}_{\alpha\beta\mu\nu}\mathbf{R}^{\alpha\beta\mu\nu}$ relative to $g_{m_0}$ at $S_0$ is smaller than $\frac{48m_0^2}{(2m_0+1)^6}$, then if $\delta$ is sufficiently small, the Kretschmann scalar relative to the constructed Kerr metric at $S_0$ is still smaller than $\frac{48m_0^2}{(2m_0+1)^6}$. On the other hand, if $\delta$ is sufficiently small, the Kretschmann scalar relative to the constructed Kerr metric at the points inside the Kerr black hole and  before $\Sigma^+$ should be larger than $\frac{48m_0^2}{(2m_0)^6}-O(\delta^{\frac{1}{2}})$.}. Therefore, the intersections of the $c$-parametrized timelike hypersurfaces and $\Cb_{I,II}$, which are at the past of $S_0$, also belong to the domain of outer communication of the exterior Kerr spacetime.
  
  So, for every point on the part of $\Sigma$ in regions $II$, $III$ and $IV$, there exists a timelike curve connecting this point and some other point on the sphere $\ub=\delta+\varepsilon_0, u=\frac{1}{2}(\delta+\varepsilon_0)+\frac{1}{4}(u_0^*+1)+c$ for some $c$. On the other hand, every point on the part of $\Sigma$ in region $I$ belongs to the domain of outer communication of the Kerr spacetime. Therefore, for every point on $\Sigma$, there exists a timelike curve starting from this point and extending to the future null infinity. This implies that $\Sigma\cap\mathcal{B}=\varnothing$.
 
 \end{proof}

The proof of the Main Theorem is then completed. In the following sections, we will give the proofs of Theorems \ref{step1}, \ref{thm:gluing}, \ref{step3} and \ref{thm:closetoM}.

\section{Proofs of Theorems \ref{step1}, \ref{step3} and \ref{thm:closetoM}}\label{aprioriestimates}

In this section, we study a characteristic initial value problem with initial data given on two intersecting null cone emanating from a sphere. We will prove the existence of the solution of the vacuum Einstein equations by proving appropriate  a priori estimates. Moreover, we will derive some improved estimates based on condition \eqref{integrate=m0}. Then Theorems \ref{step1}, \ref{step3} and \ref{thm:closetoM} will follow immediately. 

\subsection{The characteristic initial value problem and the existence theorem}
Recall that in a spacetime $(M,g)$ we denote $u$ and $\ub$ be two optical functions on $M$, that is $$g(\nabla u,\nabla u)=g(\nabla\ub,\nabla\ub)=0.$$
 $M$ is then foliated by the level sets of $\ub$ and $u$ respectively, and the functions $u$ and $\ub$ increase towards the future. We use $C_u$ to denote the outgoing null hypersurfaces which are the level sets of $u$ and use ${\Cb}_{\ub}$ to denote the incoming null hypersurfaces which are the level sets of $\ub$. We denote the intersection $S_{\ub,u}=\Cb_{\ub} \cap C_u$, which is a  spacelike two-sphere for each pair of $\ub, u$.

The lapse function $\Omega$ is defined by the formula
$$ \Omega^{-2}=-2g(\nabla\ub,\nabla u).$$
  We then define the normalized null pair $e_3,e_4$ with $g(e_3,e_4)=-1$ by
  $$e_3=\Lbh=-2\Omega\nabla\ub,\ e_4=\Lh=-2\Omega\nabla u,$$ and define one another null pair
  $$\Lb=\Omega \Lbh,\ L=\Omega \Lh,$$
preserving the double null foliation. We also choose a local frame $
   {e_1,e_2}$ tangent to $S_{\ub,u}$. We call $\{e_1, e_2, e_3,e_4\}$ a null frame.  Under this null frame,  we have a null decomposition of the connection coefficients as follows:
\begin{align*}
\chi_{AB}&=g(\nabla_A\Lh,e_B),\quad \eta_A=-\frac{1}{2}g(\nabla_{\Lbh}e_A,\Lh),\quad \omega=\frac{1}{2}\Omega g(\nabla_{\Lh}\Lbh,\Lh),\\
\chib_{AB}&=g(\nabla_A\Lbh,e_B), \quad\etab_A=-\frac{1}{2}g(\nabla_{\Lh}e_A,\Lbh), \quad\omegab=\frac{1}{2}\Omega g(\nabla_{\Lbh}\Lh,\Lbh).
\end{align*}
These are tangential tensorfields by which we mean the tensorfield whose contractions with $e_3$ or $e_4$ are zero. We will also use $\zeta=\frac{1}{2}(\eta-\etab)$. The trace of $\chi$ and $\chib$ are denoted by
 $$\tr\chi = \gs^{AB}\chi_{AB},\ \tr\chib = \gs^{AB}\chib_{AB}.$$
 We may also define the trace-free part of $\chi$ and $\chib$ as
 $$\chih=\chi-\frac{1}{2}\tr\chi\gs,\ \chibh=\chib-\frac{1}{2}\tr\chib\gs.$$
  By definition, we can check directly the following useful identities:
  $$\ds\log\Omega=\frac{1}{2}(\eta+\etab),\ D\log\Omega=\omega,\ \Db\log\Omega=\omegab$$
  where $\nablas$ is the covariant derivative of $\gs$ which is the metric induced on $S_{\ub,u}$, and $D$, $\Db$ are the restrictions of Lie derivatives relative to $L$, $\Lb$ on $S_{\ub,u}$. Acting on functions, $D$, $\Db$ are simply the ordinary derivatives relative to $L$, $\Lb$.

We can also define the null components of the curvature tensor
{\bf R}:
\begin{align*}
\alpha_{AB}=\mathbf{R}(e_A,\Lh,e_B,\Lh),&\quad\alphab_{AB}=\mathbf{R}(e_A,\Lbh,e_B,\Lbh),\\
\beta_A=\frac{1}{2}\mathbf{R}(e_A,\Lh,\Lbh,\Lh),&\quad\betab_A=\frac{1}{2}\mathbf{R}(e_A,\Lbh,\Lbh,\Lh),\\
\rho=\frac{1}{4}\mathbf{R}(\Lbh,\Lh,\Lbh,\Lh),&\quad\sigma=\frac{1}{4}\mathbf{R}(\Lbh,\Lh,e_A,e_B)\epsilons^{AB}
\end{align*}
where $\epsilons$ is the volume form of the induced metric $\gs$ on $S_{\ub,u}$. These are all tangential tensorfields.

Now let us fix a small parameter $\delta>0$ and two numbers $u_0<u_1<0$. The characteristic initial data are given on $C_{u_0}\cup\Cb_0$. Here $C_{u_0}$ refers to its part where $0\le\ub\le\delta$ and $\Cb_0$ refers to its part where $u_0\le u\le u_1$. We expect that the solution can be solved in the region $0\le \ub\le\delta$, $u_0\le u\le u_1$. We introduce some norms in this spacetime region. These norms are designed according to the new hierarchy \eqref{hierarchy}. First of all, for an arbitrary tensorfield $\phi$, we denote 
\begin{align*}\|\phi\|_{H^k(\ub,u)}&=\sum_{i=0}^k\left(\int_{S_{\ub,u}}|\nablas^i\phi|^2\D\mu_{\gs}\right)^{\frac{1}{2}}\\
\|\phi\|_{L^p_{\ub}H^k(u)}&=\left(\int_0^{\delta}\|\phi\|^p_{H^k(\ub,u)}\D\ub\right)^{\frac{1}{p}}\\
\|\phi\|_{L^p_{u}H^k(\ub)}&=\left(\int_{u_0}^{u_1}\|\phi\|^p_{H^k(\ub,u)}\D u\right)^{\frac{1}{p}}
\end{align*}
for $2\le p\le \infty$. For curvature components, we denote
\begin{gather*}
\mathcal{R}_k[\alpha](u)=\delta\|\alpha\|_{L^2_{\ub}H^k(u)}\\
\mathcal{R}_k[\beta](u)=\|\beta\|_{L^2_{\ub}H^k(u)}\\
\mathcal{R}_k[\rho](u)=\delta^{-\frac{1}{2}}\|\rho\|_{L^2_{\ub}H^k(u)}\\
\mathcal{R}_k[\sigma](u)=\delta^{-\frac{1}{2}}\|\sigma\|_{L^2_{\ub}H^k(u)}\\
\mathcal{R}_k[\betab](u)=\delta^{-\frac{1}{2}}\|\betab\|_{L^2_{\ub}H^k(u)}
\end{gather*}
and denote
\begin{gather*}
\underline{\mathcal{R}}_k[\beta](\ub)=\delta\|\beta\|_{L^2_{u}H^k(\ub)}\\
\underline{\mathcal{R}}_k[\rho](\ub)=\|\rho\|_{L^2_{u}H^k(\ub)}\\
\underline{\mathcal{R}}_k[\sigma](\ub)=\|\sigma\|_{L^2_{u}H^k(\ub)}\\
\underline{\mathcal{R}}_k[\betab](\ub)=\delta^{-\frac{1}{2}}\|\betab\|_{L^2_{u}H^k(\ub)}\\
\underline{\mathcal{R}}_k[\alphab](\ub)=\delta^{-\frac{1}{2}}\|\alphab\|_{L^2_{u}H^k(\ub)}.
\end{gather*}
Let $R$ be one of the curvature components $\alpha$, $\beta$, $\rho$, $\sigma$, $\betab$ and $\underline{R}$ to be one of the curvature components $\beta$, $\rho$, $\sigma$, $\betab$, $\alphab$, we also denote
\begin{align*}
\mathcal{R}_k[R]=\sup_{u_0\le u\le u_1}\mathcal{R}_k[R](u), \ \underline{\mathcal{R}}_k[\underline{R}]=\sup_{0\le \ub\le \delta}\underline{\mathcal{R}}_k[\underline{R}](\ub).
\end{align*}
We will also denote 
$$\mathcal{R}_k=\sum_{R\in\{\alpha,\beta,\rho,\sigma,\betab\}}\mathcal{R}_k[R],\ \underline{\mathcal{R}}_k=\sum_{\underline{R}\in\{\beta,\rho,\sigma,\betab,\alphab\}}\underline{\mathcal{R}}_k[\underline{R}]$$ 
and the initial norm on $C_{u_0}\cup\Cb_0$
$$\mathcal{R}_k^{(0)}=\sum_{R\in\{\alpha,\beta,\rho,\sigma,\betab\}}\mathcal{R}_k[R](u_0)+\sum_{\underline{R}\in\{\beta,\rho,\sigma,\betab,\alphab\}}\underline{\mathcal{R}}_k[\underline{R}](0).$$
Also, we denote
\begin{gather*}
\mathcal{O}_k[\chih](\ub,u)=\delta^{\frac{1}{2}}\|\chih\|_{H^k(\ub,u)}\\
\mathcal{O}_k[\tr\chi](\ub,u)=\left\|\tr\chi\right\|_{H^k(\ub,u)}\\
\mathcal{O}_k[\chibh](\ub,u)=\delta^{-\frac{1}{2}}\|\chibh\|_{H^k(\ub,u)}\\
\mathcal{O}_k[\widetilde{\tr\chib}](\ub,u)=\delta^{-\frac{1}{2}}\left\|\tr\chib+\frac{2}{|u|}\right\|_{H^k(\ub,u)}\\
\mathcal{O}_k[\eta](\ub,u)=\delta^{-\frac{1}{2}}\|\eta\|_{H^k(\ub,u)}\\
\mathcal{O}_k[\etab](\ub,u)=\delta^{-\frac{1}{2}}\|\etab\|_{H^k(\ub,u)}\\
\mathcal{O}_k[\omega](\ub,u)=\|\omega\|_{H^k(\ub,u)}\\
\mathcal{O}_k[\omegab](\ub,u)=\delta^{-\frac{1}{2}}\|\omegab\|_{H^k(\ub,u)}\\
\end{gather*}
Let $\Gamma$ be one of the connection coefficients $\chih$, $\tr\chi$, $\chibh$, $\widetilde{\tr\chib}=\tr\chib+\frac{2}{|u|}$, $\eta$, $\etab$, $\omega$, $\omegab$. Denote
\begin{align*}
\mathcal{O}_k[\Gamma]=\sup_{u_0\le u\le u_1}\sup_{0\le \ub\le \delta}\mathcal{O}_k[\Gamma](\ub, u).
\end{align*}
We will also denote 
$$\mathcal{O}_k=\sum_{\Gamma\in\{\chih, \tr\chi, \chibh, \widetilde{\tr\chib}, \eta, \etab, \omega, \omegab\}}\mathcal{O}_k[\Gamma]$$
and the initial norm on $C_{u_0}\cup\Cb_0$
\begin{align*}\mathcal{O}_k^{(0)}=\sup_{[\ub,u]\in \{0\}\times[u_0,u_1] \cup [0,\delta]\times\{u_0\}}\left(\sum_{\Gamma\in\{\chih, \tr\chi, \chibh, \widetilde{\tr\chib}, \eta, \etab, \omega, \omegab\}}\mathcal{O}_k[\Gamma](\ub,u)\right.\\
\left.+\delta^{-\frac{1}{2}}\|\log\Omega\|_{H^k(\ub,u)}+\delta^{-\frac{1}{2}}\left||u|^{-2}\gs_{AB}-\overset{\circ}{\gs}_{AB}\right|\right)\end{align*}
where $\overset{\circ}{\gs}_{AB}$ is the standard unit round metric on $\mathbb{S}^2$. Then we are going to prove the following existence theorem.
\begin{theorem}\label{existence}
Suppose that the initial data given on $C_{u_0}\cup\Cb_0$ obey the estimates
\begin{align*}
\mathcal{O}_k^{(0)},\mathcal{R}_k^{(0)}<\infty
\end{align*}
for some $k\ge2$. Then if $\delta$ is sufficiently small (depending on $\mathcal{O}_k^{(0)},\mathcal{R}_k^{(0)}$ and $u_0,u_1$), the smooth solution of the vacuum Einstein equations remains regular in $0\le\ub\le\delta$, $u_0\le u\le u_1$. Moreover, the solution obeys the following estimates
\begin{align*}
\mathcal{O}_k,\mathcal{R}_k,\underline{\mathcal{R}}_k\lesssim C(\mathcal{O}_k^{(0)},\mathcal{R}_k^{(0)})
\end{align*}
where $C$ is a constant depending on $\mathcal{O}_k^{(0)},\mathcal{R}_k^{(0)}$.
\end{theorem}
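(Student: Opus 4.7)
The plan is to prove Theorem~\ref{existence} by a bootstrap/continuity argument that couples transport estimates for the connection coefficients to energy estimates for the curvature components, with every inequality phrased in the norms $\mathcal{O}_k,\mathcal{R}_k,\underline{\mathcal{R}}_k$ whose $\delta$-weights encode the modified short-pulse hierarchy~\eqref{hierarchy}. I would fix a large constant $M$ depending only on $\mathcal{O}_k^{(0)},\mathcal{R}_k^{(0)},u_0,u_1$, and let $\mathcal{D}^\ast$ be the maximal subdomain of $\{0\le\ub\le\delta,\,u_0\le u\le u_1\}$ on which the smooth solution exists and satisfies the bootstrap assumption $\mathcal{O}_k+\mathcal{R}_k+\underline{\mathcal{R}}_k\le M$. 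The goal is to improve this to $\le M/2$ after shrinking $\delta$, from which a standard last-slice continuation argument forces $\mathcal{D}^\ast$ to be the full slab and yields the claimed a priori bounds.

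For the connection coefficients I would use the null structure equations, split into $D$-equations along $L$ and $\Db$-equations along $\Lb$. Schematically each has the form $D\Gamma = R+\Gamma\cdot\Gamma$ (or the $\Db$-analogue), so integration along the corresponding null generator gives
\begin{equation*}
\|\Gamma\|_{H^k(\ub,u)}\le \|\Gamma\|_{H^k(0,u)}+\int_0^{\ub}\bigl(\|R\|_{H^k}+\|\Gamma\|_{H^k}^2\bigr)\,\D\ub',
\end{equation*}
with the right-hand curvature controlled by $\mathcal{R}_k$ through Cauchy--Schwarz in $\ub$ and the quadratic terms absorbed by shrinking $\delta$; the weights $\delta^{1/2}$ or $\delta^{-1/2}$ built into $\mathcal{O}_k[\chih],\mathcal{O}_k[\etab],\ldots$ are calibrated precisely so the curvature contribution lands at scale $O(1)$. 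To reach top order without invoking elliptic Codazzi/Hodge systems I would commute with $\nablas$ and close using the schematic commutators $[\nablas,D]\sim \chi\cdot\nablas$ and $[\nablas,\Db]\sim \chib\cdot\nablas$, so that every $\nablas^k\Gamma$ satisfies a transport equation of the same type as $\Gamma$ itself. Sobolev embedding on $S_{\ub,u}$, valid because $k\ge 2$, then upgrades the $H^k$ bounds to $L^\infty$ bounds that feed back into the nonlinear errors.

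For the curvature, I would pair the Bianchi equations in the four standard pairs $(\alpha,\beta)$, $(\beta,(\rho,\sigma))$, $((\rho,\sigma),\betab)$, $(\betab,\alphab)$ and extract energy identities by integration by parts on $S_{\ub,u}$. Each identity delivers a sum of an outgoing flux on $C_u$ and an incoming flux on $\Cb_{\ub}$ bounded by the initial flux plus a spacetime error of the schematic shape $\iint \Psi\cdot\Gamma\cdot\Psi$; H\"older in $\ub,u$ combined with the bootstrap converts this into products of $\mathcal{R}_k,\underline{\mathcal{R}}_k,\mathcal{O}_k$ times a positive power of $\delta$ for all non-borderline terms. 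Commutation to top order is carried out by applying $\nablas^k$ to each Bianchi equation and using the same commutator calculus, with the resulting commutator errors absorbed into the already-estimated quantities.

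The main technical obstacle will be isolating and controlling the \emph{new} borderline terms that arise because $\betab$ and $\alphab$ have been promoted from the Christodoulou weights $\delta$ and $\delta^{3/2}$ to the common weight $\delta^{1/2}$. Concretely, in the $((\rho,\sigma),\betab)$ and $(\betab,\alphab)$ energy identities, error contributions such as $\iint \nablas^k\alphab\cdot\etab\cdot\nablas^k\alphab$ and $\iint \nablas^k\betab\cdot\chibh\cdot\nablas^k\rho$, which were strictly subcritical in~\cite{Chr08}, become borderline under~\eqref{hierarchy}, and one must verify term by term that the $\delta$-weights still balance once $\etab$ and $\chibh$ are replaced by their bootstrap values. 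Once these are closed, existence follows by continuation, and Theorems~\ref{step1}, \ref{step3}, \ref{thm:closetoM} are read off by specializing the initial data: for Theorem~\ref{step1}, the short-pulse ansatz~\eqref{shortpulse} together with~\eqref{integrate=m0} drives the $\delta^{1/2}$-closeness to Schwarzschild mass $m_0$ in the post-pulse region; for Theorem~\ref{thm:closetoM}, the estimates propagated along $\Cb_0$ yield the $\delta^{1/2}$-closeness to Minkowski on $\Cb_{III,IV}$.
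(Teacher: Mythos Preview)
Your overall architecture matches the paper: bootstrap, transport estimates for $\Gamma$ via the null structure equations commuted with $\nablas^k$, and energy estimates for the curvature via the four Bianchi pairs. Where your proposal breaks down is in the treatment of the borderline terms, and this is a genuine gap, not a detail.

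First, you misidentify which terms are borderline. The cubic errors you single out, of the type $\etab\cdot\alphab\cdot\alphab$ or $\chibh\cdot\rho\cdot\betab$, are in fact subcritical under the hierarchy~\eqref{hierarchy}: they carry a positive power of $\delta$ (they fall under the paper's ``type~I--III'' terms and are absorbed). The actual borderline terms under the modified hierarchy are $\chibh\cdot\alpha$ in the $D\rho,\,D\sigma$ equations (paired against $(\rho,\sigma)$ in the $(\beta,(\rho,\sigma))$ energy) and $\chih\cdot\alphab$ in the $\Db\rho,\,\Db\sigma$ equations (paired against $(\rho,\sigma)$ in the $((\rho,\sigma),\betab)$ energy). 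For example the first contributes
\[
\|\chibh\|_{L^\infty_{\ub,u}H^k}\,\|\alpha\|_{L^\infty_u L^2_{\ub}H^k}\,\delta^{1/2}\|\rho,\sigma\|_{L^\infty_{\ub}L^2_u H^k}\ \lesssim\ (\mathcal{O}_k^{(0)}+\underline{\mathcal{R}}_k[\alphab]+1)\,\mathcal{R}_k[\alpha]\,\underline{\mathcal{R}}_k[\rho,\sigma],
\]
which carries \emph{no} positive power of $\delta$. Your prescription ``verify term by term that the $\delta$-weights still balance'' therefore fails precisely here: the weights do not balance, and shrinking $\delta$ does not help.

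The mechanism the paper uses, and which your bootstrap on the single aggregate $\mathcal{O}_k+\mathcal{R}_k+\underline{\mathcal{R}}_k\le M$ cannot reproduce, is a \emph{reductive} ordering of the four Bianchi pairs. The pairs $(\alpha,\beta)$ and $(\betab,\alphab)$ contain no borderline terms at all, so one first closes
\[
\mathcal{R}_k[\alpha]+\underline{\mathcal{R}}_k[\beta]\lesssim \mathcal{R}_k^{(0)}+1,\qquad \mathcal{R}_k[\betab]+\underline{\mathcal{R}}_k[\alphab]\lesssim \mathcal{R}_k^{(0)}+1.
\]
These bounds, together with the sharp connection estimates $\mathcal{O}_k[\chih]\lesssim \mathcal{O}_k^{(0)}+\mathcal{R}_k[\alpha]+1$ and $\mathcal{O}_k[\chibh]\lesssim \mathcal{O}_k^{(0)}+\underline{\mathcal{R}}_k[\alphab]+1$ (which your generic $\mathcal{O}_k\le M$ does not provide), are then substituted into the borderline cubic above. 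The resulting inequality is \emph{sublinear} in $\underline{\mathcal{R}}_k[\rho,\sigma]$, namely of the form $\mathcal{R}_k[\beta]^2+\underline{\mathcal{R}}_k[\rho,\sigma]^2\lesssim C(\mathcal{O}_k^{(0)},\mathcal{R}_k^{(0)})\,\underline{\mathcal{R}}_k[\rho,\sigma]+C(\mathcal{O}_k^{(0)},\mathcal{R}_k^{(0)})$, and hence closes. The fourth pair is then handled the same way. Without this ordering and the sharp dependence of $\chih,\chibh$ on $\alpha,\alphab$, your bootstrap cannot be improved from $M$ to $M/2$.
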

{\bf Here and in the following proof, $A\lesssim B$ means $A\le cB$ for some universal constant $c$, which may also depend on $u_0$ and $u_1$.}
\begin{proof}
We will only prove the a priori estimates. The construction of the solution is then routine (see \cite{Chr08}) and we will not present here. The a priori estimates are proved by first assuming the solution of the vacuum Einstein equations exists in the region $0\le\ub\le\delta$, $u_0\le u\le u_1$. We will show that if $\delta$ is sufficiently small,
\begin{align*}
\mathcal{O}_k,\mathcal{R}_k,\underline{\mathcal{R}}_k\lesssim C(\mathcal{O}_k^{(0)},\mathcal{R}_k^{(0)}).
\end{align*}
This can be achieved by proving two propositions.
\begin{proposition}\label{curvatureconnection}
Assume that $\mathcal{R}_k,\underline{\mathcal{R}}_k<\infty$ for $k\ge2$, then  if $\delta$ is sufficiently small depending on $\mathcal{O}^{(0)}_k$ and $\mathcal{R}_k$, $\underline{\mathcal{R}}_k$, then
$$\mathcal{O}_k\lesssim C(\mathcal{O}^{(0)}_k,\mathcal{R}_k,\underline{\mathcal{R}}_k).$$
In particular, 
$$\mathcal{O}_k[\chih]\lesssim\mathcal{O}^{(0)}_k+\mathcal{R}_k[\alpha]+1,\ \mathcal{O}_k[\chibh]\lesssim\mathcal{O}^{(0)}_k+\underline{\mathcal{R}}_k[\alphab]+1.$$

\end{proposition}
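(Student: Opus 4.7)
The plan is to bound each of the connection coefficients $\chih,\tr\chi,\chibh,\widetilde{\tr\chib},\eta,\etab,\omega,\omegab$ by integrating its null structure (transport) equation along the null direction that is compatible with the hierarchy \eqref{hierarchy}, with the curvature norms $\mathcal{R}_k,\underline{\mathcal{R}}_k$ playing the role of prescribed source terms. Throughout, I would run a bootstrap on $\mathcal{O}_k$ together with a bootstrap controlling $\|\log\Omega\|_{H^k}$ and $\||u|^{-2}\gs-\cir{\gs}\|_{H^k}$, the latter two controlled respectively by $D\log\Omega=\omega$, $\Db\log\Omega=\omegab$, and by $D\gs=2\Omega\chi$, $\Db\gs=2\Omega\chib$. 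Once $\gs$ and $\Omega$ are shown to remain close (in $H^k$) to their Minkowskian/Schwarzschildean values, Sobolev embedding $H^2(S_{\ub,u})\hookrightarrow L^\infty(S_{\ub,u})$ is uniform and commutator identities $[\nablas,D]$, $[\nablas,\Db]$ are controllable by the bootstrap.

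The direction of integration is chosen so that the $\delta$ weight of the source matches the declared $\delta$ weight of the unknown. Concretely: $\chih$ solves $D\chih+\tr\chi\,\chih-2\omega\chih=-\alpha$; integrating over $[0,\ub]$ and applying Cauchy--Schwarz in $\ub$ turns $\|\alpha\|_{L^1_{\ub}H^k}\lesssim\delta^{1/2}\|\alpha\|_{L^2_{\ub}H^k}\lesssim\delta^{-1/2}\mathcal{R}_k[\alpha]$, which after multiplication by the normalization $\delta^{1/2}$ yields $\mathcal{O}_k[\chih]\lesssim\mathcal{O}_k^{(0)}+\mathcal{R}_k[\alpha]+1$. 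Symmetrically, $\chibh$ is integrated in $\Db$ with source $\alphab$, giving $\mathcal{O}_k[\chibh]\lesssim\mathcal{O}_k^{(0)}+\underline{\mathcal{R}}_k[\alphab]+1$. The Raychaudhuri equations for $\tr\chi$ and $\widetilde{\tr\chib}$ have no curvature source and are closed quadratically against the already-bounded $\chih$, $\chibh$, $\omega$, $\omegab$. For $\etab$, the equation $D\etab=-\chi\cdot(\etab-\eta)-\beta$ integrated along $L$ produces $\|\etab\|_{H^k}\lesssim\mathcal{O}_k^{(0)}+\delta^{1/2}\mathcal{R}_k[\beta]$, consistent with the weight $\delta^{-1/2}$ in $\mathcal{O}_k[\etab]$; $\eta$ is obtained analogously from $\Db\eta=-\chib\cdot(\eta-\etab)+\betab$ integrated in $\Db$ over $[u_0,u]$, where the favourable $\delta^{1/2}$ weight of $\underline{\mathcal{R}}_k[\betab]$ absorbs the $|u_1-u_0|$ length. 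The transport equations $D\omegab=\cdots+\rho$ and $\Db\omega=\cdots+\rho$ are handled identically using the $\delta^{1/2}$-weighted norms of $\rho,\sigma$.

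For the higher derivatives, I commute each transport equation with $\nablas^j$ for $0\le j\le k$. The commutator $[D,\nablas^j]\phi$ produces a sum of terms schematically of the form $\sum_{j_1+j_2\le j-1}\nablas^{j_1}\chi\,\cdot\,\nablas^{j_2+1}\phi$ plus curvature contributions; using the already-obtained lower order $\mathcal{O}_{j-1}$ bound and $H^2\hookrightarrow L^\infty$, these are absorbed inductively in $j$. The same holds for $\Db$-transport equations. This gives the control of $\mathcal{O}_k[\Gamma]$ for every $\Gamma$, and in particular the specific statements $\mathcal{O}_k[\chih]\lesssim\mathcal{O}_k^{(0)}+\mathcal{R}_k[\alpha]+1$ and $\mathcal{O}_k[\chibh]\lesssim\mathcal{O}_k^{(0)}+\underline{\mathcal{R}}_k[\alphab]+1$.

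The main obstacle, and precisely the point where the paper's simplification over \cite{Chr08} enters, is the top-order control of $\chih$ and $\chibh$. The traditional argument invokes a Codazzi elliptic system $\divs\chih=\tfrac12\nablas\tr\chi-\beta+\text{l.o.t.}$ to gain one derivative on $\chih$ from one fewer derivative of $\beta$; this requires elliptic estimates on $S_{\ub,u}$ and careful uniformization. The strategy I would follow here instead commutes the transport equation $D\chih=-\alpha+\text{l.o.t.}$ directly with $\nablas^k$ and reads off the $H^k$ bound from $\mathcal{R}_k[\alpha]$; because the short pulse hierarchy places $\alpha$ at the $L^2_{\ub}H^k$ level with exactly the right weight $\delta$, no derivative needs to be gained, and elliptic theory on $S_{\ub,u}$ is avoided. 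The only subtlety is that the commutator terms contain $\nablas^k\chi$ back on the right-hand side (via the deformation of $\gs$), but these can be absorbed using Gronwall in $\ub$ once $\delta$ is chosen small compared with $\mathcal{R}_k,\underline{\mathcal{R}}_k,\mathcal{O}_k^{(0)}$, closing the bootstrap.
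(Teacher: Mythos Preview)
Your overall strategy---bootstrap on $\mathcal{O}_k$, integrate each null transport equation in the direction compatible with the hierarchy, commute with $\nablas^j$, and absorb commutator terms by smallness of $\delta$---is the paper's proof. Your remark that commuting $D\chih=-\Omega\alpha+\omega\chih$ directly with $\nablas^k$ replaces the Codazzi elliptic step is exactly the simplification being carried out.

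There is, however, one concrete error: you have reversed the transport directions for $\eta$ and $\etab$. In Christodoulou's double-null gauge the equations that close without derivative loss are
\[
D\eta=\Omega(\chi\cdot\etab-\beta),\qquad \Db\etab=\Omega(\chib\cdot\eta+\betab),
\]
and these are what the paper integrates (first $\eta$ along $L$, then $\etab$ along $\Lb$ using the already-obtained bound on $\eta$). The equations you wrote, $D\etab=-\chi\cdot(\etab-\eta)-\beta$ and $\Db\eta=-\chib\cdot(\eta-\etab)+\betab$, are not valid here: deriving $D\etab$ from $\etab=2\nablas\log\Omega-\eta$ together with $D\log\Omega=\omega$ forces a $2\nablas\omega$ term on the right (and symmetrically $\Db\eta$ contains $2\nablas\omegab$). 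At top order $H^k$ this would demand $\omega\in H^{k+1}$, which is unavailable. Swapping the directions back fixes everything and your weight-counting then goes through verbatim.

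A minor point: the commutator $[D,\nablas^i]$ on tangential tensorfields produces only $\nablas(\Omega\chi)$-type terms, no curvature; these are absorbed not by Gronwall but directly by the bootstrap bound $\|\chi\|_{H^k}\lesssim\delta^{-1/2}\Delta$ times the $\ub$-length $\delta$.
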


\begin{proof}
The proof is began by a bootstrap assumption
\begin{align}\label{bootstrap}
\mathcal{O}_k\le \Delta.
\end{align}
We should first obtain certain geometric inequalities. Let $I(\ub,u)$ is the isoperimetric constant of $S_{\ub,u}$. We introduce the geometric bootstrap assumptions
\begin{align}\label{geometricbootstrap}
\frac{1}{4}\le \Omega\le 4,\ I(\ub,u)\le\frac{2}{\pi},\ \pi|u|^2\le Area(\ub,u)\le 16\pi|u|^2.
\end{align}
Recall the following form of the Sobolev inequalities for any tangential tensorfield $\phi$ in Section 5.2 of \cite{Chr08}:
\begin{align*}
&\|\phi\|_{L^4(\ub,u)}\lesssim \sqrt{\max\{I(S_{\underline{u},u}),1\}}\sum_{i=0}^{1}r^{-\frac{1}{2}}\|(r\nablas)^i\phi\|_{L^2(\ub,u)}\\
&\|\phi\|_{L^\infty(\ub,u)}\lesssim \sqrt{\max\{I(S_{\underline{u},u}),1\}}\sum_{i=0}^{2}r^{-1}\|(r\nablas)^i\phi\|_{L^2(\ub,u)}
\end{align*}
where $r=r(\underline{u},u)$ is the area radius defined by $4\pi r^2=Area(\ub,u)$, and that $|u_1|\le|u|\le|u_0|$, we immediately have
\begin{equation}\label{Sobolev}
\begin{split}
\|\phi\|_{L^4(\ub,u)}\lesssim&\|\phi\|_{H^1(\ub,u)},\\
\|\phi\|_{L^\infty(\ub,u)}\lesssim&\|\phi\|_{H^2(\ub,u)}.
\end{split}
\end{equation}
Then from the bootstrap assumptions for $\chih$, $\tr\chi$, $\chibh$, $\tr\chib$, we have
\begin{align*}
\delta^{\frac{1}{2}}|\chih|, |\tr\chi|, \delta^{-\frac{1}{2}}|\chibh|, \delta^{-\frac{1}{2}}|\widetilde{\tr\chib}|\lesssim \Delta
\end{align*}
for $0\le\ub\le\delta$, $u_0\le u\le u_1$. We are going to improve the bootstrap assumption \eqref{geometricbootstrap}. First of all, to improve the bootstrap assumption on $\Omega$, we simply use the equation $D\Omega=\Omega\omega$ and the Sobolev inequality to obtain
$$\left|\Omega-\Omega\big|_{\ub=0}\right|\le4\sup_u\int_0^\delta\|\omega\|_{L^\infty(\ub,u)}\D\ub\le 4\delta\Delta\le\frac{1}{4}$$
if $\delta\le\frac{1}{16\Delta}$. Note that $\left|\Omega\big|_{\ub=0}-1\right|\lesssim\delta^{\frac{1}{2}}\mathcal{O}_k^{(0)}$ from the definition of $\mathcal{O}^{(0)}_k$, hence if $\delta$ is sufficiently small depending on $\mathcal{O}^{(0)}_k$,
\begin{align}\label{geometricOmega}
\frac{1}{2}\le\Omega\le\frac{3}{2}
\end{align}
and the bootstrap assumption \eqref{geometricbootstrap} on $\Omega$ is improved. To improve the estimate for $Area(\ub,u)$,  we consider the equation $DArea(\ub,u)=\int_{S_{\ub,u}}\Omega\tr\chi\D\mu_{\gs}$ and we have
\begin{align*}
|Area(\ub,u)-Area(0,u)|\le\delta\sup_{0\le\ub\le\delta}(\|\Omega\tr\chi\|_{L^\infty(\ub,u)}Area(\ub,u))\lesssim\delta\Delta\cdot 16\pi|u|^2.
\end{align*}
Since we also have $|Area(0,u)-4\pi|u|^2|\lesssim \delta^{\frac{1}{2}}\mathcal{O}_k^{(0)}$ from the definition of $\mathcal{O}_k^{(0)}$, then if $\delta$ is sufficiently small depending on $\mathcal{O}_k^{(0)}$ and $\Delta$, we will have 
\begin{align}\label{area}
2\pi|u|^2\le Area(S_{\ub,u})\le 8\pi|u|^2
\end{align}
which improves the bootstrap assumption \eqref{geometricbootstrap} on $Area(\ub,u)$. Let $\Lambda(\ub,u)$ and $\lambda(\ub,u)$ to be the larger and smaller eigenvalues of $\gs\big|_{S_{\ub,u}}$ relative to $\gs\big|_{S_{0,u}}$. Note that $\Lambda(0,u)=\lambda(0,u)=1$, then we have the following (see Lemma 5.3 of \cite{Chr08})
\begin{align*}
\sqrt{\Lambda\lambda}(\ub)=\exp\left(\int_0^{\ub}\Omega\tr\chi(\ub')\D\ub'\right),\\
\sqrt{\frac{\Lambda}{\lambda}}(\ub)\le\exp\left(2\int_0^{\ub}|\Omega\chih|(\ub')\D\ub'\right),
\end{align*}
which implies that $|\Lambda-1|,|\lambda-1|\lesssim\delta^{\frac{1}{2}}\Delta$ if $\delta$ is sufficiently small depending on $\Delta$. Then we will have (see Lemma 5.4 of \cite{Chr08})
\begin{align*}
I(\ub,u)\le (1+\delta^{\frac{1}{2}}\Delta)I(0,u).
\end{align*}
Since we also have $\left|I(0,u)-\frac{1}{2\pi}\right|\lesssim \delta^{\frac{1}{2}}\mathcal{O}_k^{(0)}$ from the definition of $\mathcal{O}_k^{(0)}$, then if $\delta$ is sufficiently small depending on $\mathcal{O}_k^{(0)}$ and $\Delta$, we will have 
\begin{align}\label{isoperimetric}
I(\ub,u)\le \frac{1}{\pi}
\end{align}
which improves the bootstrap assumptions \eqref{geometricbootstrap} on $I(\ub,u)$. By a bootstrap argument it follows that the assumptions \eqref{geometricbootstrap} are in fact true. Then \eqref{geometricOmega}, \eqref{area}, \eqref{isoperimetric} and the Sobolev inequality \eqref{Sobolev} are also true. Moreover, suppose that $\phi_1,\cdots,\phi_n$ are $n$ tangential tensor fields and $\phi_1\cdots\phi_n$ represents some kind of contractions of them, then
\begin{equation}\label{Sobolev1}
\begin{split}
\|\phi_1\cdots\phi_n\|_{H^k(\ub,u)}\lesssim&\|\phi_1\|_{H^k(\ub,u)}\cdots\|\phi_n\|_{H^k(\ub,u)},\\
\|\phi_1\cdots\phi_n\|_{H^{k-1}(\ub,u)}\lesssim&\|\phi_1\|_{H^{k-1}(\ub,u)}\|\phi_2\|_{H^{k}(\ub,u)}\cdots\|\phi_n\|_{H^k(\ub,u)},
\end{split}
\end{equation}
for integer $k\ge 2$. These inequalities will be used to handle the nonlinear terms. 
Moreover, we have the estimates for transport equations if $\delta$ is sufficiently small depending on $\Delta$ (see also Section 4.1 of \cite{Chr08})
\begin{equation}\label{transportestimate}
\begin{split}
\|\phi\|_{L^2(\ub,u)}\lesssim&\|\phi\|_{L^2(0,u)}+\int_0^\delta\|D\phi\|_{L^2(\ub,u)}\D \ub,\\
\|\phi\|_{L^2(\ub,u)}\lesssim&\|\phi\|_{L^2(\ub,u_0)}+\int_{u_0}^{u_1}\|\Db\phi\|_{L^2(\ub,u)}\D u.
\end{split}
\end{equation}
Recall that we have the commutation formulas (see also Section 4.1 of \cite{Chr08})
\begin{align*}
[D,\nablas^i]\phi=\sum_{j_1+j_2=i-1}\nablas^{1+j_1}(\Omega\chi)\cdot\nablas^{j_2}\phi, \ [\Db,\nablas^i]\phi=\sum_{j_1+j_2=i-1}\nablas^{1+j_1}(\Omega\chib)\cdot\nablas^{j_2}\phi
\end{align*}
for tangential tensorfields $\phi$ (if $\phi$ is a function, then $j_2\ge1$). Let us introduce an auxiliary bootstrap assumption 
\begin{equation}\label{bootstrapOmegaHk}
\|\Omega\|_{H^k(\ub,u)}\le L
\end{equation}
for some sufficiently large number $L>0$. Then by the Sobolev inequality \eqref{Sobolev1}, for $i\le k$,
\begin{align*}
\|\nablas^i\phi\|_{L^2(\ub,u)}&\lesssim\|\nablas^i\phi\|_{L^2(0,u)}+\int_0^\delta\left(\|\nablas^iD\phi\|_{L^2(\ub,u)}+\|\Omega\|_{H^k(\ub,u)}\|\chi\|_{H^k(\ub,u)}\|\phi\|_{H^i(\ub,u)}\right)\D\ub\\
&\lesssim\|\nablas^i\phi\|_{L^2(0,u)}+\int_0^\delta\|\nablas^iD\phi\|_{L^2(\ub,u)}\D\ub+\delta^{\frac{1}{2}}L\Delta\sup_{0\le\ub\le\delta}\|\phi\|_{H^i(\ub,u)}.
\end{align*}
By taking $\delta$ sufficiently small depending on $\Delta$ and $L$, the last term on the right can be absorbed by the left if we take supremum on $\ub$, we will then have
\begin{align}\label{transportestimate1}
\|\phi\|_{H^k(\ub,u)}\lesssim\|\phi\|_{H^k(0,u)}+\|D\phi\|_{L^1_{\ub}H^k(\ub,u)}.
\end{align}
We are going to improve \eqref{bootstrapOmegaHk}. By the equation $D\log\Omega=\omega$, we have
\begin{align*}
\|\log\Omega\|_{H^k(\ub,u)}&\lesssim\|\log\Omega\|_{H^k(0,u)}+\|\omega\|_{L^1_{\ub}H^k(u)}\\&\lesssim
\|\log\Omega\|_{H^k(0,u)}+\delta\|\omega\|_{L^\infty_{\ub}H^k(u)}\\&\lesssim\delta^{\frac{1}{2}}\mathcal{O}^{(0)}_k+\delta\Delta\lesssim\delta^{\frac{1}{2}}\mathcal{O}^{(0)}_k+\delta^{\frac{1}{2}}
\end{align*}
if $\delta$ is sufficiently small depending on $\Delta$. Here we have used \eqref{transportestimate1} and the inequality
\begin{align}\label{L1Linfty}
\|\cdot\|_{L^1_{\ub}H^k(u)}\lesssim&\delta\|\cdot\|_{L^\infty_{\ub}H^k(u)}.
\end{align}
 The above estimate for $\log\Omega$ also implies that
\begin{align}\label{Omega-1}
\|\Omega-1\|_{H^k(\ub,u)}\lesssim\delta^{\frac{1}{2}}(\mathcal{O}^{(0)}_k+1)
\end{align}
 and in particular 
\begin{align}\label{Omega}
\|\Omega\|_{H^k(\ub,u)}\lesssim1
\end{align}
 if $\delta$ is sufficiently small depending on $\mathcal{O}^{(0)}_k$. We can then choose $L$ large enough such that \eqref{Omega} is an improvement of \eqref{bootstrapOmegaHk}. Then by a bootstrap argument it follows that \eqref{Omega} is true without assuming \eqref{bootstrapOmegaHk} and then \eqref{transportestimate1} and \eqref{Omega-1} are also true, if $\delta$ is sufficiently small depending on $\Delta$ but not on $L$.

Along $\Db$ direction, we have
\begin{align*}
\|\nablas^i\phi\|_{L^2(\ub,u)}&\lesssim\|\nablas^i\phi\|_{L^2(\ub,u_0)}+\int_{u_0}^{u_1}\left(\|\nablas^iD\phi\|_{L^2(\ub,u)}+\|\nablas(\Omega\chib)\|_{H^{k-1}(\ub,u)}\|\phi\|_{H^i(\ub,u)}\right)\D u.
\end{align*}
The term $\|\nablas(\Omega\chib)\|_{H^{k-1}(\ub,u)}$ can be estimated by
\begin{align*}
\|\nablas(\Omega\chib)\|_{H^{k-1}(\ub,u)}\lesssim&\|(\nablas\Omega)\chib\|_{H^{k-1}(\ub,u)}+\|\Omega\nablas\chib\|_{H^{k-1}(\ub,u)}\\
\lesssim&\|\nablas\Omega\|_{H^{k-1}(\ub,u)}\|\chib\|_{H^{k}(\ub,u)}+\|\Omega\|_{H^{k}(\ub,u)}\|\nablas\chib\|_{H^{k-1}(\ub,u)}\\
\lesssim&\delta^{\frac{1}{2}}\Delta(\mathcal{O}_k^{(0)}+1),
\end{align*}
where we have used the second of \eqref{Sobolev1}, \eqref{Omega-1} and \eqref{Omega}. Then we have
\begin{align*}
\|\nablas^i\phi\|_{L^2(\ub,u)}\lesssim\|\nablas^i\phi\|_{L^2(\ub,u_0)}+\int_{u_0}^{u_1}\|\nablas^iD\phi\|_{L^2(\ub,u)}\D u+\delta^{\frac{1}{2}}\Delta(\mathcal{O}_k^{(0)}+1)\sup_{u_0\le u\le u_1}\|\phi\|_{H^i(\ub,u)},
\end{align*}
By taking $\delta$ sufficiently small depending on $\Delta$ and $\mathcal{O}_k^{(0)}$, we have
\begin{align}\label{transportestimate2}
\|\phi\|_{H^k(\ub,u)}\lesssim\|\phi\|_{H^k(\ub,u_0)}+\|\Db\phi\|_{L^1_uH^k(\ub,u)}.
\end{align}
We have then obtained all desired geometric inequalities. 

Before doing estimates for the connection coefficients, we collect the equations we will use in the following. These are part of the null structure equations which can be found  in \cite{Chr08}.
  \begin{align*}
&\widehat{D}\chih=\omega\chih-\Omega\alpha\\
& D\tr\chi=\omega\tr\chi-\frac{1}{2}\Omega(\tr\chi)^2-\Omega|\chih|^2 \\
&\Db\chibh=\omegab\chibh-\Omega\alphab \\
&\Db\tr\chib=\omegab\tr\chib-\frac{1}{2}\Omega(\tr\chib)^2-\Omega|\chibh|^2\\
&D\eta=\Omega(\chi\cdot\underline{\eta}-\beta)\\
&\underline{D}\underline{\eta}=\Omega(\underline{\chi}\cdot{\eta}+\underline{\beta})\\
&D\underline{\omega}=\Omega^2(2(\eta,\underline{\eta})-|\eta|^2-\rho)\\
&\underline{D}\omega=\Omega^2(2(\eta,\underline{\eta})-|\underline{\eta}|^2-\rho)
\end{align*}

The following estimates mainly rely on the inequalities \eqref{Sobolev1} (the first one), \eqref{transportestimate1}, \eqref{Omega} and \eqref{transportestimate2}, and the bootstrap assumption \eqref{bootstrap}. For the first equation of $\Dh\chih$, we have
\begin{align*}
\|\chih\|_{H^k(\ub,u)}&\lesssim\|\chih\|_{H^k(0,u)}+\|\omega\chih\|_{L^1_{\ub}H^k(u)}+\|\Omega\alpha\|_{L^1_{\ub}H^k(u)}\\
&\lesssim\delta^{-\frac{1}{2}}\mathcal{O}^{(0)}_k+\delta\|\omega\|_{L^\infty_{\ub}H^k(u)}\|\chih\|_{L^\infty_{\ub}H^k(u)}+\|\Omega\|_{L^\infty_{\ub}H^k(u)}\cdot\delta^{\frac{1}{2}}\|\alpha\|_{L^2_{\ub}H^k(u)}\\
&\lesssim\delta^{-\frac{1}{2}}\left(\mathcal{O}^{(0)}_k+\mathcal{R}_k[\alpha]\right)+\delta^{\frac{1}{2}}\Delta^2
\end{align*}
where we use the inequalities \eqref{L1Linfty} together with
\begin{equation}\label{L1L2}
\begin{split}
\|\cdot\|_{L^1_{\ub}H^k(u)}\lesssim&\delta^{\frac{1}{2}}\|\cdot\|_{L^2_{\ub}H^k(u)}
\end{split}
\end{equation} 
for the quadratic terms of the connection coefficients and the curvature term respectively. Then if $\delta$ is sufficiently small depending on $\Delta$, we will have
\begin{align*}
\delta^{\frac{1}{2}}\|\chih\|_{H^k(\ub,u)}\lesssim\mathcal{O}^{(0)}_k+\mathcal{R}_k[\alpha]+1,
\end{align*}
that is,
\begin{align}\label{estimatechih}
\mathcal{O}_k[\chih]\lesssim\mathcal{O}^{(0)}_k+\mathcal{R}_k[\alpha]+1.
\end{align}
 
 Now we turn to the equation for $D\tr\chi$. Similar to estimating $\chih$, we will have
 \begin{align*}
\|\tr\chi\|_{H^k(\ub,u)}&\lesssim\|\tr\chi\|_{H^k(0,u)}+\delta\left(\|\omega\|_{L^\infty_{\ub}H^k(u)}\|\tr\chi\|_{L^\infty_{\ub}H^k(u)}+\|\Omega\|_{L^\infty_{\ub}H^k(u)}\|\chih,\tr\chi\|^2_{L^\infty_{\ub}H^k(u)}\right)\\
&\lesssim\mathcal{O}^{(0)}_k+\delta\Delta^2+\mathcal{O}_k[\chih]^2\lesssim C(\mathcal{O}^{(0)}_k,\mathcal{R}_k[\alpha])\end{align*}
if $\delta$ is sufficiently small depending on $\Delta$. Here we have used \eqref{estimatechih} to estimate $\mathcal{O}_k[\chih]^2$. Then we have
\begin{align}\label{estimatetrchi}
\mathcal{O}_k[\tr\chi]\lesssim C(\mathcal{O}^{(0)}_k,\mathcal{R}_k[\alpha]).\end{align}

We then turn to the equation for $\Dbh\chibh$. We will have
 \begin{align*}
\|\chibh\|_{H^k(\ub,u)}&\lesssim\|\chibh\|_{H^k(\ub,u_0)}+\|\omegab\|_{L^\infty_{u}H^k(\ub)}\|\chibh\|_{L^\infty_{u}H^k(\ub)}+\|\Omega\|_{L^\infty_uH^k(\ub)}\|\alphab\|_{L^2_{u}H^k(\ub)}\\&\lesssim\delta^{\frac{1}{2}}\mathcal{O}^{(0)}_k+\delta\Delta^2+\delta^{\frac{1}{2}}\underline{\mathcal{R}}_k[\alphab]\lesssim\delta^{\frac{1}{2}}\mathcal{O}^{(0)}_k+\delta^{\frac{1}{2}}\underline{\mathcal{R}}_k[\alphab]+\delta^{\frac{1}{2}}\end{align*}
if $\delta$ is sufficiently small depending on $\Delta$, where we use 
\begin{equation}\label{L1L2Linfty'}
\begin{split}
\|\cdot\|_{L^1_{u}H^k(\ub)}\lesssim&\|\cdot\|_{L^\infty_{u}H^k(\ub)},\\
\|\cdot\|_{L^1_{u}H^k(\ub)}\lesssim&\|\cdot\|_{L^2_{u}H^k(\ub)}
\end{split} 
\end{equation}
which are similar to \eqref{L1Linfty} and \eqref{L1L2}. Hence we have
\begin{align}\label{estimatechibh}
\mathcal{O}_k[\chibh]\lesssim \mathcal{O}^{(0)}_k+\underline{\mathcal{R}}_k[\alphab]+1.
\end{align}

The next equation for $\Db\tr\chib$ should be rewritten as a renormalized form, i.e, in terms of $\widetilde{\tr\chib}=\tr\chib+\frac{2}{|u|}$. We write
\begin{align*}
\Db\left(\tr\chib+\frac{2}{|u|}\right)&=\omegab\tr\chib-\frac{1}{2}\Omega(\tr\chib)^2-\Omega|\chibh|^2+\frac{2}{|u|^2}\\
&=-\frac{2\omegab}{|u|}+\omegab\widetilde{\tr\chib}-\Omega|\chibh|^2-\frac{1}{2}(\Omega-1)(\tr\chib)^2-\frac{1}{2}(\widetilde{\tr\chib})^2+\frac{2\widetilde{\tr\chib}}{|u|}.
\end{align*}
The last term can be absorbed by the left as
\begin{align*}
\Db\left(|u|^2\widetilde{\tr\chib}\right)=|u|^2\left(-\frac{2\omegab}{|u|}+\omegab\widetilde{\tr\chib}-\Omega|\chibh|^2-\frac{1}{2}(\Omega-1)(\tr\chib)^2-\frac{1}{2}(\widetilde{\tr\chib})^2\right).
\end{align*}
The right hand side in the norm $\|\cdot\|_{L^\infty_uH^k(\ub)}$ is estimated term by term as follows. The first terms is estimated by
\begin{align*}
\|-2|u|\omegab\|_{L^\infty_uH^k(\ub)}\lesssim\delta^{\frac{1}{2}}\mathcal{O}_k[\omegab],
\end{align*}
the second, the third and the last terms are estimated by
\begin{align*}
\left\||u|^2\left(-\omegab\widetilde{\tr\chib}-\Omega|\chibh|^2-\frac{1}{2}(\widetilde{\tr\chib})^2\right)\right\|_{L^\infty_uH^k(\ub)}\lesssim\delta\Delta^2\lesssim\delta^{\frac{1}{2}}
\end{align*}
if $\delta$ is sufficiently small depending on $\Delta$. The fourth term is estimated by
\begin{align*}
\left\||u|^2\left(-\frac{1}{2}(\Omega-1)(\tr\chib)^2\right)\right\|_{L^\infty_uH^k(\ub)}&\lesssim\|\Omega-1\|_{L^\infty_uH^k(\ub)}\left(\left\|\frac{2}{|u|}\right\|_{L^\infty_uH^k(\ub)}+\|\widetilde{\tr\chib}\|_{L^\infty_uH^k(\ub)}\right)^2\\
&\lesssim\delta^{\frac{1}{2}}\mathcal{O}^{(0)}_k(1+\delta^{\frac{1}{2}}\Delta+\delta\Delta^2)\lesssim\delta^{\frac{1}{2}}\mathcal{O}^{(0)}_k
\end{align*}
if $\delta$ is sufficiently small depending on $\Delta$. Here we use the estimate \eqref{Omega-1}. Combining the above estimates we have
\begin{align*}
\||u|^2\widetilde{\tr\chib}\|_{H^k(\ub,u)}\lesssim\||u|^2\widetilde{\tr\chib}\|_{H^k(\ub,u_0)}+\left\|\Db\left(|u|^2\widetilde{\tr\chib}\right)\right\|_{L^\infty_uH^k(\ub)}\lesssim\delta^{\frac{1}{2}}\mathcal{O}^{(0)}_k+\delta^{\frac{1}{2}}\mathcal{O}_k[\omegab]+\delta^{\frac{1}{2}}
\end{align*}
which implies that
\begin{align}\label{estimatetrchibpre}
\mathcal{O}_k[\widetilde{\tr\chib}]\lesssim\mathcal{O}^{(0)}_k+\mathcal{O}_k[\omegab]+1.
\end{align}
To complete the estimate for $\widetilde{\tr\chib}$, we need the to estimate $\omegab$. So we turn to the equation for $D\omegab$, we will have
 \begin{align*}
&\|\omegab\|_{H^k(\ub,u)}\\
\lesssim&\|\omegab\|_{H^k(0,u)}+\|\Omega\|^2_{L^\infty_{\ub}H^k(u)}\left(\delta\left(\|\eta\|_{L^\infty_{\ub}H^k(u)}\|\etab\|_{L^\infty_{\ub}H^k(u)}+\|\eta\|^2_{L^\infty_{\ub}H^k(u)}\right)+\delta^{\frac{1}{2}}\|\rho\|_{L^2_{\ub}H^k(u)}\right)\\
\lesssim&\delta^{\frac{1}{2}}\mathcal{O}^{(0)}_k+\delta^2\Delta^2+\delta\mathcal{R}_k[\rho]\lesssim\delta^{\frac{1}{2}}\mathcal{O}^{(0)}_k+\delta^{\frac{1}{2}}
\end{align*}
if $\delta$ is sufficiently small depending on $\mathcal{R}_k[\rho]$ and $\Delta$. This implies that
\begin{align}\label{estimateomegab}
\mathcal{O}_k[\omegab]\lesssim\mathcal{O}^{(0)}_k+1
\end{align}
and together with \eqref{estimatetrchibpre}, we have
\begin{align}\label{estimatetrchib}
\mathcal{O}_k[\widetilde{\tr\chib}]\lesssim\mathcal{O}^{(0)}_k+1.
\end{align}

We turn to the equation for $D\eta$. We will have
 \begin{align*}
&\|\eta\|_{H^k(\ub,u)}\\
\lesssim&\|\eta\|_{H^k(0,u)}+\|\Omega\|_{L^\infty_{\ub}H^k(u)}\left(\delta\|\chih,\tr\chi\|_{L^\infty_{\ub}H^k(u)}\|\etab\|_{L^\infty_{\ub}H^k(u)}+\delta^{\frac{1}{2}}\|\beta\|_{L^2_{\ub}H^k(u)}\right)\\
\lesssim&\delta^{\frac{1}{2}}\mathcal{O}^{(0)}_k+\delta\Delta^2+\delta^{\frac{1}{2}}\mathcal{R}_k[\beta]\lesssim\delta^{\frac{1}{2}}(\mathcal{O}^{(0)}_k+\mathcal{R}_k[\beta]+1).
\end{align*}
This implies that
\begin{align}\label{estimateeta}
\mathcal{O}_k[\eta]\lesssim\mathcal{O}^{(0)}_k+\mathcal{R}_k[\beta]+1.
\end{align}

We then consider the equation for $\Db\etab$. We will have
 \begin{align*}
\|\etab\|_{H^k(\ub,u)}&\lesssim\|\etab\|_{H^k(\ub,u_0)}+\|\Omega\|_{L^\infty_uH^k(\ub)}\left(\|\chibh,\tr\chib\|_{L^\infty_{u}H^k(\ub)}\|\eta\|_{L^\infty_{u}H^k(\ub)}+\|\betab\|_{L^2_{u}H^k(\ub)}\right)\\&\lesssim\delta^{\frac{1}{2}}\mathcal{O}^{(0)}_k+\delta^{\frac{1}{2}}\mathcal{O}_k[\eta]+\delta\Delta^2+\delta^{\frac{1}{2}}\underline{\mathcal{R}}_k[\betab]\lesssim\delta^{\frac{1}{2}}\mathcal{O}^{(0)}_k+\delta^{\frac{1}{2}}(\mathcal{R}_k[\beta]+\underline{\mathcal{R}}_k[\betab])+\delta^{\frac{1}{2}}\end{align*}
if $\delta$ is sufficiently small depending on $\Delta$. Here we use $\tr\chib=-\frac{2}{|u|}+\widetilde{\tr\chib}$ to estimate $\tr\chib$, and use \eqref{estimateeta} to estimate $\eta$. We then have
\begin{align}\label{estimateetab}
\mathcal{O}_k[\etab]\lesssim\mathcal{O}^{(0)}_k+\mathcal{R}_k[\beta]+\underline{\mathcal{R}}_k[\betab]+1.
\end{align}

At last we consider the equation for $\Db\omega$. We will have
 \begin{align*}
&\|\omega\|_{H^k(\ub,u)}\\
\lesssim&\|\omega\|_{H^k(\ub,u_0)}+\|\Omega\|^2_{L^\infty_uH^k(\ub)}\left(\|\eta\|_{L^\infty_{u}H^k(\ub)}\|\etab\|_{L^\infty_{u}H^k(\ub)}+\|\eta\|^2_{L^\infty_{u}H^k(\ub)}+\|\rho\|_{L^2_{u}H^k(\ub)}\right)\\
\lesssim&\mathcal{O}^{(0)}_k+\delta\Delta^2+\underline{\mathcal{R}}_k[\rho]\lesssim\mathcal{O}^{(0)}_k+\underline{\mathcal{R}}_k[\rho]+1\end{align*}
if $\delta$ is sufficiently small depending on  $\Delta$. Hence we have
\begin{align}\label{estimateomega}
\mathcal{O}_k[\omega]\lesssim\mathcal{O}^{(0)}_k+\underline{\mathcal{R}}_k[\rho]+1.
\end{align}

We have obtained the desired estimates \eqref{estimatechih}, \eqref{estimatetrchi}, \eqref{estimatechibh}, \eqref{estimateomegab}, \eqref{estimatetrchib}, \eqref{estimateeta}, \eqref{estimateetab}, \eqref{estimateomega}. If $\Delta$ is chosen sufficiently large (as a function of $\mathcal{O}^{(0)}_k$), we have improved the bootstrap assumption \eqref{bootstrap}. By a bootstrap argument, this implies that \eqref{bootstrap} is in fact true and then the estimates mentioned above are in fact true. Moreover, since $\Delta$ is chosen depending on $\mathcal{O}^{(0)}_k$, then $\delta$ can be chosen depending only on  $\mathcal{O}^{(0)}_k$ and $\mathcal{R}_k$. We then complete the proof of Proposition \ref{curvatureconnection}.

\end{proof}

\begin{proposition}\label{curvature}

If $\delta$ is sufficiently small depending on $\mathcal{O}^{(0)}_k$ and $\mathcal{R}^{(0)}_k$, then
$$\mathcal{R}_k, \underline{\mathcal{R}}_k\lesssim C(\mathcal{O}^{(0)}_k,\mathcal{R}^{(0)}_k)$$
where $C(\mathcal{O}^{(0)}_k, \mathcal{R}^{(0)}_k)$ is a constant depending only on $\mathcal{O}^{(0)}_k, \mathcal{R}^{(0)}_k$.
\end{proposition}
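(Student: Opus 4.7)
The plan is to run a bootstrap argument on the curvature norms in parallel with Proposition \ref{curvatureconnection}. First, I would assume a bootstrap bound $\mathcal{R}_k+\underline{\mathcal{R}}_k\le \Delta'$. Under this assumption, Proposition \ref{curvatureconnection} gives $\mathcal{O}_k\le C(\mathcal{O}_k^{(0)},\Delta')$ once $\delta$ is small, and in particular yields the desired $\delta$-weighted $L^\infty$ control of all connection coefficients. The goal is then to recover $\mathcal{R}_k+\underline{\mathcal{R}}_k\lesssim C(\mathcal{O}_k^{(0)},\mathcal{R}_k^{(0)})$ with a constant independent of $\Delta'$ by performing energy estimates on pairs of Bianchi equations, so that choosing $\Delta'$ large closes the bootstrap.

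The energy estimates are obtained by grouping the null Bianchi system into the standard pairs $\{\alpha,\beta\}$, $\{\beta,(\rho,\sigma)\}$, $\{(\rho,\sigma),\betab\}$, $\{\betab,\alphab\}$. For each pair I multiply by the corresponding component, integrate over the spacetime slab $[0,\ub]\times[u_0,u]$ with volume form of the double null foliation, and integrate by parts the angular principal terms. After commuting the Bianchi equations with $\nablas^i$ for $0\le i\le k$ (using the commutation formulas $[D,\nablas^i]$ and $[\Db,\nablas^i]$ stated in the proof of Proposition \ref{curvatureconnection}), this yields inequalities of the schematic form
\begin{align*}
\mathcal{R}_k[R](u)^2+\underline{\mathcal{R}}_k[\underline{R}](\ub)^2\lesssim (\mathcal{R}_k^{(0)})^2+\iint_{[0,\ub]\times[u_0,u]}(\text{error})\,\D\mu,
\end{align*}
where the error consists of trilinear terms of the shape $\Gamma\cdot R\cdot R$ together with commutator contributions $\nablas^{1+j_1}(\Omega\chi)\cdot\nablas^{j_2}R$ or their $\Omega\chib$ analogues. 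Each such integral is estimated by pulling out the $\delta$-weighted $L^\infty$ norm of the connection coefficient via the Sobolev estimate \eqref{Sobolev1}, then applying H\"older in $\ub$ or $u$ to leave two $L^2_{\ub}H^k$ or $L^2_u H^k$ curvature norms. The point is that the powers of $\delta$ in the definition of $\mathcal{R}_k,\underline{\mathcal{R}}_k$ are exactly designed so that every nonlinear term picks up at least a net factor $\delta^{1/2}$ and can be absorbed into the left-hand side once $\delta$ is small enough.

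The main obstacle, as emphasized in Section \ref{moreonregionIII}, is that the modified hierarchy \eqref{hierarchy} has $\betab$ and $\alphab$ at level $\delta^{1/2}$ instead of $\delta$ and $\delta^{3/2}$. I therefore expect new borderline terms in the pairs $\{(\rho,\sigma),\betab\}$ and $\{\betab,\alphab\}$: typically contributions like $\int\chibh\cdot\alphab\cdot\alphab$, $\int\etab\cdot\betab\cdot\alphab$, $\int\chih\cdot\alphab\cdot\betab$ in the $\alphab$-energy and analogous $\rho,\sigma,\betab$ cross-terms where the $\delta$-counting is exactly balanced. For these, the coefficient connection coefficient must be controlled in $L^\infty_{\ub}L^2_u$ or $L^\infty_u L^2_{\ub}$ with the correct $\delta$-weight; here one uses the $\widetilde{\tr\chib}$ renormalization, the estimate $\mathcal{O}_k[\chibh]\lesssim\mathcal{O}_k^{(0)}+\underline{\mathcal{R}}_k[\alphab]+1$ from Proposition \ref{curvatureconnection}, and Gr\"onwall in $u$ to close the system, essentially as in \cite{Chr08} but with the new $\delta$-weights. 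For the pair $\{\alpha,\beta\}$ the leading borderline term $\int\omega\,\alpha\cdot\alpha$ is handled exactly as in Christodoulou since the $\delta$-weights of $\alpha,\beta$ are unchanged.

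Once these energy inequalities are in place, summing over the four Bianchi pairs and applying Gr\"onwall in both $u$ and $\ub$ absorbs the harmless terms and yields
\begin{align*}
\mathcal{R}_k+\underline{\mathcal{R}}_k\le C(\mathcal{O}_k^{(0)},\mathcal{R}_k^{(0)})
\end{align*}
independently of $\Delta'$, which improves the bootstrap for $\Delta'$ large enough. A final consistency check closes the coupled bootstrap with Proposition \ref{curvatureconnection}, so that $\delta$ depends only on $\mathcal{O}_k^{(0)}$ and $\mathcal{R}_k^{(0)}$. I would carry out the calculation pair by pair, writing out the error terms explicitly and checking the $\delta$-accounting in the new hierarchy, which is the step where novelty relative to \cite{Chr08} appears.
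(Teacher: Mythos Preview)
Your overall architecture (bootstrap on curvature, energy estimates on the four Bianchi pairs, feed in Proposition \ref{curvatureconnection}) is right, but you have misidentified the borderline terms and, as a consequence, the mechanism that closes the estimates. In the modified hierarchy \eqref{hierarchy} the pairs $\{\alpha,\beta\}$ and $\{\betab,\alphab\}$ carry \emph{no} borderline error: the only non-labeled sources are $\chih\cdot(\rho,\sigma)$ in the $\Db\alpha$ equation and $\chibh\cdot(\rho,\sigma)$ in the $D\alphab$ equation, and a $\delta$-count shows each of these comes with a spare positive power of $\delta$. There is no $\omega\cdot\alpha\cdot\alpha$ borderline term here; the $\omegab\alpha$ and $\omega\alphab$ contributions are of type \uppercase\expandafter{\romannumeral1}--\uppercase\expandafter{\romannumeral2} and are absorbed exactly as in \eqref{estimateI}--\eqref{estimateII}. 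The genuine borderline terms are the boxed ones $\chibh\cdot\alpha$ in $D\rho,D\sigma$ (pair $(\beta,(\rho,\sigma))$) and $\chih\cdot\alphab$ in $\Db\rho,\Db\sigma$ (pair $((\rho,\sigma),\betab)$); after the energy identity these produce cubic expressions $(\mathcal{O}^{(0)}_k+\underline{\mathcal{R}}_k[\alphab]+1)\mathcal{R}_k[\alpha]\underline{\mathcal{R}}_k[\rho,\sigma]$ and $(\mathcal{O}^{(0)}_k+\mathcal{R}_k[\alpha]+1)\underline{\mathcal{R}}_k[\alphab]\underline{\mathcal{R}}_k[\rho,\sigma]$ with no residual power of $\delta$.

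These cubic terms cannot be absorbed by Gr\"onwall or by smallness of $\delta$, so your proposed ``sum the four pairs and Gr\"onwall in both variables'' step would not close. What the paper does---and what you are missing---is a \emph{sequential} argument: first close the two borderline-free pairs to obtain $\mathcal{R}_k[\alpha]+\underline{\mathcal{R}}_k[\beta]\lesssim \mathcal{R}^{(0)}_k+1$ and $\mathcal{R}_k[\betab]+\underline{\mathcal{R}}_k[\alphab]\lesssim \mathcal{R}^{(0)}_k+1$ depending only on data; then insert these bounds into the borderline terms of the pair $(\beta,(\rho,\sigma))$, which renders the resulting inequality \emph{sublinear} in $\underline{\mathcal{R}}_k[\rho,\sigma]$ and hence self-closing; finally feed all three into the last pair $((\rho,\sigma),\betab)$. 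Once you reorganise the argument this way the bootstrap improves exactly as you say, but the order of the four estimates and the sublinearity step are essential and should replace the simultaneous Gr\"onwall you sketched.
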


\begin{proof}
The proof is based on the following lemma:
\begin{lemma}\label{divergencetheorem} If $\delta$ is sufficiently small depending on $\mathcal{O}^{(0)}_k$ and $\mathcal{R}^{(0)}_k$, we have
\begin{align*}
\|R\|^2_{L^2_{\ub}H^k(u)}+\|\underline{R}\|^2_{L^2_uH^k(\ub)}\lesssim&\|R\|^2_{L^2_{\ub}H^k(u_0)}+\|\underline{R}\|^2_{L^2_uH^k(0)}\\&+\int_{u_0}^{u_1}\int_0^{\delta}\|\Db R+\Omega\mathcal{D}R\|_{H^k(\ub,u)}\|R\|_{H^k(\ub,u)}\D\ub\D u\\
&+\int_{u_0}^{u_1}\int_0^{\delta}\|DR-\Omega{}^*\mathcal{D}\underline{R}\|_{H^k(\ub,u)}\|\underline{R}\|_{H^k(\ub,u)}\D\ub\D u
\end{align*}
for any pair of the curvature components $(R,\underline{R})$. Here $\mathcal{D}$ is one of the Hodge operators or their $L^2$ adjoint, and ${}^*\mathcal{D}$ is the $L^2$ adjoint of $\mathcal{D}$, i.e., 
$$\int_{S_{\ub,u}}\gs(\mathcal{D} R,\underline{R})\D\mu_{\gs}=\int_{S_{\ub,u}}\gs(R,{}^*\mathcal{D} \underline{R})\D\mu_{\gs},$$
where the contractions of tensorfields by $\gs$ are defined in their natural ways.
\end{lemma}
\begin{proof}[Proof of Lemma \ref{divergencetheorem}]
We compute (in a schematic way) for $0\le i\le k$ that
\begin{align*}
\Db\nablas^iR=&\nablas^i(\Db R+\Omega\mathcal{D}\underline{R})+[\Db,\nablas^i]R-\nablas^i(\Omega\mathcal{D}\underline{R})\\
=&\nablas^i(\Db R+\Omega\mathcal{D}\underline{R})+\sum_{j=1}^i\nablas^j(\Omega\chib)\nablas^{i-j}R-\nablas^i\mathcal{D}(\Omega\underline{R})+\nabla^i(\underline{R}\cdot\nablas\Omega)\\
=&\nablas^i(\Db R+\Omega\mathcal{D}\underline{R})-\mathcal{D}\left(\nablas^i(\Omega\underline{R})\right)\\&+\sum_{j_1+j_2=i-1}\nablas^{1+j_1}(\Omega\chib)\nablas^{j_2}R+\sum_{j_1+j_2=i-1}\nablas^{j_1}K\nablas^{j_2}(\Omega\underline{R})+\nablas^i(\underline{R}\cdot \Omega(\eta+\etab)),
\end{align*}
where we use the relation 
$\nablas\Omega=\Omega\nablas\log\Omega=\frac{\Omega(\eta+\etab)}{2}$, and similarly
\begin{align*}
D\nablas^i\underline{R}=&\nablas^i(D\underline{R}-\Omega{}^*\mathcal{D}R)+{}^*\mathcal{D}\left(\nablas^i(\Omega R)\right)\\&+\sum_{j_1+j_2=i-1}\nablas^{1+j_1}(\Omega\chi)\nablas^{j_2}R-\sum_{j_1+j_2=i-1}\nablas^{j_1}K\nablas^{j_2}(\Omega R)-\nablas^i(R \cdot \Omega(\eta+\etab)).
\end{align*}
If $R$ or $\underline{R}$ is $\rho$ or $\sigma$, then we must have $j_2\ge1$ in the sum. We then compute
\begin{equation*}
\begin{split}
&\Db(|\nablas^iR|^2\D\mu_{\gs})+D(|\nablas^i\underline{R}|^2\D\mu_{\gs})\\
=&\left(\Omega\chib\cdot|\nablas^iR|^2+\Omega\chi\cdot|\nablas^i\underline{R}|^2+2\gs(\nablas^iR,\Db\nablas^iR)+2\gs(\nablas^i\underline{R},D\nablas^i\underline{R})\right)\D\mu_{\gs}\\
\end{split}
\end{equation*}
Integrating the above formula over $[0,\ub]\times[u_0,u]\times S_{\ub',u'}$ and summing over $0\le i\le k$,
\begin{equation}\label{divergence1}
\begin{split}
&\int_0^{\ub}\|R\|^2_{H^k(\ub',u)}\D\ub'+\int_{u_0}^{u}\|\underline{R}\|^2_{H^k(\ub,u')}\D u'\\
\lesssim&\int_0^{\ub}\|R\|^2_{H^k(\ub',u_0)}\D\ub'+\int_{u_0}^{u}\|\underline{R}\|^2_{H^k(0,u')}\D u'\\
&+\int_{u_0}^u\int_0^{\ub}\|\Db R+\Omega\mathcal{D}R\|_{H^k(\ub',u')}\|R\|_{H^k(\ub',u')}\D\ub'\D u'\\
&+\int_{u_0}^u\int_0^{\ub}\|DR-\Omega{}^*\mathcal{D}\underline{R}\|_{H^k(\ub',u')}\|\underline{R}\|_{H^k(\ub',u')}\D\ub'\D u'\\
&+\underbrace{\int_{u_0}^u\int_0^{\ub}\|\chib\|_{H^k(\ub',u')}\|R\|^2_{H^k(\ub',u')}\D\ub'\D u'}_{\uppercase\expandafter{\romannumeral1}}+\underbrace{\int_{u_0}^u\int_0^{\ub}\|\chi\|_{H^k(\ub',u')}\|\underline{R}\|^2_{H^k(\ub',u')}\D\ub'\D u'}_{\uppercase\expandafter{\romannumeral2}}\\
&+\underbrace{\int_{u_0}^u\int_0^{\ub}\|\eta,\etab\|_{H^k(\ub',u')}\|R\|_{H^k(\ub',u')}\|\underline{R}\|_{H^k(\ub',u')}\D\ub'\D u'}_{\uppercase\expandafter{\romannumeral3}}\\
&+\underbrace{\int_{u_0}^u\int_0^{\ub}\|K\|_{H^{k-1}(\ub',u')}\|R\|_{H^k(\ub',u')}\|\underline{R}\|_{H^k(\ub',u')}\D\ub'\D u'}_{\uppercase\expandafter{\romannumeral4}}
\end{split}
\end{equation}
where we have used both of the inequalities in \eqref{Sobolev1}. Since $\|\chib\|_{H^k(\ub,u)}\lesssim\frac{2}{|u|}+\delta^{\frac{1}{2}}(\mathcal{O}^{(0)}_k+\underline{\mathcal{R}}_k[\alphab]+1)$ from \eqref{estimatechibh}, \eqref{estimatetrchib}, we have
\begin{align}\label{estimateI}
\uppercase\expandafter{\romannumeral1}\lesssim\int_{u_0}^u\frac{2}{|u|}\int_0^{\ub}\|R\|^2_{H^k(\ub',u')}\D\ub'\D u'+\delta^{\frac{1}{2}}(\mathcal{O}^{(0)}_k+\underline{\mathcal{R}}_k[\alphab]+1)\sup_{u_0\le u'\le u}\int_0^{\ub}\|R\|^2_{H^k(\ub',u')}\D \ub'.
\end{align}
The first term on the right can be absorbed by the left hand side of \eqref{divergence1} by Gronwall inequality, and the second term on the right can be absorbed by the left hand side of \eqref{divergence1} after taking supremum over $u'\in[u_0,u]$ if $\delta$ is sufficiently small depending on $\mathcal{O}^{(0)}_k$ and $\underline{\mathcal{R}}_k$. For the term \uppercase\expandafter{\romannumeral2}, we use $\|\chi\|_{H^k(\ub,u)}\lesssim\delta^{-\frac{1}{2}}C(\mathcal{O}^{(0)}_k,\mathcal{R}_k[\alpha])$ from \eqref{estimatechih}, \eqref{estimatetrchi}, then
\begin{align}\label{estimateII}
\uppercase\expandafter{\romannumeral2}\lesssim\delta^{\frac{1}{2}}C(\mathcal{O}^{(0)}_k,\mathcal{R}_k[\alpha])\sup_{0\le\ub'\le\ub}\int_{u_0}^{u}\|\underline{R}\|^2_{H^k(\ub',u')}\D u'
\end{align}
which can be absorbed by the left hand side of \eqref{divergence1} after taking supremum over $\ub'\in[0,\ub]$ if $\delta$ is sufficiently small depending on $\mathcal{O}^{(0)}_k$ and $\mathcal{R}_k$. The term \uppercase\expandafter{\romannumeral3} is estimated as
\begin{equation}\label{estimateIII}
\begin{split}
\uppercase\expandafter{\romannumeral3}
\lesssim&\|\eta,\etab\|_{L^\infty_{\ub}L^\infty_u H^k}\sup_{u_0\le u'\le u}\left(\int_0^{\ub}\|R\|^2_{H^k(\ub',u')}\D \ub'\right)^{\frac{1}{2}}\sup_{0\le\ub'\le\ub}\left(\delta\int_{u_0}^{u}\|\underline{R}\|^2_{H^k(\ub',u')}\D u'\right)^{\frac{1}{2}}\\
\lesssim&\delta^{\frac{1}{2}}(\mathcal{O}^{(0)}_k+\mathcal{R}_k[\beta]+\underline{\mathcal{R}}_k[\betab]+1)\\&\times\delta^{\frac{1}{2}}\left(\sup_{u_0\le u'\le u}\int_0^{\ub}\|R\|^2_{H^k(\ub',u')}\D \ub'+\sup_{0\le\ub'\le\ub}\int_{u_0}^{u}\|\underline{R}\|^2_{H^k(\ub',u')}\D u'\right),
\end{split}
\end{equation}
where we use \eqref{estimateeta} and \eqref{estimateetab}. This term can also be absorbed by the left hand side of \eqref{divergence1} if $\delta$ is sufficiently small depending on $\mathcal{O}^{(0)}_k$, $\mathcal{R}_k$ and $\underline{\mathcal{R}}_k$. To estimate \uppercase\expandafter{\romannumeral4}, we need the Gauss equation, expressing the Gaussian curvature $K$ in terms of the spacetime connection and curvature,
\begin{align*}K=-\frac{1}{4}\tr\chi \tr\underline{\chi}+\frac{1}{2}(\widehat{\chi},\widehat{\underline{\chi}})-\rho.\end{align*}
To obtain an estimate for $K$, we use one of the null Bianchi equations
\begin{align*}
D\rho+\frac{3}{2}\Omega \tr\chi \rho-\Omega\{ \divs{\beta} +(2\underline{\eta}+\zeta,\beta)-\frac{1}{2}(\underline{\widehat{\chi}},\alpha) \}=0.
\end{align*}
Then by \eqref{transportestimate1}, the second of \eqref{Sobolev1}, \eqref{L1Linfty}, \eqref{L1L2} and the estimates in Proposition \ref{curvatureconnection}, we have
\begin{align*}
\|\rho\|_{H^{k-1}(\ub,u)}\lesssim&\|\rho\|_{H^{k-1}(0,u)}+\delta\|\tr\chi\|_{L^\infty_{\ub}H^k(u)}\|\rho\|_{L^\infty_{\ub}H^{k-1}(u)}+\delta^{\frac{1}{2}}\|\beta\|_{L^2_{\ub}H^k(u)}\\
&+\|\eta,\etab\|_{L^\infty_{\ub}H^k(u)}\cdot\delta^{\frac{1}{2}}\|\beta\|_{L^2_{\ub}H^{k-1}(u)}+\|\chibh\|_{L^\infty_{\ub}H^k(u)}\cdot\delta^{\frac{1}{2}}\|\alpha\|_{L^2_{\ub}H^{k-1}(u)}\\
\lesssim&\mathcal{R}^{(0)}_k+\delta C(\mathcal{O}^{(0)}_k, \mathcal{R}_k[\alpha])\|\rho\|_{L^\infty_{\ub}H^{k-1}(u)}\\
&+\delta^{\frac{1}{2}}(1+\delta^{\frac{1}{2}}(\mathcal{O}^{(0)}_k+\mathcal{R}_k[\beta]+\underline{R}_k[\betab]+1))\mathcal{R}_k[\beta]\\
&+(\mathcal{O}^{(0)}_k+\underline{\mathcal{R}}_k[\alphab]+1)\mathcal{R}_k[\alpha].
\end{align*}
If $\delta$ is sufficiently small depending on $\mathcal{O}^{(0)}_k$, $\mathcal{R}_k$ and $\underline{\mathcal{R}}_k$, then 
\begin{align*}
\|\rho\|_{H^{k-1}(\ub,u)}\lesssim\mathcal{R}^{(0)}_k+(\mathcal{O}^{(0)}_k+\underline{\mathcal{R}}_k[\alphab]+1)\mathcal{R}_k[\alpha]+1
\end{align*}
and therefore by the Gauss equation and the estimates \eqref{estimatechih}, \eqref{estimatetrchi}, \eqref{estimatetrchi}, \eqref{estimatetrchib},
\begin{align*}
\|K\|_{H^{k-1}(\ub,u)}\lesssim C(\mathcal{O}^{(0)}_k, \mathcal{R}_k,\underline{\mathcal{R}}_k).
\end{align*}
Then the term \uppercase\expandafter{\romannumeral4} is estimated by
\begin{align*}
\uppercase\expandafter{\romannumeral4}
\lesssim&\|K\|_{L^\infty_{\ub}L^\infty_u H^{k-1}}\sup_{u_0\le u'\le u}\left(\int_0^{\ub}\|R\|^2_{H^k(\ub',u')}\D \ub'\right)^{\frac{1}{2}}\sup_{0\le\ub'\le\ub}\left(\delta\int_{u_0}^{u}\|\underline{R}\|^2_{H^k(\ub',u')}\D u'\right)^{\frac{1}{2}}\\
\lesssim&\delta^{\frac{1}{2}}C(\mathcal{O}^{(0)}_k, \mathcal{R}_k, \underline{\mathcal{R}}_k)\left(\sup_{u_0\le u'\le u}\int_0^{\ub}\|R\|^2_{H^k(\ub',u')}\D \ub'+\sup_{0\le\ub'\le\ub}\int_{u_0}^{u}\|\underline{R}\|^2_{H^k(\ub',u')}\D u'\right),
\end{align*}
which can be absorbed by the left hand side of \eqref{divergence1} if $\delta$ is sufficiently small depending on $\mathcal{O}^{(0)}_k, \mathcal{R}_k, \underline{\mathcal{R}}_k$. The proof of Lemma \ref{divergencetheorem} is then completed by choosing $\ub=\delta$ and $u=u_1$.

\end{proof}

The estimates for the curvature components are done through the null Bianchi equations which can also be found in \cite{Chr08}. They can be written in four groups as follows.
\begin{align*}
&\begin{cases}
\underline{\widehat{D}}\alpha-\Omega\nablas\tensor\beta = \underbrace{\frac{1}{2}\Omega \tr\underline{\chi}\alpha - 2\underline{\omega} \alpha}_{\uppercase\expandafter{\romannumeral1}} -\Omega\{\underbrace{ -(4\eta+\zeta)\widehat{\otimes}\beta}_{\uppercase\expandafter{\romannumeral3}} +3\widehat{\chi}\rho+3^\ast\widehat{\chi}\sigma\}
\\D\beta-\Omega\divs\alpha=\underbrace{-\frac{3}{2}\Omega \tr\chi \beta+\Omega \widehat{\chi}\cdot\beta-\omega\beta}_{\uppercase\expandafter{\romannumeral2}}+\underbrace{\Omega (\underline{\eta}+2\zeta)\cdot\alpha}_{\uppercase\expandafter{\romannumeral3}}
\end{cases}\\
&\begin{cases}
\underline{D}\underline{\beta}-\Omega\divs\alphab=\underbrace{-\frac{3}{2}\Omega \tr\underline{\chi} \underline{\beta}+{\Omega} \widehat{\underline{\chi}}\cdot\underline{\beta}+\underline{\omega}\underline{\beta}}_{\uppercase\expandafter{\romannumeral1}}-\underbrace{\Omega ({\eta}+2\zeta)\cdot\underline{\alpha}}_{\uppercase\expandafter{\romannumeral3}}
\\
\widehat{D}\underline{\alpha} +\Omega\nablas\tensor\betab=\underbrace{ \frac{1}{2}\Omega \tr\chi\underline{\alpha} - 2\omega\underline{\alpha}}_{\uppercase\expandafter{\romannumeral2}} -\Omega\{\underbrace{(4\underline{\eta}-\zeta)\widehat{\otimes}\underline{\beta } }_{\uppercase\expandafter{\romannumeral3}}+3\underline{\widehat{\chi}}\rho-3^\ast\underline{\widehat{\chi}}\sigma\}
\end{cases}\\
&\begin{cases}
\underline{D}\beta- \Omega( \ds{\rho} +^*\ds{\sigma})= \underbrace{- \frac{1}{2}\Omega \tr\underline{\chi}\beta+\Omega\underline{\widehat{\chi}} \cdot\beta-\underline{\omega}\beta}_{\uppercase\expandafter{\romannumeral1}}+\Omega\{\underbrace{3\eta\rho+3^*\eta\sigma}_{\uppercase\expandafter{\romannumeral3}} +2\widehat{\chi}^\sharp\cdot\underline{\beta}\}
\\
D\rho-\Omega\divs\beta=\underbrace{-\frac{3}{2}\Omega \tr\chi \rho}_{\uppercase\expandafter{\romannumeral2}}+\Omega\{ \underbrace{(2\underline{\eta}+\zeta,\beta)}_{\uppercase\expandafter{\romannumeral3}}\boxed{-\frac{1}{2}(\underline{\widehat{\chi}},\alpha)} \}
\\
D\sigma+\Omega\curls\beta=\underbrace{-\frac{3}{2}\Omega \tr\chi\sigma}_{\uppercase\expandafter{\romannumeral2}}-\Omega\{\underbrace{(2\underline{\eta}+\zeta),^*\beta)}_{\uppercase\expandafter{\romannumeral3}}\boxed{-\frac{1}{2}\widehat{\underline{\chi}}\wedge\alpha}\}
\end{cases}
\\
&\begin{cases}
\underline{D}\rho+\Omega\divs\betab=\underbrace{-\frac{3}{2}\Omega \tr\underline{\chi} \rho}_{\uppercase\expandafter{\romannumeral1}}-\Omega\{ \underbrace{(2\eta-\zeta,\underline{\beta})}_{\uppercase\expandafter{\romannumeral3}}\boxed{+\frac{1}{2}(\widehat{\chi},\underline{\alpha})   }  \}
\\
\underline{D}\sigma+\Omega\curls\betab=\underbrace{-\frac{3}{2}\Omega \tr\underline{\chi}\sigma}_{\uppercase\expandafter{\romannumeral1}}-\Omega\{\underbrace{(2\underline{\eta}-\zeta),^*\underline{\beta})}_{\uppercase\expandafter{\romannumeral3}}\boxed{-\frac{1}{2}\widehat{{\chi}}\wedge\underline{\alpha}}\} 
\\
D\underline{\beta }+ \Omega(\ds{\rho} -^*\ds{\sigma})=\underbrace{- \frac{1}{2}\Omega \tr\chi\underline{\beta}+\Omega\widehat{\chi}\cdot\underline{\beta}-\omega\underline{\beta}}_{\uppercase\expandafter{\romannumeral2}}-\Omega\{ \underbrace{3\etab\rho-3^*\underline{\eta}\sigma}_{\uppercase\expandafter{\romannumeral3}} -2\underline{\widehat{\chi}}^\sharp\cdot\beta\}
\end{cases}
\end{align*}
We apply Lemma \ref{divergencetheorem} for four cases:
$$(R,\underline{R})=\left(\frac{\alpha}{\sqrt{2}}, \beta\right), \left(-\betab,\frac{\alphab}{\sqrt{2}}\right), \left(\beta,(-\rho,\sigma)\right), \left((\rho,\sigma),-\betab\right)$$
where the corresponding $\mathcal{D}$ and ${}^*\mathcal{D}$ are
\begin{itemize}
\item $\mathcal{D}\beta=-\frac{1}{\sqrt{2}}\nablas\tensor\beta$, ${}^*\mathcal{D}\left(\frac{\alpha}{\sqrt{2}}\right)=\sqrt{2}\divs\left(\frac{\alpha}{\sqrt{2}}\right)$;
\item $\mathcal{D}\left(\frac{\alpha}{\sqrt{2}}\right)=-\sqrt{2}\divs\left(\frac{\alphab}{\sqrt{2}}\right)$, ${}^*\mathcal{D}(-\betab)=\frac{1}{\sqrt{2}}(-\betab)$;
\item $\mathcal{D}(-\rho,\sigma)=-\nablas\rho-{}^*\nablas\sigma$, ${}^*\mathcal{D}\beta=(-\divs\beta,-\curls\beta)$;
\item $\mathcal{D}(-\betab)=(\divs\betab,\curls\betab)$, ${}^*\mathcal{D}(\rho,\sigma)=-\nablas\rho+{}^*\nablas\sigma$.
\end{itemize}
What we need to do is to plug the right hand side of the above null Bianchi equations in the following two integrals:
\begin{align*}\int_{u_0}^{u_1}\int_0^{\delta}\|\Db R+\Omega\mathcal{D}R\|_{H^k(\ub',u')}\|R\|_{H^k(\ub',u')}\D\ub'\D u',\\
\int_{u_0}^{u_1}\int_0^{\delta}\|DR-\Omega{}^*\mathcal{D}\underline{R}\|_{H^k(\ub',u')}\|\underline{R}\|_{H^k(\ub',u')}\D\ub'\D u'.\end{align*}
Because the estimates \eqref{estimateomegab}, \eqref{estimateomega} for $\omegab$, $\omega$ have the same power of $\delta$ as $\widetilde{\tr\chib}$ and $\tr\chi$ respectively, the terms labeled by \uppercase\expandafter{\romannumeral1}, \uppercase\expandafter{\romannumeral2} and \uppercase\expandafter{\romannumeral3} on the right hand side of the above null Bianchi equations, can be estimated in the same way as what we do to \uppercase\expandafter{\romannumeral1}, \uppercase\expandafter{\romannumeral2} and \uppercase\expandafter{\romannumeral3} in \eqref{divergence1} respectively, i.e., \eqref{estimateI}, \eqref{estimateII} and \eqref{estimateIII}. So these terms can also be absorbed by the left hand side, and we only need to concern in the following the terms which are not labeled. We then have the following four estimates: For $(R,\underline{R})=\left(\frac{\alpha}{\sqrt{2}}, \beta\right)$, we have
\begin{align*}
\|\alpha\|^2_{L^2_{\ub}H^k(u)}+\|\beta\|^2_{L^2_uH^k(\ub)}\lesssim&\|\alpha\|^2_{L^2_{\ub}H^k(u_0)}+\|\beta\|^2_{L^2_uH^k(0)}\\
&+\int_{u_0}^{u_1}\int_0^{\delta}\|\chih\|_{H^k(\ub',u')}\|\rho,\sigma\|_{H^k(\ub',u')}\|\beta\|_{H^k(\ub',u')}\D\ub'\D u'\\
\lesssim&\delta^{-2}(\mathcal{R}^{(0)}_k)^2+\|\chih\|_{L^\infty_{\ub}L^\infty_uH^k}\|\rho,\sigma\|_{L^\infty_uL^2_{\ub}H^k}\|\beta\|_{L^\infty_uL^2_{\ub}H^k}\\
\lesssim&\delta^{-2}(\mathcal{R}^{(0)}_k)^2+(\mathcal{O}^{(0)}_k+\mathcal{R}_k[\alpha]+1)\mathcal{R}_k[\rho,\sigma]\mathcal{R}_k[\beta]\\
\lesssim&\delta^{-2}((\mathcal{R}^{(0)}_k)^2+1)
\end{align*}
if $\delta$ is sufficiently small depending on $\mathcal{O}^{(0)}_k$ and $\mathcal{R}_k$. This immediately implies that
\begin{align}\label{estimatealphabeta}
\mathcal{R}_k[\alpha]+\underline{\mathcal{R}}_k[\betab]\lesssim\mathcal{R}^{(0)}_k+1.
\end{align}

For $(R,\underline{R})=\left(-\betab,\frac{\alphab}{\sqrt{2}}\right)$, we have
\begin{align*}
\|\betab\|^2_{L^2_{\ub}H^k(u)}+\|\alphab\|^2_{L^2_uH^k(\ub)}\lesssim&\|\betab\|^2_{L^2_{\ub}H^k(u_0)}+\|\alphab\|^2_{L^2_uH^k(0)}\\
&+\int_{u_0}^{u_1}\int_0^{\delta}\|\chibh\|_{H^k(\ub',u')}\|\rho,\sigma\|_{H^k(\ub',u')}\|\alphab\|_{H^k(\ub',u')}\D\ub'\D u'\\
\lesssim&\delta(\mathcal{R}^{(0)}_k)^2+\|\chibh\|_{L^\infty_{\ub}L^\infty_uH^k}\|\rho,\sigma\|_{L^\infty_uL^2_{\ub}H^k}\cdot\delta^{\frac{1}{2}}\|\alphab\|_{L^\infty_{\ub}L^2_{u}H^k}\\
\lesssim&\delta(\mathcal{R}^{(0)}_k)^2+\delta^2(\mathcal{O}^{(0)}_k+\underline{\mathcal{R}}_k[\alphab]+1)\mathcal{R}_k[\rho,\sigma]\underline{\mathcal{R}}_k[\alphab]\\
\lesssim&\delta((\mathcal{R}^{(0)}_k)^2+1)
\end{align*}
if $\delta$ is sufficiently small depending on $\mathcal{O}^{(0)}_k$, $\mathcal{R}_k$ and $\underline{\mathcal{R}}_k$. This implies that
\begin{align}\label{estimatebetabalphab}
\mathcal{R}_k[\betab]+\underline{\mathcal{R}}_k[\alphab]\lesssim\mathcal{R}^{(0)}_k+1.
\end{align}

For the next two cases, we will encounter the so called \emph{borderline terms}, which are the boxed terms in the equations. For $(R,\underline{R})= \left(\beta,(-\rho,\sigma)\right)$, we have
\begin{align*}
&\|\beta\|^2_{L^2_{\ub}H^k(u)}+\|\rho,\sigma\|^2_{L^2_uH^k(\ub)}\\\lesssim&\|\beta\|^2_{L^2_{\ub}H^k(u_0)}+\|\rho,\sigma\|^2_{L^2_uH^k(0)}\\
&+\int_{u_0}^{u_1}\int_0^{\delta}\|\chih\|_{H^k(\ub',u')}\|\betab\|_{H^k(\ub',u')}\|\beta\|_{H^k(\ub',u')}\D\ub'\D u'\\
&+\int_{u_0}^{u_1}\int_0^{\delta}\|\chibh\|_{H^k(\ub',u')}\|\alpha\|_{H^k(\ub',u')}\|\rho,\sigma\|_{H^k(\ub',u')}\D\ub'\D u'\\
\lesssim&(\mathcal{R}^{(0)}_k)^2+\|\chih\|_{L^\infty_{\ub}L^\infty_uH^k}\|\betab\|_{L^\infty_{\ub}L^2_uH^k}\|\beta\|_{L^\infty_{u}L^2_{\ub}H^k}\\
&+\|\chibh\|_{L^\infty_{\ub}L^\infty_uH^k}\|\alpha\|_{L^\infty_{u}L^2_{\ub}H^k}\cdot\delta^{\frac{1}{2}}\|\rho,\sigma\|_{L^\infty_{\ub}L^2_{u}H^k}\\
\lesssim&(\mathcal{R}^{(0)}_k)^2+\delta^\frac{1}{2}(\mathcal{O}^{(0)}_k+\mathcal{R}_k[\alpha]+1)\underline{\mathcal{R}}_k[\betab]\mathcal{R}_k[\beta]+(\mathcal{O}^{(0)}_k+\underline{\mathcal{R}}_k[\alphab]+1)\mathcal{R}_k[\alpha]\underline{\mathcal{R}}_k[\rho,\sigma]\\
\lesssim&(\mathcal{R}^{(0)}_k)^2+(\mathcal{O}^{(0)}_k+\underline{\mathcal{R}}_k[\alphab]+1)\mathcal{R}_k[\alpha]\underline{\mathcal{R}}_k[\rho,\sigma]+1.
\end{align*}
if $\delta$ is sufficiently small depending on $\mathcal{O}^{(0)}_k$ and $\mathcal{R}_k$. The \emph{borderline terms} give rise to the cubic term $\underline{\mathcal{R}}_k[\alphab]\mathcal{R}_k[\alpha]\underline{\mathcal{R}}_k[\rho,\sigma]$ on the right hand side no matter how small we choose $\delta$.  Fortunately, we can plug in the estimates \eqref{estimatealphabeta} and \eqref{estimatebetabalphab} derived previously, to derive
\begin{align*}
\mathcal{R}_k[\beta]^2+\underline{\mathcal{R}}_k[\rho,\sigma]^2\lesssim(\mathcal{R}^{(0)}_k)^2+(\mathcal{O}^{(0)}_k+\mathcal{R}^{(0)}_k+1)^2\underline{\mathcal{R}}_k[\rho,\sigma]+1.
\end{align*}
This inequality is sublinear in $\underline{\mathcal{R}}_k[\rho,\sigma]$ and hence we have
\begin{align}\label{estimatebetarhosigma}
\mathcal{R}_k[\beta]+\underline{\mathcal{R}}_k[\rho,\sigma]\lesssim(\mathcal{O}^{(0)}_k)^2+(\mathcal{R}^{(0)}_k)^2+1.
\end{align}

For the final case $(R,\underline{R})= \left((\rho,\sigma),-\betab\right)$, we have
\begin{align*}
&\|\rho,\sigma\|^2_{L^2_{\ub}H^k(u)}+\|\betab\|^2_{L^2_uH^k(\ub)}\\\lesssim&\|\rho,\sigma\|^2_{L^2_{\ub}H^k(u_0)}+\|\betab\|^2_{L^2_uH^k(0)}\\
&+\int_{u_0}^{u_1}\int_0^{\delta}\|\chih\|_{H^k(\ub',u')}\|\alphab\|_{H^k(\ub',u')}\|\rho,\sigma\|_{H^k(\ub',u')}\D\ub'\D u'\\
&+\int_{u_0}^{u_1}\int_0^{\delta}\|\chibh\|_{H^k(\ub',u')}\|\beta\|_{H^k(\ub',u')}\|\betab\|_{H^k(\ub',u')}\D\ub'\D u'\\
\lesssim&(\mathcal{R}^{(0)}_k)^2+\|\chih\|_{L^\infty_{\ub}L^\infty_uH^k}\cdot\delta^{\frac{1}{2}}\|\alphab\|_{L^\infty_{\ub}L^2_{u}H^k}\cdot\delta^{\frac{1}{2}}\|\rho,\sigma\|_{L^\infty_{\ub}L^2_{u}H^k}\\
&+\|\chibh\|_{L^\infty_{\ub}L^\infty_uH^k}\|\beta\|_{L^\infty_{u}L^2_{\ub}H^k}\|\betab\|_{L^\infty_{\ub}L^2_{u}H^k}\\
\lesssim&\delta(\mathcal{R}^{(0)}_k)^2+\delta(\mathcal{O}^{(0)}_k+\mathcal{R}_k[\alpha]+1)\underline{\mathcal{R}}_k[\alphab]\underline{\mathcal{R}}_k[\rho,\sigma]+\delta^\frac{3}{2}(\mathcal{O}^{(0)}_k+\underline{\mathcal{R}}_k[\alphab]+1)\mathcal{R}_k[\beta]\underline{\mathcal{R}}_k[\betab]\\
\lesssim&\delta\left((\mathcal{O}^{(0)}_k)^2+(\mathcal{R}^{(0)}_k)^2+1\right)^2
\end{align*}
if $\delta$ is sufficiently small depending on $\mathcal{O}^{(0)}_k$ and $\mathcal{R}_k$ and we have used the estimates \eqref{estimatealphabeta}, \eqref{estimatebetabalphab} and \eqref{estimatebetarhosigma} derived in the above three cases. Then we have
\begin{align}\label{estimaterhosigmabetab}
\mathcal{R}_k[\rho,\sigma]+\underline{\mathcal{R}}_k[\betab]\lesssim(\mathcal{O}^{(0)}_k)^2+(\mathcal{R}^{(0)}_k)^2+1.
\end{align}

By the estimates \eqref{estimatealphabeta}, \eqref{estimatebetabalphab}, \eqref{estimatebetarhosigma}, \eqref{estimaterhosigmabetab}, we know that the the choice of $\delta$ in the course of the proof can be made to depend only on the initial quantities. We then complete the proof of Proposition \ref{curvature}.

\end{proof}

The proof of Theorem \ref{existence} is also completed by routine construction of solution, see for example Chapter 16, especially Chapter 16.3 of \cite{Chr08}.

\end{proof}

\subsection{The hidden smallness}
In the spacetime constructed above, $\alpha$, $\beta$ and $\chih$ are bounded by $\delta$ to some negative power, and $\rho$, $\sigma$ and $\tr\chi$ are bounded by $1$. In this subsection, we will prove that if these quantities are close to their values in a Schwarzschild spacetime, then they remain so in the spacetime. In a Schwarzschild spacetime, the only nonzero connection coefficients and curvature components are $\tr\chi$, $\tr\chib$, $\omega$, $\omegab$ and $\rho$. We use $\tr\chi_{m}$, $\tr\chib_{m}$, $\omega_{m}$, $\omegab_{m}$, $\rho_{m}$ to denote their values in the  Schwarzschild spacetime with mass $m$. Moreover,  $\omegab_{m}$ can be chosen to be zero by choosing $\Omega_{m}\equiv 1$ and $u$ can be chosen to be $-r$ on a single null cone. Consequently, on this null cone, $\tr\chib_{m}=-\frac{2}{|u|}$ and $\tr\chi_{m}=\frac{2}{|u|}-\frac{4m}{|u|^2}$. Moreover, $\rho_{m}=-\frac{2m}{r^3}$ in the Schwarzschild spacetime. We will prove

\begin{proposition}\label{prop:smallness}
In addition to the assumptions in Theorem \ref{existence}, suppose moreover that the following estimates hold for some $\ub\in [0,\delta]$:
\begin{align}\label{additioncondition}
\left\|\chih,\tr\chi-\left(\frac{2}{|u|}-\frac{4m}{|u|^2}\right),\omega-\omega_{m},\alpha,\rho+\frac{2m}{|u|^3}\right\|_{H^k(\ub,u_0)}\le\delta^{\frac{1}{2}}C
\end{align}
for some constant $C$ independent of $\delta$, some real number $m$ (allowing to be zero or even negative) and some sufficiently large integer $k$ (which may be different from that in Theorem \ref{existence}). Then the same estimates  (for a smaller $k$ and a larger $C$) together with the estimates
\begin{align*}
\left\|\beta,\sigma\right\|_{H^k(\ub,u)}\le\delta^{\frac{1}{2}}C
\end{align*}
hold on $\Cb_{\ub}$.
\end{proposition}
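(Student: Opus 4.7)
The strategy is to exploit the fact that Theorem \ref{existence} has already bounded the collection $\{\chibh, \eta, \etab, \omegab, \betab, \alphab, \log\Omega\}$ in $H^k$ by $O(\delta^{1/2})$: their defining norms in $\mathcal{O}_k$ and $\underline{\mathcal{R}}_k$ carry a factor $\delta^{-1/2}$, and these norms are bounded by constants depending only on the initial data. In particular, by Sobolev these quantities are $O(\delta^{1/2})$ in $L^\infty$ as well. What needs to be propagated from $S_{\ub,u_0}$ along the cone $\Cb_{\ub}$ is the smallness of the Schwarzschildean-renormalized collection
$$\mathcal{Q}=\{\chih,\ \check{\tr\chi},\ \check\omega,\ \alpha,\ \check\rho,\ \beta,\ \sigma\},\qquad \check\phi:=\phi-\phi_m.$$

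The first step is to derive $\Db$-equations for each $\phi\in\mathcal{Q}$ by subtracting the corresponding Schwarzschild identity (which holds exactly on the background Schwarzschild metric) from the null structure / Bianchi equations listed in the proof of Theorem \ref{existence}. In every case the resulting right-hand side decomposes as (i) a Schwarzschildean damping $\sim\tfrac{1}{|u|}\check\phi$, (ii) terms containing at least one factor from the $O(\delta^{1/2})$ collection above (hence automatically small), and (iii) terms quadratic in $\mathcal{Q}$ itself (absorbable by Gronwall / bootstrap for $\delta$ small). For the connection coefficients $\chih,\check{\tr\chi},\check\omega$ these are honest transport equations and the estimate \eqref{transportestimate2} applies directly.

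The heart of the matter is the curvature block $(\alpha,\beta,\check\rho,\sigma)$. For $\alpha$ I would use the Bianchi equation $\widehat{\Db}\alpha-\Omega\nablas\tensor\beta = \frac{1}{2}\Omega\tr\chib\,\alpha-2\omegab\alpha-\Omega\{-(4\eta+\zeta)\tensor\beta+3\chih\rho+3{}^{\ast}\chih\sigma\}$ as a transport equation in $u$, accepting a loss of one derivative on $\beta$. For $(\beta,\check\rho,\sigma)$ I would run a coupled $L^2$-energy estimate analogous to Lemma \ref{divergencetheorem}, but specialized to a single $\Cb_{\ub}$ rather than the full rectangle. The key observation is that the borderline cubic terms $(\chibh,\alpha)$ and $(\chih,\alphab)$ which dominated the analysis of Proposition \ref{curvature} are now of type (ii): each contains a factor ($\chibh$ or $\alphab$) that is already $O(\delta^{1/2})$, so the divergence identity closes without invoking the two-directional absorption trick.

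The proof then concludes via a bootstrap: assume $\|\phi\|_{H^k(\ub,u)}\le\Delta\delta^{1/2}$ for all $\phi\in\mathcal{Q}$ and $u\in[u_0,u_1]$; every source term is then at worst of size $\delta^{1/2}$, $\delta^{1/2}\cdot\Delta\delta^{1/2}$, or $(\Delta\delta^{1/2})^2$, and Gronwall in $u$ improves the bound so long as $\delta$ is small enough, with the final constant depending only on $C$ from \eqref{additioncondition} and on the constants provided by Theorem \ref{existence}. The main obstacle I anticipate is bookkeeping the Schwarzschildean damping terms, whose coefficients behave like $\pm 1/|u|$ and must be shown to cooperate with Gronwall on $[u_0,u_1]$; a secondary issue is the derivative loss in the $\alpha$-equation, which is the reason the conclusion is stated at a smaller $k$ than the hypothesis.
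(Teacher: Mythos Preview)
Your overall strategy matches the paper's exactly: propagate along $\Cb_{\ub}$, exploiting that $\chibh,\eta,\etab,\omegab,\betab,\alphab,\Omega-1$ are already $O(\delta^{1/2})$ from Theorem~\ref{existence}. Your treatment of $\chih,\check{\tr\chi},\check\omega,\check\rho$ via $\Db$-transport (subtracting the Schwarzschild identity), and of $\alpha$ via $\Db\alpha$ with a derivative loss on $\beta$, is precisely what the paper does.

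The divergence is in how you handle $\beta$ and $\sigma$. Your proposed ``coupled $L^2$-energy estimate analogous to Lemma~\ref{divergencetheorem}, specialized to a single $\Cb_{\ub}$'' does not work as stated: the identity in Lemma~\ref{divergencetheorem} needs both a $\Db R$ equation and a $D\underline R$ equation, integrated over the full $(\ub,u)$-rectangle, for the principal first-order terms to cancel against each other. On a fixed $\Cb_{\ub}$ only the $\Db$-direction is available; the $\Db\beta$ equation has principal part $\nablas\rho+{}^*\nablas\sigma$, while $\Db\rho,\Db\sigma$ have principal parts $\divs\betab,\curls\betab$. These do not pair with one another (note $\betab$, not $\beta$), so there is no on-cone energy that closes. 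You could salvage this by treating $\Db\rho,\Db\sigma$ as honest transport (since $\nablas\betab$ is already small) and then $\Db\beta$ as transport with the now-improved $\nablas\rho,\nablas\sigma$ on the right---this works but costs extra derivatives and is not the mechanism you described.

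The paper's route is cleaner and avoids this entirely: once $\chih$ and $\tr\chi-\tr\chi_m$ have been improved by transport, $\beta$ and $\sigma$ follow \emph{pointwise on each sphere} from the Codazzi and curl equations
\[
\beta=-\divs\chih+\tfrac12\nablas\tr\chi-\chih\cdot\zeta+\tfrac12\tr\chi\,\zeta,\qquad
\sigma=\curls\eta+\tfrac12\,\chih\wedge\chibh,
\]
with no propagation needed. This also resolves a point you did not address: the hypothesis~\eqref{additioncondition} does \emph{not} assume $\beta$ or $\sigma$ small at $S_{\ub,u_0}$, so any transport argument for them would first require an initial-sphere estimate---which is exactly the Codazzi/curl identity applied at $u_0$. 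The paper simply applies it on every sphere at once.
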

This theorem was already proved in \cite{L-Y}. For the sake of completeness, we still collect the proof here in a brief way.
\begin{proof}
The proof relies on integrating the null structure and Bianchi equations for $\Db\chih$, $\Db\tr\chi$, $\Db\omega$, $\Db\alpha$, $\Db\rho$, and the Gauss-Codazzi-Ricci equations. These equations can also be found in \cite{Chr08}. Consider the equation for $\Db\chih$,\footnote{ The notation $\Dbh\chih$ refers to the trace-free part of $\Db\chih$. For trace-free two-tensor $\theta$, $\Db\theta=\Dbh\theta+\frac{1}{2}\tr(\Db\theta)\gs=\Dbh\theta+(\Omega\chibh,\theta)\gs$. So one only needs to consider $\Dbh\theta$ instead of $\Db\theta$. We will also use $\Dh\theta$ to denote the trace-free part of $D\theta$ below.}
\begin{align*}
\widehat{\underline{D}}\widehat{{\chi}}=-\omegab\chih+\Omega(\nablas\widehat{\otimes}{\eta}
+{\eta}\widehat{\otimes}{\eta}+\frac{1}{2}\tr\underline{\chi} \widehat{{\chi}}-\frac{1}{2}\tr {\chi}\widehat{\underline{\chi}}).
\end{align*}
Since the estimates of Theorem \ref{existence} hold on $\Cb_{\ub}$, we will have, absorbing the first and the fourth terms by Gronwall's inequality,
\begin{equation}\label{chiimproved}
\begin{split}
\|\chih\|_{H^k(\ub,u)}\lesssim&\|\chih\|_{H^k(\ub,u_0)}+\|\eta\|_{L^\infty_u H^{k+1}(\ub)}\\&+\|\eta\|^2_{L^\infty_u H^{k}(\ub)}+\|\tr\chi\|_{L^\infty_u H^{k}(\ub)}\|\chibh\|_{L^\infty_u H^{k}(\ub)}\\\le&\delta^{\frac{1}{2}}C.
\end{split}
\end{equation}
The above estimate is true by choosing a smaller $k$ than that in Theorem \ref{existence}.

Consider the equation for $\Db\rho$, 
\begin{align*}
\underline{D}\rho+\frac{3}{2}\Omega \tr\underline{\chi} \rho+\Omega\{ \divs{\underline{\beta} } +(2\eta-\zeta,\underline{\beta})+\frac{1}{2}(\widehat{\chi},\underline{\alpha})     \}=0.
\end{align*}
The Schwarzschidean value of $\rho$ satisfies
\begin{align*}
\Db\rho_{m}+\frac{3}{2}\tr\chib_{m}\rho_{m}=0
\end{align*}
where $\rho_{m}=-\frac{2m}{|u|^3}$ and $\tr\chib_{m}=-\frac{2}{|u|}$. Then we write
\begin{align*}
&\Db(\rho-\rho_{m})+\frac{3}{2}\Omega\tr\chib(\rho-\rho_{m})\\&+\frac{3}{2}\rho_{m}(\Omega\tr\chib-\tr\chi_{m})+\Omega\{ \divs{\underline{\beta} } +(2\eta-\zeta,\underline{\beta})+\frac{1}{2}(\widehat{\chi},\underline{\alpha})     \}=0.
\end{align*}
From the estimates in Theorem \ref{existence},
\begin{align*}
\|\Omega\tr\chib-\tr\chi_{m}\|_{L^\infty_u H^k(\ub)}, \|\Omega\{\divs\betab+(2\eta-\zeta,\betab)\}\|_{L^2_u H^k(\ub)}\le \delta^{\frac{1}{2}}C
\end{align*}
and from the improved estimate \eqref{chiimproved} of $\chih$ on $\Cb_{\ub}$ above,
\begin{align*}
\|\chih\cdot\alphab\|_{L_u^2H^k(\ub,u)}\le\delta C,
\end{align*}
we will have
\begin{equation}\label{rhoimproved}\|\rho-\rho_{m}\|_{H^k(\ub,u)}\lesssim\|\rho-\rho_{m}\|_{H^k(\ub,u_0)}+\delta^{\frac{1}{2}}C'\le\delta^{\frac{1}{2}}C.\end{equation}

Consider the equation for $\Db\tr\chi$,
\begin{align*}
\underline{D}\tr{\chi}=-\omegab\tr\chi+\Omega(2\divs {\eta}+2|{\eta}|^2-(\chih,\chibh)-\frac{1}{2}\tr\chi\tr\chib+2\rho),\end{align*}
and the Schwarzschildean value of $\tr\chi$ satisfies
\begin{align*}\Db\tr\chi_{m}=-\frac{1}{2}\tr\chi_{m}\tr\chib_{m}+2\rho_{m}\end{align*}
where $\tr\chi_{m}=\frac{2}{|u|}-\frac{4m}{|u|^2}$. So we write
\begin{align*}
\Db(\tr\chi-\tr\chi_{m})
=&-\frac{1}{2}\Omega\tr\chib(\tr\chi-\tr\chi_{m})+2(\Omega \rho-\rho_{m})\\&-\omegab\tr\chi+2\Omega\divs\eta+2\Omega|\eta|^2+\frac{1}{2}\tr\chi_{m}(\Omega\tr\chib-\Omega\tr\chib_{m}).
\end{align*}
By integrating, the first term on the right can be absorbed, the second term can be estimated by \eqref{rhoimproved}, and the terms of the second line can be estimated using Theorem \ref{existence}. Consequently, we have
\begin{align}\label{trchiimproved}
\|\tr\chi-\tr\chi_{m}\|_{H^k(\ub,u)}\le \delta^{\frac{1}{2}}C.
\end{align}

Consider the equation for $\Db\omega$, 
\begin{align*}
&\underline{D}\omega=\Omega^2(2(\eta,\underline{\eta})-|\underline{\eta}|^2-\rho),
\end{align*}
which can be written as
\begin{align*}
\Db(\omega-\omega_{m})=\Omega^2(2(\eta,\underline{\eta})-|\underline{\eta}|^2)-(\Omega^2\rho-\rho_{m}).
\end{align*}
Use again the estimates in Theorem \ref{existence} and \eqref{rhoimproved},  we have
\begin{align*}
\|\omega-\omega_{m}\|_{H^k(\ub,u)}\le \delta^{\frac{1}{2}}C.
\end{align*}

Consider the null Bianchi equation for $\Db\alpha$,
\begin{align*}
\underline{\widehat{D}}\alpha = \frac{1}{2}\Omega \tr\underline{\chi}\alpha - 2\underline{\omega} \alpha +\Omega\nablas\tensor\beta +\Omega(4\eta+\zeta)\widehat{\otimes}\beta-3\Omega\widehat{\chi}\rho-3\Omega^\ast\widehat{\chi}\sigma.
\end{align*}
The first and second terms on the right hand side can be absorbed, the third and fourth terms can be estimated using Theorem \ref{existence}, the last two terms can be estimated using \eqref{chiimproved}. Finally we have
\begin{align*}
\|\alpha\|_{H^k(\ub,u)}\le\delta^{\frac{1}{2}}C.
\end{align*}

Finally, the estimates for $\beta$ and $\sigma$ can be obtained directly from the following Codazzi-Ricci equations
\begin{align*}
&\beta=-\divs \widehat{\chi}+\frac{1}{2}\ds \tr\chi-\widehat{\chi}\cdot\zeta+\frac{1}{2}\tr\chi\zeta,\\
&\sigma=\curls \eta+\frac{1}{2}\widehat{\chi}\wedge\widehat{\underline{\chi}}
\end{align*}
by using the estimates in Theorem \ref{existence} and \eqref{chiimproved}, \eqref{trchiimproved}.

\end{proof}

From Proposition \ref{prop:smallness} and Theorem \ref{existence}, we know that all connection coefficients and curvature components  are close to their values in the Schwarzschild spacetime with mass $m_0$ in $H^k(\ub,u)$ norm, if they satisfy \eqref{additioncondition} initially.

\subsection{Proofs of Theorems \ref{step1}, \ref{step3} and \ref{thm:closetoM}}

We are in the position to finish the proofs of Theorems \ref{step1}, \ref{step3} and \ref{thm:closetoM}.
\begin{proof}[Proof of Theorem \ref{step1}]
This was also essentially proved in \cite{L-Y}. Let us collect the proof briefly here. Following the construction of the initial data on $C_{u_0}$ in Section \ref{section:step1}, from Chapter 2 in \cite{Chr08}, the ansatz \eqref{shortpulse} implies that the data induced on the part $0\le\ub\le\delta$ of $C_{u_0}$ and $\Cb_0$, the Minkowskian null cone, satisfies the assumption of Theorem \ref{existence}\footnote{In fact, the data induced on the part $0\le\ub\le\delta$ of $C_{u_0}$ satisfies a more restrictive short pulse hierarchy, see the discussions in Section \ref{moreonregionIII}.} for a sufficiently large integer $k$. Moreover, from \cite{L-Y}, the condition \eqref{integrate=m0} together with the fact that $\chih=\alpha=0$ at $\ub=\delta$ and $\Omega\equiv1$ along $C_{u_0}$ imply that the data satisfies \eqref{additioncondition} for $\ub=\delta$ and $m=m_0$. Let us make the latter more clear. From the expression
\begin{align*}
\widehat{\gs}(\ub,\vartheta)_{AB}=\frac{|u_0|^2}{(1+\frac{1}{4}|\vartheta|^2)}\exp\left(\frac{\delta^{\frac{1}{2}}}{|u_0|}\psi_0(\frac{\ub}{\delta},\vartheta)\right),
\end{align*}
we know that
\begin{align*}
|\chih|^2=&\frac{1}{4}(\widehat{\gs}^{-1})^{AC}(\widehat{\gs}^{-1})^{BD}\frac{\partial \widehat{\gs}_{AB}}{\partial \ub}\frac{\partial \widehat{\gs}_{CD}}{\partial \ub}\\
=&\frac{1}{4}(I-O(\delta^{\frac{1}{2}}))^{AC}(I-O(\delta^{\frac{1}{2}}))^{BD}\frac{\delta^{-\frac{1}{2}}}{|u_0|}\left(\frac{\partial\psi_0}{\partial\ub}+O(\delta^{\frac{1}{2}})\right)_{AB}\frac{\delta^{-\frac{1}{2}}}{|u_0|}\left(\frac{\partial\psi_0}{\partial\ub}+O(\delta^{\frac{1}{2}})\right)_{CD}\\
=&\frac{1}{4}\frac{\delta^{-1}}{|u_0|^2}\left|\frac{\partial\psi_0}{\partial\ub}\right|^2+O(\delta^{-\frac{1}{2}})
\end{align*}
where $O(\delta^{c})$ is $\delta^{c}$ multiplying an expression in terms of $|u_0|$ and $\psi_0$ together with its derivatives. In particular, $\frac{\partial}{\partial\vartheta}O(\delta^{c})=O(\delta^c)$. Then
\begin{align*}
\int_0^\delta|\chih|^2\D\ub=\frac{1}{4|u_0|^2}\int_0^1\left|\frac{\partial\psi_0}{\partial s}\right|^2\D s+O(\delta^{\frac{1}{2}})
\end{align*}
and 
\begin{align*}
\tr\chi\big|_{S_{\delta,u_0}}-\frac{2}{|u_0|}=&-\frac{1}{2}\int_0^\delta(\tr\chi)^2\D\ub-\int_0^\delta|\chih|^2\D\ub\\
=&-\frac{1}{2}\int_0^\delta(\tr\chi)^2\D\ub-\frac{4m_0}{|u_0|^2}+O(\delta^{\frac{1}{2}}).
\end{align*}
This implies that
\begin{align*}
\|\tr\chi-\tr\chi_{m_0}\|_{H^k(\delta,u_0)}\le\delta^{\frac{1}{2}}C.
\end{align*}
It remains to consider $\rho$ to verify the condition \eqref{additioncondition}. It follows from writing the Gauss equation
\begin{align*}
-\rho=K+\frac{1}{4}\tr\chi\tr\chib-\frac{1}{2}(\chih,\chibh)
\end{align*}
as
\begin{align*}
-\rho-\frac{2m_0}{|u_0|^3}=K-\frac{1}{|u_0|^2}+\frac{1}{4}\tr\chi\tr\chib-\frac{1}{4}\cdot\left(\frac{2}{|u_0|}-\frac{4m_0}{|u_0|^2}\right)\cdot\left(-\frac{2}{|u_0|}\right)-\frac{1}{2}(\chih,\chibh).
\end{align*}
We should use the fact that
\begin{align*}
\left\|K-\frac{1}{|u_0|^2}\right\|_{H^k(\ub,u_0)}\le\delta^{\frac{1}{2}}C
\end{align*}
for all $0\le\ub\le\delta$ which follows easily from the equation
\begin{align*}
DK+\Omega\tr\chi K=\divs\divs(\Omega\chih)-\frac{1}{2}\Deltas(\Omega\tr\chi).
\end{align*}
We have then finished verifying condition \eqref{additioncondition}.

By applying Theorem \ref{existence}, if $\delta$ is sufficiently small, the smooth solution $g$ of the vacuum Einstein equations exists in the region $0\le\ub\le\delta$, $u_0\le u\le u_1$. By applying Proposition \ref{prop:smallness}, the data induced on $\Cb_{\delta}$ satisfies the property that all connection coefficients and curvature components are $\delta^{\frac{1}{2}}$-close to their values in the Schwarzschild spacetime $g_{m_0}$ (defined in \eqref{gm0}) with mass $m_0$ in $H^k$. Now let us fix an $\varepsilon_0>0$ and consider the part $\delta\le\ub\le\delta+\varepsilon_0$ of $C_{u_0}$. We claim that if $\delta$ is sufficiently small, the data induced on this part of $C_{u_0}$ have the same property: all connection coefficients and curvature components are $\delta^{\frac{1}{2}}$-close to their values in the Schwarzschild spacetime in the mass $m_0$ in $H^k(\delta,u)$. The proof of this claim can be done by noting that $\chih=\alpha\equiv0$, $\Omega\equiv1$ for $\delta\le\ub\le\delta+\varepsilon_0$, and integrating the equations for $D\tr\chi$, $D\eta$, $D\omegab$, $D\betab$, $D\alphab$ and Gauss-Codazzi-Ricci equations. The detail can also be found in \cite{L-Y} and we prefer to skip it here.

As a consequence, if $\delta$ is sufficiently small, the smooth solution $g$ of the vacuum Einstein equations exists in the region $\delta\le\ub\le\delta+\varepsilon_0$, $u_0\le u\le u_1$, and the solution $g$ is $\delta^{\frac{1}{2}}$-close to the Schwarzschild metric $g_{m_0}$ in the $C^k$ topology (for a smaller $k$). The proof of this statement can also be found in \cite{L-Y}. This is quite routine and we also prefer to skip it. This completes the proof of Theorem \ref{step1}.
\begin{remark}\label{Cauchystability}
This last statement is in fact a direct consequence of Cauchy stability, provided that not only the angular derivatives, but also the $D$, $\Db$ derivatives of the connection coefficients and curvature components are under control. From the null structure equations and null Bianchi equations, except $D\omega$, $\Db\omegab$, $D\alpha$ and $\Db\alphab$, all other $D$, $\Db$ derivatives of the connection coefficients and curvature components can be expressed in terms of the angular derivatives of some other components and and lower order terms. To control $D\omega$, $\Db\omegab$, $D\alpha$, $\Db\alphab$, we only need to commute $D$, $\Db$ with the null structure and Bianchi equations for $\Db\omega$, $D\omegab$, $\Db\alpha$ and $D\alphab$ and then integrate them. In summarize, the following estimates are true:
\begin{align*}
\|D^i\omega-(D^i\omega)_m, \Db^i\omegab-(\Db^i\omegab)_m, D^i\alpha, \Db^i\alphab\|_{H^j(\delta,u)}\le\delta^{\frac{1}{2}}C
\end{align*}
for $i+j=k$.
\end{remark}

\end{proof}

\begin{proof}[Proof of Theorem \ref{step3}]
By reversing time, Theorem \ref{existence} can also be applied for the case that the initial data are given on the future boundary of the spacetime region. Precisely, let us denote the norm defined on $C_{u_1}\cup\Cb_{\delta}$
$$\mathcal{R}_k^{(1)}=\sum_{R\in\{\alpha,\beta,\rho,\sigma,\betab\}}\mathcal{R}_k[R](u_1)+\sum_{\underline{R}\in\{\beta,\rho,\sigma,\betab,\alphab\}}\underline{\mathcal{R}}_k[\underline{R}](\delta).$$

\begin{align*}\mathcal{O}_k^{(1)}=\sup_{[\ub,u]\in \{\delta\}\times[u_0,u_1] \cup [0,\delta]\times\{u_1\}}\left(\sum_{\Gamma\in\{\chih, \tr\chi, \chibh, \widetilde{\tr\chib}, \eta, \etab, \omega, \omegab\}}\mathcal{O}_k[\Gamma](\ub,u)\right.\\
\left.+\delta^{-\frac{1}{2}}\|\log\Omega\|_{H^k(\ub,u)}+\delta^{-\frac{1}{2}}\left||u|^{-2}\gs_{AB}-\overset{\circ}{\gs}_{AB}\right|\right),\end{align*}
then Theorem \ref{existence} is still true if we replace the assumption $\mathcal{O}_k^{(0)},\mathcal{R}_k^{(0)}<\infty$ by
$$\mathcal{O}_k^{(1)},\mathcal{R}_k^{(1)}<\infty.$$
Theorem \ref{step3} follows by applying this reversed version of Theorem \ref{existence} for $u_0=u_0^*$, $u_1=u_1^*$.

\end{proof}

\begin{proof}[Proof of Theorem \ref{thm:closetoM}]
Note that the sphere $S_{0,u_1^*}$ is the boundary of the Minkowskian slice $\Sigma^+_{IV}$, then all connection coefficients and curvature components equal to their values in the Minkowski space. The conclusion of Proposition \ref{prop:smallness} still holds if $u_0$ is replaced by $u_1$ (in fact any $u\in[u_0,u_1]$) in \eqref{additioncondition}. By applying this reversed version of Proposition \ref{prop:smallness} for $m=0$, it follows that  if $\delta$ is sufficiently small, all connection coefficients and curvature components are $\delta^{\frac{1}{2}}$-close the their values in the Minkowski space in $H^k(0,u)$. Using the same method in Remark \ref{Cauchystability}, the $D$ and $\Db$ derivatives of all connection coefficients and curvature components are also close to their values in the Minkowski space. This would be sufficient for Cauchy stability and the proof of Theorem \ref{thm:closetoM} is completed.

\end{proof}

\section{Proof of Theorem \ref{thm:gluing}: A local deformation inside the black hole}\label{gluing}

Let us begin the proof by recalling the Kerr metric written in Boyer-Lindquist coordinates
\begin{align*}
g_{m,(0,0,a)}=&\left(-1+\frac{2mr}{\rho^2}\right)\D t^2-\frac{4mra\sin^2\theta}{\rho^2}\D t\D\varphi+\frac{\rho^2}{\Delta}\D r^2\\
&+\rho^2\D\theta^2+\sin^2\theta\left(r^2+a^2+\frac{2mra^2\sin^2\theta}{\rho^2}\right)\D\varphi^2,
\end{align*}
where $\rho^2=r^2+a^2\cos^2\theta$, $\Delta=r^2-2mr+a^2$. %We write $(m,(0,0,a))$ as a superscript in order the emphasize that $g^{m,(0,0,a)}$ is a covariant tensor. 
Let $r_+,r_-$ be the larger and smaller root of $\Delta=0$, then for any $r_0\in(r_-,r_+)$, the hypersurface $r=r_0$ is a spacelike hypersurface inside the black hole. When $a=0$, the Kerr metric reduces to Schwarzschild metric $g_{m}$. The Schwarzschild initial data $(\bar{g}_{m},\bar{k}_{m})$ induced on $r=r_0<2m$ writes
\begin{equation}\label{gmkm}
\begin{split}
\bar{g}_{m}&=\left(\frac{2m}{r_0}-1\right)\D t^2+r_0^2\D\theta^2+r_0^2\sin^2\theta\D\varphi^2,\\
\bar{k}_{m}&=mr_0^{-2}\left(\frac{2m}{r_0}-1\right)^{\frac{1}{2}}\D t^2-r_0\left(\frac{2m}{r_0}-1\right)^{\frac{1}{2}}\D \theta^2-r_0\sin^2\theta\left(\frac{2m}{r_0}-1\right)^{\frac{1}{2}}\D\varphi^2.
\end{split}
\end{equation}
We also need the expressions of the Kerr initial data  $(\bar{g}_{m,a},\bar{k}_{m,a})$. It is rather complicated, but fortunately, we do not need the exact expressions. What we need to know is the exact differences between Kerr and Schwarzschild up to $a$ to the first order. By direct computation,
\begin{equation}\label{gmakma}
\begin{split}
\bar{g}_{m,a}&=\bar{g}_{m}-4mar_0^{-1}\sin^2\theta\D t\D\varphi+O(a^2),\\
\bar{k}_{m,a}&=\bar{k}_{m}-2mar_0^{-2}\sin^2\theta\left(\frac{2m}{r_0}-1\right)^{\frac{1}{2}}\D t\D\varphi+O(a^2).
\end{split}
\end{equation}
For each initial data set $(\bar{g},\bar{k})$, we also use the momentum tensor
\begin{align*}
\bar{\pi}^{ij}=\bar{k}^{ij}-\tr_{\bar{g}}\bar{k}\bar{g}^{ij}
\end{align*}
instead of using $\bar{k}$ itself. {\bf $\bar{\pi}$ is understood as a two-contravariant tensor}.

\begin{comment}We also introduce the contravariant initial data $(\bar{g}_m,\bar{k}_m)$. We write $m$ as a subscript in order to emphasize that $\bar{g}_m,\bar{k}_m$ are contravariant tensors. 
Because $\bar{g}^m$ and $\bar{k}^m$ are diagonal, it is easy to see that 
\begin{align*}\bar{g}_m^{ij}&=(\bar{g}^m_{ij})^{-1},\\
k_m^{11}&=mr^{-2}\left(\frac{2m}{r}-1\right)^{-\frac{3}{2}},\\
k_m^{22}&=-r^{-3}\left(\frac{2m}{r}-1\right)^{\frac{1}{2}},\\
k_m^{33}&=-r^{-3}(\sin\theta)^{-2}\left(\frac{2m}{r}-1\right)^{\frac{1}{2}}.\end{align*}
We also need the trace of $k_m$:
$$\tr_{\bar{g}^{m}}k_{m}=r^{-2}\left(\frac{2m}{r}-1\right)^{-\frac{1}{2}}(2r-3m).$$
We also have the contravariant initial data: 
\begin{align*}
\bar{g}_{m,(0,0,a)}^{ii}&=\bar{g}_{m}^{ii}+O(a^2),\\
\bar{g}_{m,(0,0,a)}^{13}=\bar{g}_{m,(0,0,a)}^{31}&=2mar^{-3}\left(\frac{2m}{r}-1\right)^{-1}+O(a^2),\\
\bar{k}_{m,(0,0,a)}^{ii}&=\bar{k}_{m}^{ii}+O(a^2),\\
\bar{k}_{m,(0,0,a)}^{13}=\bar{k}_{m,(0,0,a)}^{31}&=mar^{-4}\left(\frac{2m}{r}-1\right)^{-\frac{3}{2}}\left(3-\frac{4m}{r}\right)+O(a^2).
\end{align*}
{\bf Here the indices  are raised using $\bar{g}_{m,(0,0,a)}$}. 
\end{comment}

We then introduce the 4-parameter Kerr initial data family. In the rectangular coordinate $x_1=r\cos\varphi\sin\theta$, $x_2=r\sin\varphi\sin\theta$, $x_3=r\cos\theta$, for an arbitrary vector $\mathbf{a}=(a_1,a_2,a_3)\in\mathbb{R}^3$ with $|\mathbf{a}|=a$  and an isometry $\Omega_{\mathbf{a}}\in SO(3)$ mapping $\mathbf{a}$ to $(0,0,a)$, define the Kerr metric $g_{m,\mathbf{a}}=(\mathrm{id}_{\mathbb{R}}\times\Omega_{\mathbf{a}})^* \,g_{m,a}$ where $\mathrm{id}_{\mathbb{R}}$ is the identity map of the $t$ axis.  Because of the axial symmetry of the Kerr metric,  this definition does not depend on the choice of $\Omega_{\mathbf{a}}$.  We have in particular $g_{m,(0,0,a)}=g_{m,a}$. We would like to prove the following proposition and Theorem \ref{thm:gluing} follows immediately.

\begin{proposition}\label{gluinginside}
Let $H\cong(t_1,t_2)\times \mathbb{S}^2$ and $r_0\in (0,2m_0)$ where $m_0>0$.  Let $(\bar{g}_{m_0},\bar{\pi}_{m_0})$ be the initial data defined on $H$ that is equal to the initial data induced on the slice $r=r_0, t\in(t_1,t_2)$ of the Schwarzschild metric. Suppose that an initial data set $(\bar{g},\bar{\pi})$ defined on $H$ satisfies
\begin{align*}\|(\bar{g},\bar{\pi})-(\bar{g}_{m_0},\bar{\pi}_{m_0})\|_{C^{k,\alpha}(\bar{g}_{m_0})}<\varepsilon
\end{align*}
 for some sufficiently large integer $k$ and $\alpha\in (0,1)$. If $\varepsilon$ is sufficiently small, then there exists another initial data set $(\tilde{g},\tilde{\pi})$ defined on $H$ such that near $t=t_1$,
  \begin{align*}(\tilde{g},\tilde{\pi})=(\bar{g},\bar{\pi})
  \end{align*}
  and near $t=t_2$,
  \begin{align*}(\tilde{g},\tilde{\pi})=(\bar{g}_{m,\bf{a}},\bar{\pi}_{m,\bf{a}})\end{align*} where $(\bar{g}_{m,\bf{a}},\bar{\pi}_{m,\bf{a}})$ equals to the Kerr initial data induced on the slice $r=r_0$ inside the black hole with parameters $(m,\bf{a})$. Moreover, 
  \begin{align*}\|(\tilde{g},\tilde{\pi})-(\bar{g},\bar{\pi})\|_{C^{k,\alpha}(\bar{g}_{m_0})}< C\varepsilon.\end{align*}
\end{proposition}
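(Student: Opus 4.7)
The plan is to adapt the Corvino-Schoen local deformation technique \cite{C00,C-S,C-D} exactly as in Li-Yu \cite{L-Y}; the only new feature is that the gluing region $H$ is a spacelike slice \emph{inside} the Schwarzschild black hole rather than in the asymptotically flat exterior, and this causes no essential change because the construction is a purely PDE procedure on the Riemannian manifold $(H, \bar g_{m_0})$. Fix intermediate times $t_1 < t_1' < t_2' < t_2$ and a smooth cutoff $\chi = \chi(t)$ with $\chi \equiv 1$ for $t \le t_1'$ and $\chi \equiv 0$ for $t \ge t_2'$. For each $(m, \mathbf{a})$ close to $(m_0, 0)$, form the interpolated data
\[
(\hat{g}_{m,\mathbf{a}},\, \hat{\pi}_{m,\mathbf{a}}) := \chi\,(\bar{g}, \bar{\pi}) + (1-\chi)(\bar{g}_{m,\mathbf{a}}, \bar{\pi}_{m,\mathbf{a}}).
\]
By hypothesis and \eqref{gmakma} this is $O(\varepsilon + |m - m_0| + |\mathbf{a}|)$-close to $(\bar g_{m_0}, \bar \pi_{m_0})$ in $C^{k,\alpha}$, agrees with $(\bar g, \bar\pi)$ near $t = t_1$ and with the Kerr data near $t = t_2$, but fails to solve the vacuum constraint map $\Phi(g, \pi) := (R(g) + \tfrac{1}{2}(\mathrm{tr}_g \pi)^2 - |\pi|_g^2,\ \nabla_j \pi^{ij})$ on the transition annulus $H' = (t_1', t_2') \times \mathbb{S}^2$, with constraint violation supported in $H'$ and of the same order.

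Next, apply the weighted $L L^*$-construction of Chrusciel-Delay to produce a compactly supported correction $(h, \omega)$ on $H'$ vanishing to all orders at $\partial H'$, with $\|(h, \omega)\|_{C^{k,\alpha}} \le C(\varepsilon + |m - m_0| + |\mathbf{a}|)$, such that the residual
\[
F(m, \mathbf{a}) := \Phi(\hat g_{m,\mathbf{a}} + h,\, \hat \pi_{m,\mathbf{a}} + \omega)
\]
lies in the finite-dimensional cokernel $\mathcal{K} := \ker(D\Phi^*|_{(\bar g_{m_0}, \bar \pi_{m_0})})$, which is the 4-dimensional space of Schwarzschild Killing initial data spanned by the KIDs associated to $\partial_t$ and to the three rotations of $\mathbb{S}^2$. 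The only analytic input beyond \cite{L-Y} is the injectivity of $D\Phi^*$ in the Chrusciel-Delay weighted space on $H'$: this holds because the Schwarzschild Killing fields are nowhere zero on $H'$ and therefore admit no compactly supported extensions, so the fixed-point iteration closes by smallness.

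Finally, solve $F(m, \mathbf{a}) = 0$ via the implicit function theorem. The Jacobian $D_{(m,\mathbf{a})} F|_{(m_0, 0)}$ is, up to $O(\varepsilon)$ corrections coming from $(h, \omega)$, the $L^2$-pairing on $H'$ of the four KIDs in $\mathcal{K}$ against the four first-order Kerr variations $\partial_m (\bar g_{m,\mathbf{a}}, \bar \pi_{m,\mathbf{a}})$ and $\partial_{a_i}(\bar g_{m,\mathbf{a}}, \bar \pi_{m,\mathbf{a}})$ at $(m_0, 0)$, read off explicitly from \eqref{gmakma}. An integration by parts reduces these pairings to boundary integrals over $\partial H'$ of ADM-energy/angular-momentum type, yielding a nondegenerate $4 \times 4$ matrix exactly as in \cite{C-S,L-Y}: the $m$-derivative pairs nondegenerately with the $\partial_t$-KID, while the $a_i$-derivatives, which contribute the off-diagonal cross terms in \eqref{gmakma} in appropriately rotated frames, pair nondegenerately with the three rotational KIDs. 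The implicit function theorem then yields a unique $(m, \mathbf{a})$ with $|m - m_0| + |\mathbf{a}| \le C\varepsilon$ solving $F = 0$, and $(\tilde g, \tilde \pi) := (\hat g_{m, \mathbf{a}} + h, \hat \pi_{m, \mathbf{a}} + \omega)$ is the desired glued initial data. The main obstacle is the nondegeneracy of this $4 \times 4$ Jacobian, which is the single place where the four-parameter Kerr family is essential; the computation itself is algebraically the same as in the asymptotically flat setting of \cite{C-S,L-Y} because the induced Riemannian initial data \eqref{gmkm}--\eqref{gmakma} on $r = r_0$ are insensitive to whether $r_0$ lies inside or outside the event horizon.
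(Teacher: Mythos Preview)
Your overall architecture matches the paper's: interpolate with a cutoff, apply the Corvino--Schoen/Chru\'sciel--Delay deformation to push the residual into the four-dimensional cokernel spanned by the Schwarzschild KIDs, and then use the four Kerr parameters $(m,\mathbf{a})$ to kill the projection. Two points deserve comment.

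First, the paper closes the last step with a Brouwer-degree argument rather than the implicit function theorem; this is a minor difference, though the degree argument has the advantage of not requiring you to verify that the correction $(h,\omega)$ depends $C^1$ on $(m,\mathbf{a})$.

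Second, and more substantively, your assertion that ``the computation itself is algebraically the same as in the asymptotically flat setting of \cite{C-S,L-Y}'' is incorrect and glosses over the genuine content of the proof. In the exterior gluing of \cite{L-Y,C-S}, the slice is $t=\mathrm{const}$ and $\partial_t$ is the \emph{normal} Killing field, so its KID has nonzero lapse and the $m$-variation is detected by the Hamiltonian constraint via an ADM-mass boundary integral. Here the slice is $r=r_0$ inside the horizon, $\partial_t$ is \emph{tangential}, and all four KIDs are of the form $(0,X)$; only the momentum constraint enters the projection, and the boundaries $\partial H$ are $t=\mathrm{const}$ spheres rather than $r=\mathrm{const}$ spheres. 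The paper therefore computes explicitly, from the formulas \eqref{gmkm}--\eqref{gmakma} and several integrations by parts relying on the special algebra of $(\bar g_{m_0},\bar\pi_{m_0})$ (in particular $(\bar\pi_{m_0}^{jk})_{;}{}^{i}=0$ and the antisymmetry of $\bar\pi_{m_0}^{jk}(\Omega_\alpha)^{i}_{\phantom{i};k}$), that
\[
\mathcal{I}_0(m,\mathbf{a})=8\pi(m-m_0)+\epsilon_0+O(\varepsilon^2),\qquad \mathcal{I}_i(m,\mathbf{a})=-8\pi m_0 a_i+\epsilon_i+O(\varepsilon^2),
\]
with the $\epsilon_\alpha$ fixed constants independent of $(m,\mathbf{a})$. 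This nondegeneracy is not inherited from \cite{L-Y}; it is the main computation here and must be carried out anew. Your sketch correctly anticipates its structure (diagonal pairing of $m$ with $\partial_t$ and $a_i$ with $\Omega_i$), but defers the actual verification to the wrong reference.
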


\begin{proof}
Let $\phi=\phi(t)$ be a smooth cut-off function on $H$ such that $\phi=1$ near $t=t_1$ and $\phi = 0$ near $t=t_2$. We define an approximation to the final initial data set as follows:
\begin{align*}
(\tilde{g},\tilde{\pi})= \big(\phi\bar{g}+(1-\phi)\bar{g}_{m,\mathbf{a}},\phi\bar{\pi} + (1-\phi)\bar{\pi}_{m,\mathbf{a}}\big).
\end{align*}
Here $(m,\bf{a})$ is chosen later such that $|m-m_0|+|{\bf{a}}|\le C_0\varepsilon$ for some $C_0$. It is direct to see $(\tilde{g},\tilde{k})$ equals to $(\bar{g},\bar{\pi})$ near $t=t_1$ and equals to $(\bar{g}_{m,\bf{a}},\bar{\pi}_{m,\bf{a}})$ near $t=t_2$. However, $(\tilde{g},\tilde{k})$ does not necessarily satisfy the constraint equations $\Phi(\tilde{g},\tilde{\pi})=0$, where $\Phi$ is the constraint map defined by
$$\Phi(\bar{g},\bar{\pi})=\left(R(\bar{g})+\frac{1}{2}\tr_{\bar{g}}\bar{\pi}-|\bar{\pi}|^2_{\bar{g}}, \mathrm{div}_{\bar{g}}\bar{\pi}\right),$$
where $R(\bar{g})$ is the scalar curvature of $\bar{g}$.
On the other hand, by the closeness of $(\bar{g},\bar{\pi})$ and $(\bar{g}_{m_0},\bar{\pi}_{m_0})$, $\Phi(\tilde{g},\tilde{\pi})$ is small in some weighted H\"older norm and moreover, $\Phi(\tilde{g},\tilde{\pi})$ is compactly supported in $H$.

By local deformation techniques by Corvino-Schoen \cite{C-S}, if $\varepsilon$ is sufficiently small, there exists an $(h,\omega)$ which is compactly supported in $H$ such that 
\begin{align*}\Phi(\tilde{g}+h,\tilde{\pi}+\omega)\in\zeta\cdot \mathrm{Ker}D\Phi_{(\bar{g}_{m_0},\bar{\pi}_{m_0})}^*\end{align*} where $\zeta=\zeta(t)$ is a given bump function compactly supported in $H$, $\mathrm{Ker}D\Phi_{(\bar{g}_{m_0},\bar{\pi}_{m_0})}^*$ is the kernel of the formal $L^2$-adjoint of $D\Phi_{(\bar{g}_{m_0},\bar{\pi}_{m_0})}$, the differential of the constraint map at the Schwarzschildean data set $(\bar{g}_{m_0},\bar{\pi}_{m_0})$. Moreover,
\begin{align*}
\|(h,\omega)\|_{C^{k,\alpha}(\bar{g}_{m_0})}< C\varepsilon.
\end{align*}
As shown in \cite{M}, for any given initial data set $(\bar{g},\bar{\pi})$ on a Cauchy hypersurface, a nontrivial element in $\mathrm{Ker}D\Phi_{(\bar{g},\bar{\pi})}^*$ corresponds to a Killing field of the corresponding vacuum solution, and conversely, a nontrivial element in $\mathrm{Ker}D\Phi_{(\bar{g},\bar{\pi})}^*$ can be obtained by projecting a Killing field of the vacuum solution to the normal and tangential directions of the Cauchy hypersurface. The Killing fields of the Schwarzschild metric $g_{m_0}$ are $\partial_t$ and three rotation vectorfields. Because they are tangent to the hypersurface $r=r_0<2m_0$, so they also span $\mathrm{Ker}D\Phi_{(\bar{g}_{m_0},\bar{\pi}_{m_0})}^*$. More precisely, the $4$-dimensional linear space $\mathrm{Ker}D\Phi_{(\bar{g}_{m_0},\bar{\pi}_{m_0})}^*$ is spanned by 
\begin{align*}(0,\partial_t), (0,\Omega_1),(0,\Omega_2),(0,\Omega_3)\end{align*} 
where $$\Omega_1=-\sin\varphi\partial_\theta-\cos\varphi\cot\theta\partial_\varphi, \Omega_2=\cos\varphi\partial_\theta-\sin\varphi\cot\theta\partial_\varphi, \Omega_3=\partial_\varphi$$
are the rotation Killing fields written in spherical coordinates.

The remaining thing is to show that we can suitably choose $(m,{\bf a})$ such that $\Phi(\tilde{g}+h,\tilde{\pi}+\omega)=0$. It is easy to verify that 
\begin{align*}L^2(H,\bar{g}_{m_0})=\mathrm{Ker}D\Phi_{(\bar{g}_{m_0},\bar{\pi}_{m_0})}\oplus(\zeta\cdot \mathrm{Ker}D\Phi_{(\bar{g}_{m_0},\bar{\pi}_{m_0})}^*)^\perp.\end{align*} So we need to verify that $\Phi(\tilde{g},\tilde{\pi})$ has no components in $\mathrm{Ker}D\Phi_{(\bar{g}_{m_0},\bar{\pi}_{m_0})}^*$. Denoting $\Omega_0=\partial_t$, $\Omega_\alpha, \alpha=0,1,2,3$, we want to choose $(m,{\bf{a}})$ that is a zero of the following maps
\begin{align}\label{Ialphama}
\mathcal{I}_\alpha(m,a_1,a_2,a_3)=(\Phi(\tilde{g}+h,\tilde{\pi}+\omega),(0,\Omega_\alpha))_{L^2(\bar{g}_{m_0})}=\int_{H}(\mathrm{div}_{\tilde{g}+h}(\tilde{\pi}+\omega))^i(\Omega_{\alpha})_i\D\mu_{\bar{g}_{m_0}}
\end{align}
for $\alpha=0,1,2,3$. We will find the exact expressions of $\mathcal{I}_\alpha$ up to $\varepsilon$ to the first order. To this end, we expand $\mathrm{div}_{\tilde{g}+h}{(\tilde{\pi}+\omega)}$ at $(\bar{g}_{m_0},\bar{\pi}_{m_0})$ as the following (see \cite{FM}):
\begin{equation}\label{taylordiv}
\begin{split}
(\mathrm{div}_{\tilde{g}+h}(\tilde{\pi}+\omega))^i=
&(D\mathrm{div}_{(\bar{g}_{m_0},\bar{\pi}_{m_0})}(\tilde{g}-\bar{g}_{m_0}+h,\tilde{\pi}-\bar{\pi}_{m_0}+\omega))^i+O(\varepsilon^2)\\
=&(\mathrm{div}_{\bar{g}_{m_0}}(\tilde{\pi}-\bar{\pi}_{m_0}+\omega))^i-\frac{1}{2}\bar{\pi}_{m_0}^{jk}(\tilde{g}-\bar{g}_{m_0}+h)_{jk;}^{\phantom{aaa}i}\\&+\bar{\pi}_{m_0}^{jk}(\tilde{g}-\bar{g}_{m_0}+h)^{i}_{\phantom{a}j;k}+\frac{1}{2}\bar{\pi}_{m_0}^{ij}(\tr_{\bar{g}_{m_0}}(\tilde{g}-\bar{g}_{m_0}+h))_{;j}+O(\varepsilon^2)
\end{split}
\end{equation}
%$$(D\mathrm{div}_{(\widetilde{g},\widetilde{\pi})}(h,\omega))^i=(\mathrm{div}_{\widetilde{g}}\omega)^i+\frac{1}{2}\widetilde{\pi}^{jk}h_{jk;}^{\phantom{aaa}i}-\widetilde{\pi}^{jk}h^{i}_{\phantom{a}j;k}-\frac{1}{2}\widetilde{\pi}^{ij}(\tr_{\widetilde{g}}h)_{;j}.$$
where the indices are raised and lowed by $\bar{g}_{m_0}$ and the semicolons means taking covariant derivative relative to $\bar{g}_{m_0}$. We are going to compute the integral \eqref{Ialphama} by evaluating the integral of the four terms in \eqref{taylordiv} over $H$. Because $\Omega_\alpha$ are Killing, i.e., $(\Omega_\alpha)_{i;j}$ is antisymmetric, by divergence theorem, for the first term, we have 
\begin{align*}
\int_H(\mathrm{div}_{\bar{g}_{m_0}}(\tilde{\pi}-\bar{\pi}_{m_0}+\omega))^i(\Omega_\alpha)_i\D\mu_{\bar{g}_{m_0}}=\int_{t=t_2}+\int_{t=t_1}(\tilde{\pi}-\bar{\pi}_{m_0}+\omega)^{ij}(\Omega_{\alpha})_{i}N_j
\end{align*}
where $N=\left(\frac{2m_0}{r_0}-1\right)^{-\frac{1}{2}}\partial_t$ is the unit normal of the boundary of $H$. $N$ is pointing outward at $t=t_2$ and inward at $t=t_1$. For the second term, we have
\begin{align*}
\int_H-\frac{1}{2}\bar{\pi}_{m_0}^{jk}(\tilde{g}-\bar{g}_{m_0}+h)_{jk;}^{\phantom{aaa}i}(\Omega_\alpha)_i\D\mu_{\bar{g}_{m_0}}=\int_{t=t_2}+\int_{t=t_1}-\frac{1}{2}\bar{\pi}_{m_0}^{jk}(\tilde{g}-\bar{g}_{m_0}+h)_{jk}(\Omega_\alpha)_iN^i
\end{align*}
where we have used in addition $(\bar{\pi}_{m_0}^{jk})_;^{\phantom{;}i}=0$ which can be checked by direct computation using \eqref{gmkm}. For the fourth term, in a similar way, we have
\begin{align*}
&\int_H\frac{1}{2}\bar{\pi}_{m_0}^{ij}(\tr_{\bar{g}_{m_0}}(\tilde{g}-\bar{g}_{m_0}+h))_{;j}(\Omega_\alpha)_i\D\mu_{\bar{g}_{m_0}}\\=&\int_{t=t_2}+\int_{t=t_1}\frac{1}{2}\bar{\pi}_{m_0}^{ij}(\tr_{\bar{g}_{m_0}}(\tilde{g}-\bar{g}_{m_0}+h))(\Omega_\alpha)_iN_j.
\end{align*}
For the third term, the situation is different. We have
\begin{align*}
\int_H\bar{\pi}_{m_0}^{jk}(\tilde{g}-\bar{g}_{m_0}+h)^i_{\phantom{a}j;k}(\Omega_\alpha)_i\D\mu_{\bar{g}_{m_0}}=&\int_{t=t_2}+\int_{t=t_1}\bar{\pi}_{m_0}^{jk}(\tilde{g}-\bar{g}_{m_0}+h)^i_{\phantom{a}j}(\Omega_\alpha)_iN_k\\
&+\int_H -\bar{\pi}_{m_0}^{jk}(\tilde{g}-\bar{g}_{m_0}+h)^i_{\phantom{a}j}(\Omega_\alpha)_{i;k}\D\mu_{\bar{g}_{m_0}}.
\end{align*}
We claim that the integrant in the second line vanishes. In fact, for $\alpha=0$, it is clear that $(\partial_t)_{i;k}=0$. For $\alpha=1,2,3$, it still holds that $\bar{\pi}_{m_0}^{jk}(\Omega_\alpha)^{i}_{\phantom{i};k}$ is antisymmetric relative to $i,j$ by direct computation using \eqref{gmkm}. Combining all the above formulas, we have
\begin{equation}\label{Ialphama1}
\begin{split}
&\mathcal{I}_\alpha(m,a_1,a_2,a_3)=\int_{t=t_2}(\tilde{\pi}-\bar{\pi}_{m_0}+\omega)^{ij}(\Omega_{\alpha})_{i}N_j-\int_{t=t_2}\frac{1}{2}\bar{\pi}_{m_0}^{jk}(\tilde{g}-\bar{g}_{m_0}+h)_{jk}(\Omega_\alpha)_iN^i\\
&+\int_{t=t_2}\bar{\pi}_{m_0}^{jk}(\tilde{g}-\bar{g}_{m_0}+h)^i_{\phantom{a}j}(\Omega_\alpha)_iN_k+\int_{t=t_2}\frac{1}{2}\bar{\pi}_{m_0}^{ij}(\tr_{\bar{g}_{m_0}}(\tilde{g}-\bar{g}_{m_0}+h))(\Omega_\alpha)_iN_j\\&+\epsilon_\alpha+O(\varepsilon^2)
\end{split}
\end{equation}
where $\epsilon_\alpha$ equals to the integral of the same integrants over $t=t_1$. We have $|\epsilon_\alpha|\le C\varepsilon$ and most importantly, because $(h,w)$ has compact support in $H$,  $\epsilon_\alpha$ does not depend on the choice of $(m,a_1,a_2,a_3)$.

We first compute $\mathcal{I}_\alpha(m,0,0,a)$. Noting that in this case \begin{align*}(\tilde{g},\tilde{\pi})=(\bar{g}_{m,a},\bar{\pi}_{m,a})\end{align*} at $t=t_2$ and $(h,\omega)$ has compact support, from \eqref{Ialphama1}, we have
\begin{equation}\label{Ialpha}
\begin{split}
&\mathcal{I}_\alpha(m,0,0,a)=\int_{t=t_2}(\bar{\pi}_{m,a}-\bar{\pi}_{m_0})^{ij}(\Omega_\alpha)_iN_j-\int_{t=t_2}\frac{1}{2}\bar{\pi}_{m_0}^{jk}(\bar{g}_{m,a}-\bar{g}_{m_0})_{jk}(\Omega_\alpha)_iN^i\\
&+\int_{t=t_2}\bar{\pi}_{m_0}^{jk}(\bar{g}_{m,a}-\bar{g}_{m_0})^{i}_{\phantom{a}j}(\Omega_\alpha)_iN_k+\int_{t=t_2}\frac{1}{2}\bar{\pi}_{m_0}^{ij}(\tr_{\bar{g}_{m_0}}(\bar{g}_{m,a}-\bar{g}_{m_0}))(\Omega_\alpha)_iN_j\\
&+\epsilon_\alpha+O(\varepsilon^2).
\end{split}
\end{equation}
To compute $\mathcal{I}_0(m,0,0,a)$, we compute four terms of \eqref{Ialpha} for $\alpha=0$ using \eqref{gmkm} and \eqref{gmakma}: The first term is
\begin{align*}
(\bar{\pi}_{m,a}-\bar{\pi}_{m_0})^{ij}(\Omega_0)_iN_j=&(\bar{\pi}_{m,a}^{ij}-\bar{\pi}_{m_0}^{ij})(\partial_t)_i\cdot\left(\frac{2m_0}{r_0}-1\right)^{-\frac{1}{2}}(\partial_t)_j\\=&(\bar{\pi}_{m}^{11}-\bar{\pi}_{m_0}^{11})\left(\frac{2m_0}{r_0}-1\right)^\frac{3}{2}+O(\varepsilon^2),
\end{align*}
where we have used
$$(\partial_t)_1=(\bar{g}_{m_0})_{11}(\partial_t)^1=\frac{2m_0}{r_0}-1, (\partial_t)_2=(\partial_t)_3=0.$$
the second term is
\begin{align*}
-\frac{1}{2}\bar{\pi}_{m_0}^{jk}(\bar{g}_{m,a}-\bar{g}_{m_0})_{jk}(\Omega_0)_iN^i
=&-\frac{1}{2}\bar{\pi}_{m_0}^{ij}(\bar{g}_{m,a}-\bar{g}_{m_0})_{ij}\cdot\left(\frac{2m_0}{r_0}-1\right)^{-\frac{1}{2}}|\partial_t|^2\\=&-\frac{1}{2}\bar{\pi}_{m_0}^{11}2(m-m_0)r_0^{-1}\left(\frac{2m_0}{r_0}-1\right)^{\frac{1}{2}}+O(\varepsilon^2),
\end{align*}
the third term is
\begin{align*}
\bar{\pi}_{m_0}^{jk}(\bar{g}_{m,a}-\bar{g}_{m_0})^{i}_{\phantom{a}j}(\Omega_0)_iN_k
=&\bar{\pi}_{m_0}^{jk}\bar{g}_{m_0}^{il}(\bar{g}_{m,a}-\bar{g}_{m_0})_{lj}(\partial_t)_k\cdot\left(\frac{2m_0}{r_0}-1\right)^{-\frac{1}{2}}(\partial_t)_i\\=&\bar{\pi}_{m_0}^{11}2(m-m_0)r_0^{-1}\left(\frac{2m_0}{r_0}-1\right)^{\frac{1}{2}}+O(\varepsilon^2),\\
\end{align*}
and the fourth term is
\begin{align*}
&\frac{1}{2}\bar{\pi}_{m_0}^{ij}(\tr_{\bar{g}_{m_0}}(\bar{g}_{m,a}-\bar{g}_{m_0}))(\Omega_0)_iN_j
\\=&\frac{1}{2}\bar{\pi}_{m_0}^{ij}\bar{g}_{m_0}^{kl}(\bar{g}_{m,a}-\bar{g}_{m_0})_{kl}(\partial_t)_i\cdot\left(\frac{2m_0}{r_0}-1\right)^{-\frac{1}{2}}(\partial_t)_j\\=&\frac{1}{2}\bar{\pi}_{m_0}^{11}2(m-m_0)r_0^{-1}\left(\frac{2m_0}{r_0}-1\right)^\frac{1}{2}+O(\varepsilon^2).
\end{align*}
Plugging in
$$\bar{\pi}_{m_0}^{11}=2r_0^{-1}\left(\frac{2m_0}{r_0}-1\right)^{-\frac{1}{2}}$$
and
$$\bar{\pi}_{m}^{11}-\bar{\pi}_{m_0}^{11}=-2r_0^{-2}\left(\frac{2m_0}{r_0}-1\right)^{-\frac{3}{2}}(m-m_0)+O(\varepsilon^2)$$
to the above formulas,  we then have
\begin{align*}
\mathcal{I}_0(m,0,0,a)=&2r_0^{-2}(m-m_0)\cdot 4\pi r_0^2+\epsilon_0+O(\varepsilon^2)\\
=&8\pi(m-m_0)+\epsilon_0+O(\varepsilon^2).
\end{align*}
The crucial fact is that the coefficient of $(m-m_0)$ is not zero.

We then compute \eqref{Ialpha} for $\alpha=1,2,3$. Recall again that the rotation vectorfields are $$\Omega_1=-\sin\varphi\partial_\theta-\cos\varphi\cot\theta\partial_\varphi, \Omega_2=\cos\varphi\partial_\theta-\sin\varphi\cot\theta\partial_\varphi, \Omega_3=\partial_\varphi.$$
We firstly compute $\mathcal{I}_3(m,0,0,a)$ and we will see the others can be obtained through symmetries. Similar to computing $\mathcal{I}_0$, we compute \eqref{Ialpha} for $\alpha=3$: The first term is
\begin{align*}
(\bar{\pi}_{m,a}-\bar{\pi}_{m_0})^{ij}(\Omega_3)_iN_j=&(\bar{\pi}_{m,a}-\bar{\pi}_{m_0})^{ij}(\partial_\varphi)_i\cdot\left(\frac{2m_0}{r_0}-1\right)^{-\frac{1}{2}}(\partial_t)_j\\=&\bar{\pi}_{m,a}^{13}\left(\frac{2m_0}{r_0}-1\right)^{\frac{1}{2}}r_0^2\sin^2\theta+O(\varepsilon^2),
\end{align*}
where we have used
$$(\partial_\varphi)_1=(\partial_\varphi)_2=0,(\partial_\varphi)_3=(\bar{g}_{m_0})_{33}(\partial_\varphi)^3=r^2_0\sin^2\theta.$$
the second term vanishes because $(\Omega_3)_iN^i=0$. The third term is
\begin{align*}
\bar{\pi}_{m_0}^{jk}(\bar{g}_{m,a}-\bar{g}_{m_0})^{i}_{\phantom{a}j}(\Omega_3)_iN_k
=&\bar{\pi}_{m_0}^{jk}\bar{g}_{m_0}^{il}(\bar{g}_{m,a}-\bar{g}_{m_0})_{lj}\cdot\left(\frac{2m_0}{r_0}-1\right)^{-\frac{1}{2}}(\partial_t)_k(\partial_\varphi)_i\\=&\bar{\pi}_{m_0}^{11}\bar{g}_{m_0}^{33}(\bar{g}_{m,a})_{31}\left(\frac{2m_0}{r_0}-1\right)^{\frac{1}{2}}r_0^2\sin^2\theta+O(\varepsilon^2),
\end{align*}
and the fourth term also vanishes because $\bar{\pi}_{m_0}^{ij}(\Omega_3)_iN_j=0$.
Using the formula%$\pi_{m_0}^{33}=\left(\frac{2m_0}{r}-1\right)^{-1/2}r^{-4}\sin^{-2}\theta(m_0-r)$, 
 $$\bar{\pi}_{m,a}^{13}=mar_0^{-4}\left(\frac{2m}{r_0}-1\right)^{-\frac{1}{2}}+O(\varepsilon^2),$$ 
we will have
\begin{align*}
\mathcal{I}_3(m,0,0,a)=\int_0^{2\pi}\D\varphi\int_0^{\pi}-3m_0a\sin^3\theta\D\theta+\epsilon_3+O(\varepsilon^2)=-8\pi m_0a+\epsilon_3+O(\varepsilon^2).
%\frac{8}{3}\pi\cdot m_0ar^{-2}(2m_0r^{-1}-1)^{-1/2}(4m_0r^{-1}-3)+O(\varepsilon^2).
\end{align*}
The crucial fact is again that the coefficient of $a$ is not zero. For $\alpha=1,2$, note that in the expression of $\Omega_1,\Omega_2$, the coefficients of both $\partial_\theta$ and $\partial_\varphi$ have a factor $\sin\varphi$ or $\cos\varphi$, whose integrals over $0$ to $2\pi$ is zero. Therefore, it is direct to see that 
$$\mathcal{I}_1(m,0,0,a)=\epsilon_1+O(\varepsilon^2),\ \mathcal{I}_2(m,0,0,a)=\epsilon_2+O(\varepsilon^2).$$

We then compute $\mathcal{I}_\alpha(m,a,0,0)$. Similar to \eqref{Ialpha}, we have
\begin{equation*}
\begin{split}
&\mathcal{I}_\alpha(m,a,0,0)=\int_{t=t_2}(\bar{\pi}_{m,(a,0,0)}-\bar{\pi}_{m_0})^{ij}(\Omega_\alpha)_iN_j-\int_{t=t_2}\frac{1}{2}\bar{\pi}_{m_0}^{jk}({\bar{g}}_{m,(a,0,0)}-\bar{g}_{m_0})_{jk}(\Omega_\alpha)_iN^i\\
&+\int_{t=t_2}\bar{\pi}_{m_0}^{jk}(\bar{g}_{m,(a,0,0)}-\bar{g}_{m_0})^{i}_{\phantom{a}j}(\Omega_\alpha)_iN_k+\int_{t=t_2}\frac{1}{2}\bar{\pi}_{m_0}^{ij}(\tr_{\bar{g}_{m_0}}(\bar{g}_{m,(a,0,0)}-\bar{g}_{m_0}))(\Omega_\alpha)_iN_j\\
&+\epsilon_\alpha+O(\varepsilon^2).
\end{split}
\end{equation*}
Let us choose $R\in SO(3)$:
$$R=\begin{pmatrix}0 & 0&-1\\0&1&0\\1&0&0\end{pmatrix}.$$
Then $R$ maps $(1,0,0)$ to $(0,0,1)$. Then $R^*(\bar{g}_{m,(0,0,a)},\bar{\pi}_{m(0,0,a)})=(\bar{g}_{m,(a,0,0)},\bar{\pi}_{m(a,0,0)})$. For $\alpha=0$, recall that $\Omega_0=\partial_t$ and note that $\D R(\partial_t)=\partial_t$, we find that in the four terms above, both $\bar{g}_{m,(a,0,0)}$ and $\bar{\pi}_{m(a,0,0)}$ contract with factors that are invariant under the pullback or pushforward induced by  $R$. Therefore, we can conclude that
$$\mathcal{I}_0(m,a,0,0)=\mathcal{I}_0(m,0,0,a)+O(\varepsilon^2)=8\pi(m-m_0)+\epsilon_0+O(\varepsilon^2).$$
Similarly, we also have
$$\mathcal{I}_0(m,0,a,0)=\mathcal{I}_0(m,0,0,a)+O(\varepsilon^2)=8\pi(m-m_0)+\epsilon_0+O(\varepsilon^2).$$
For $\alpha=1,2,3$, note that $\D R$ maps $(\Omega_1,\Omega_2,\Omega_3)$ to $(\Omega_3,\Omega_2,-\Omega_1)$. We have
\begin{align*}(\bar{\pi}_{m,(a,0,0)}-\bar{\pi}_{m_0})^{ij}(\Omega_\alpha)_iN_j=&(R^*(\bar{\pi}_{m,a}-\bar{\pi}_{m_0}))^{ij}(\Omega_\alpha)_iN_j\\=&(\bar{\pi}_{m,a}-\bar{\pi}_{m_0})^{ij}(\D R(\Omega_\alpha))_iN_j
\end{align*}
and similarly,
$$\bar{\pi}_{m_0}^{jk}(\bar{g}_{m,(a,0,0)}-\bar{g}_{m_0})^{i}_{\phantom{a}j}(\Omega_\alpha)_iN_k=\bar{\pi}_{m_0}^{jk}(\bar{g}_{m,a}-\bar{g}_{m_0})^{i}_{\phantom{a}j}(\D R(\Omega_\alpha))_iN_k.$$
Finally, the second and the fourth terms vanish. Therefore, we have
$$\mathcal{I}_1(m,a,0,0)=\mathcal{I}_3(m,0,0,a)-\epsilon_3+\epsilon_1+O(\varepsilon^2)=-8\pi m_0a+\epsilon_1+O(\varepsilon^2)$$
and
$$\mathcal{I}_2(m,a,0,0)=\epsilon_2+O(\varepsilon^2),\ \mathcal{I}_3(m,a,0,0)=\epsilon_3+O(\varepsilon^2).$$

Finally, taking
$$R=\begin{pmatrix}1&0&0\\0 & 0&-1\\0&1&0\end{pmatrix}$$
which maps $(0,1,0)$ to $(0,0,1)$, we can compute similarly
$$\mathcal{I}_2(m,0,a,0)=\mathcal{I}_3(m,0,0,a)-\epsilon_3+\epsilon_2+O(\varepsilon^2)=-8\pi m_0a+\epsilon_2+O(\varepsilon^2)$$
and
$$\mathcal{I}_1(m,0,a,0)=\epsilon_1+O(\varepsilon^2),\ \mathcal{I}_3(m,0,a,0)=\epsilon_3+O(\varepsilon^2).$$

Using the above results and by Taylor expansion, we have
\begin{align*}\mathcal{I}(m,\mathbf{a})=&(\mathcal{I}_\alpha(m,a_1,a_2,a_3))_{\alpha=0,1,2,3}\\
=&(8\pi(m-m_0),-8\pi m_0\mathbf{a})+(\epsilon_0,\epsilon_1,\epsilon_2,\epsilon_3)+O(\varepsilon^2).
\end{align*}
We then use a degree argument similar to \cite{L-Y}. More precisely, choose $C_0$ such that 
\begin{align*}
\left(-\frac{\epsilon_0}{8\pi}+m_0, \frac{1}{8\pi m_0}(\epsilon_1,\epsilon_2,\epsilon_3)\right)\in B_{C_0\varepsilon}=\{(m,\mathbf{a}):|m-m_0|+|\mathbf{a}|\le C_0\varepsilon\}.
\end{align*}
Let us introduce
\begin{align*}\mathcal{I}(m,\mathbf{a},t)=(8\pi(m-m_0),-8\pi m_0\mathbf{a})+(\epsilon_0,\epsilon_1,\epsilon_2,\epsilon_3)+tO(\varepsilon^2).
\end{align*}
which satisfies $\mathcal{I}(m,\mathbf{a},1)=\mathcal{I}(m,\mathbf{a})$ and 
$$\mathcal{I}\left(-\frac{\epsilon_0}{8\pi}+m_0, \frac{1}{8\pi m_0}(\epsilon_1,\epsilon_2,\epsilon_3),0\right)=0.$$
So $\mathcal{I}(m,\mathbf{a},0)$ is a homeomorphism from $B_{C_0\varepsilon}$ to another box in $\mathbb{R}^4$ containing $0\in\mathbb{R}^4$. If $\varepsilon$ is sufficiently small, $0\notin \mathcal{I}(\partial B_{C_0\varepsilon}\times[0,1])$, and hence the degree of $\mathcal{I}$ at $0\in\mathbb{R}^4$ does not change for $t\in[0,1]$. This in particular implies that $\mathcal{I}(m,\mathbf{a})$ has a zero in $B_{C_0\varepsilon}$. This completes the proof of Proposition \ref{gluinginside}.

\end{proof}

Theorem \ref{thm:gluing} then follows by applying Proposition \ref{gluinginside} for $\varepsilon=C\delta^{\frac{1}{2}}$ (with a different $k$).

\end{document}